\newtheorem{thm}{Theorem}[section]
\newtheorem{lem}[thm]{Lemma}
\newtheorem{prop}[thm]{Proposition}
\theoremstyle{definition}
\newtheorem{defn}[thm]{Definition}
\newtheorem{ass}[thm]{Assumption}
\theoremstyle{remark}
\newtheorem{rem}[thm]{Remark}
\newtheorem{exa}[thm]{Example}
\numberwithin{equation}{section}
\newcommand{\abs}[1]{\left\vert#1\right\vert}
\newcommand{\set}[1]{\left\{#1\right\}}
\newcommand{\Real}{\mathbb R}
\newcommand{\Natural}{\mathbb N}
\newcommand{\nin}{n \in \Natural}
\newcommand{\kin}{k \in \Natural}
\newcommand{\such}{{\ | \ }}
\newcommand{\limn}{\lim_{n \to \infty}}
\newcommand{\limsupn}{\limsup_{n \to \infty}}
\newcommand{\liminfn}{\liminf_{n \to \infty}}
\newcommand{\dfn}{\, := \,}
\newcommand{\Pre}{\mathcal{P}}
\newcommand{\prob}{\mathbb{P}}
\newcommand{\plim}{\prob \text{-} \lim}
\newcommand{\plimn}{\plim_{n \to \infty}}
\newcommand{\pliminf}{\prob \text{-} \liminf}
\newcommand{\pliminfn}{\pliminf_{n \to \infty}}
\newcommand{\expec}{\mathbb{E}}
\newcommand{\basis}{(\Omega, \mathcal{F}, \mathbf{F}, \prob)}
\newcommand{\F}{\mathcal{F}}
\newcommand{\cadlag}{c\`adl\`ag}
\newcommand{\ud}{\mathrm d}
\newcommand{\zo}{[0, 1)}
\newcommand{\num}{num\'eraire}
\newcommand{\X}{\mathcal{X}}
\newcommand{\Xo}{\X^{\circ}}
\newcommand{\rr}{\mathsf{rr}}
\newcommand{\rrt}{\rr_\tau}
\newcommand{\rri}{\rr_\infty}
\newcommand{\rrh}{\rr_\hor}
\newcommand{\err}{\expec \rr}
\newcommand{\errt}{\err_\tau}
\newcommand{\errh}{\err_\hor}
\newcommand{\erri}{\err_\infty}
\newcommand{\Shn}{\mathcal{S}_{\hor} \text{-} \limn}
\newcommand{\Stkn}{\mathcal{S}_{\tau_k} \text{-} \limn}
\newcommand{\Sln}{\mathcal{S}_{\mathsf{loc}} \text{-} \limn}
\newcommand{\Stln}{\mathcal{S}_{\tau_\ell} \text{-} \limn}
\newcommand{\Xhat}{\widehat{X}}
\newcommand{\pare}[1]{\left(#1\right)}
\newcommand{\bra}[1]{\left[#1\right]}
\newcommand{\dbra}[1]{[#1]}
\newcommand{\dbraco}[1]{[\kern-0.15em[ #1 [\kern-0.15em[}
\newcommand{\dbraoc}[1]{]\kern-0.15em] #1 ]\kern-0.15em]}
\newcommand{\bF}{\mathbf{F}}
\newcommand{\indic}{\mathbb{I}}
\newcommand{\absco}{{<\kern-0.53em<}}
\newcommand{\xhat}{{\widehat{X}}}
\newcommand{\xt}{{\widetilde{X}}}
\newcommand{\zt}{{\widetilde{Z}}}
\newcommand{\hha}{{}^{\alpha} \kern-0.23em {\widehat{H}}}
\newcommand{\hta}{{}^{\alpha} \kern-0.23em {\widetilde{H}}}
\newcommand{\tX}{{}^{\tau} \kern-0.23em \X}
\newcommand{\tx}{{}^{\tau} \kern-0.23em X}
\newcommand{\Xa}{{}^{\alpha} \kern-0.23em \X}
\newcommand{\Xb}{{}^{\beta} \kern-0.23em \X}
\newcommand{\Xz}{{}^{0} \kern-0.23em \X}
\newcommand{\xa}{{}^{\alpha} \kern-0.23em X}
\newcommand{\xha}{{}^{\alpha} \kern-0.23em \xhat}
\newcommand{\xta}{{}^{\alpha} \kern-0.23em \xt}
\newcommand{\xb}{{}^{\beta} \kern-0.23em X}
\newcommand{\xhb}{{}^{\beta} \kern-0.23em \xhat}
\newcommand{\rhoh}{\widehat{\rho}}
\newcommand{\Dh}{\widehat{D}}
\newcommand{\el}{\exp(\ell)}
\newcommand{\ela}{\exp((1 - \alpha) \ell)}
\newcommand{\apix}{{}^{\alpha} \kern-0.17em \pi^X}
\newcommand{\hor}{T}
\newcommand{\Xoo}{\X^{\circ \circ}}
\newcommand{\chic}{\check{\chi}}
\newcommand{\Time}{\mathbb{T}}
\newcommand{\Yhat}{\widehat{Y}}
\begin{document}

\title[Num\'eraire Property and Long-Term Growth with Drawdown Constraints]{The Num\'eraire Property and Long-Term Growth Optimality for Drawdown-Constrained Investments}%

\author{Constantinos Kardaras}%
\address{Constantinos Kardaras, Statistics Department, London School of Economics and Political Science, 10 Houghton Street, London, WC2A 2AE, UK.}%
\email{kardaras@bu.edu}%

\author{Jan Ob{\l}\'oj}%
\address{Jan Ob{\l}\'oj. Mathematical Institute and
Oxford-Man Institute of Quantitative Finance, University of Oxford, Oxford OX1 3LB, UK.}%
\email{obloj@maths.ox.ac.uk}%

\author{Eckhard Platen}%
\address{Eckhard Platen. Finance Discipline Group \& School of Mathematical Sciences,
University of Technology, Sydney, P.O. Box 123, Broadway, NSW 2007, Australia.}%
\email{eckhard.platen@uts.edu.au}%

\thanks{The first author acknowledges partial support by the National Science Foundation, grant number DMS-0908461. The second author gratefully acknowledges the support of the 2011 Bruti-Liberati Fellowship at University of Technology, Sydney. The authors are grateful for helpful comments to the participants of the SIAM Annual Meeting in Minneapolis, workshop on Mathematical Finance in Kyoto, Optimal Stopping Workshop in Warwick as well as seminar participants at the Oxford-Man Institute for Quantitative Finance.
}
\keywords{Drawdown constraints; \num \ property; asymptotic growth; portfolio risk management.}%

\date{\today}%
\begin{abstract}
We consider the portfolio choice problem for a long-run investor in a general continuous semimartingale model. We suggest to use path-wise growth optimality as the decision criterion and encode preferences through restrictions on the class of admissible wealth processes. Specifically, the investor is only interested in strategies which satisfy a given linear drawdown constraint. 
 The paper introduces the {\num} property through the notion of expected relative return and shows that drawdown-constrained strategies with the \num \ property exist and are unique, but may depend on the financial planning horizon. However, when sampled at the times of its maximum and asymptotically as the time-horizon becomes distant, the drawdown-constrained \num\ portfolio is given explicitly through a model-independent transformation of the unconstrained \num\ portfolio. Further, it is established that the asymptotically growth-optimal strategy is obtained as limit of {\num} strategies on finite horizons.
\end{abstract}

\maketitle


\section*{Introduction}

\subsection*{Discussion}

The debate whether an investor with a long financial planning horizon should use the growth-optimal strategy, as postulated by \cite{KELLY}, is among the oldest in the portfolio selection literature, see \cite{MacLean2010}. In particular, opposite sides were assumed by two among the most prominent scholars in the field: while Paul Samuelson fiercely criticised the use of Kelly's strategy, including the famous refute \cite{Samuelson:79} in words of one-syllable, Harry Markowitz argued for it already in his 1959 book, see also \cite{Markowitz:06}. The arguments in favour of Kelly's investment strategy rely on the fact that asymptotic growth should be of prime interest for long-run investment. More recently, this line of argument has seen a revived interest in particular through the so-called benchmark approach, see \cite{MR2267213}. The arguments against it point to the fact that the growth-rate maximisation does not take into account investor's risk appetite and is too simplistic. Samuelson, as well as many others including seminal works of \cite{Merton:71}, looked instead at maximising expected utility.
While Kelly's strategy itself falls into this category, with the utility function being the logarithmic one, choices of other utility functions result in criteria that can accommodate different risk preference profiles. 

Expected utility maximisation (EUM), with its axiomatic foundation going back to \cite{MR0021298} and \cite{Savage:1954wo}, is probably the most successful and widely studied framework for normative decisions under uncertainty. 
However, although quite flexible, this line of reasoning is vulnerable to critique as it involves several often arbitrary choices. For example, it is hard to justify why the investors should think in terms of, and be able to specify, their utility functions. Utility elucidation methods systematically yield different results, see e.g.\ \cite{Hershey:1985bn}, and answers incompatible with the EUM paradigm, see e.g.\ \cite{Kahneman:1979wl}. This led to many ramifications, including the behavioural portfolio selection of \cite{Kahneman:1979wl}, see also \cite{Jin:2008ek}.
Further, optimal investment decisions of investors maximising their expected utility typically depend on the choice of time-horizon, which is a rather arbitrary input, particularly so for a long-run investor.\footnote{For an attempt to circumvent the horizon-dependence, see recent developments in \cite{MusielaZariphopoulou:09}.} Finally, the resulting optimal investment strategies entangle in a rather complex way the choice of the model for the dynamics of the risky assets and the choice of the investors' preferences (utility function). Attempts to account for \emph{Knightian uncertainty} resulted in a ``robust" version of the EUM, see e.g.\ \cite{Gilboa:1989hm}, \cite{MaccheroniMarinacciRustichini:06}, \cite{Ju:2012kf}.

In this work, we explore a potential way to bridge the two opposing sides and obtain a decision mechanism which is based on the appealingly simple Kelly principle of dominating any other investment in the long run while at the same time incorporating a flexible specification of risk preferences. Furthermore, the resulting optimal strategies effectively disentangle the effects of model and preferences  choices. Instead of specifying the risk attitudes through a utility function, we propose to encode them as a restriction on the universe of acceptable investment strategies (or, equivalently, acceptable wealth evolutions). More specifically, we impose a drawdown constraint and only consider portfolios which never fall below a given fraction $\alpha$ of their past maximum.\footnote{If $(X_t)$ is the investor's wealth process, expressed in units of some baseline asset, then we require $X_t\geq \alpha \sup_{u \in [0, t]} X_u$ for all $t$ up to the considered, possibly random or infinite, time horizon $T$.} Such interpretation of ``attitude towards risk'' is in fact commonly utilised in practice\footnote{Performance measures involving drawdowns include e.g.\ Calmar ratio, Sterling ratio and Burke ratio, see \cite{Eling:2007ut} and Chapter~4 of \cite{Bacon:2008um}. For a further discussion on their practical use see \cite{Lhabitant:2004wa}. Drawdowns are often reported, e.g.\ the Commodity Futures Trading Commission's mandatory disclosure regime stipulates that managed futures advisers report their ``worst peak-to-valley drawdown."}: we effectively equip the investor with a stop-loss safety trigger to avoid large drawdowns. The constraint implies that even the most adverse market crash may not reduce wealth by more than $100(1-\alpha)\%$.

Using drawdown constraints to encode risk attitudes is not only practically motivated but has also sound theoretical basis. The parameter $\alpha\in [0,1)$ allows for flexible specification of risk preferences and encapsulates the risk aversion.
Drawdown constraints were first considered in a continuous-time framework by \cite{GZ}, then by \cite{CvitanicKaratzas:94} and more recently by \cite{ChernyObloj:11}. These contributions focused on maximising the growth rate of expected utility and show that imposing drawdown constraints is essentially equivalent to changing investor's risk aversion. More precisely, \cite{ChernyObloj:11} consider two investors in a general semimartingale model: one endowed with a power (HARA) utility with risk aversion $\gamma$ and facing an $\alpha$-drawdown constraint and another with risk aversion $(\gamma + \alpha(1-\gamma))$ and no constraints. They prove that the two are equivalent in the sense that they both achieve the same asymptotic growth rate of expected utility, and that their optimal portfolios are related through an explicit model-independent transformation. 

Having encoded preferences through a drawdown constraint, we consider decision making based on optimality expressed through the \emph{\num \ property} in the spirit of \cite{Long:1990hm}. We require that expected relative returns of any other non-negative investment with respect to the optimal portfolio over the same time-period are non-positive. In fact, this choice of optimality arises in an axiomatic way from \num\footnote{Here in the sense of \emph{a} \num, e.g.\ currency.}-invariant preferences, as set forth in \cite{MR2724418}. 
In the unconstrained case, the global {\num} portfolio $\Xhat$ is a wealth process which has the property that all other investments, denominated in units of $\Xhat$, are supermartingales. It is well known that $\Xhat$ also maximises the asymptotic long-term growth-rate and is exactly the investment corresponding to Kelly's criterion, see e.g.\ \cite{Hakansson:1971kp,BansalLehmann:97} and the references therein.
Some recent contributions explored the {\num} property under a constrained investment universe. In particular, \cite{MR2335830} showed that with point-wise convex constraints on the proportions invested in each asset, one can retrieve existence and all useful properties of the \num \ portfolio. We contribute to this direction of research by providing a detailed analysis of the {\num} property within the class of investments which satisfy a given linear drawdown constraint, where wealth can never fall below a fraction $\alpha$ of its running maximum. 

Our first main result establishes existence of unique portfolios with the \num \ property over different time-horizons under drawdown-constrained investment. However, in contrast to the unconstrained case, the optimal strategies may depend on the time horizon and are no longer myopic. Our results are valid in a general continuous-path semimartingale set-up. 
We also discuss detailed structural asymptotic properties of the optimal strategies, including a version of the so-called turnpike theorem. 

Our second main result considers a long-run investor. Given the investor's acceptable level of drawdown $\alpha$, we show that there is a unique choice of investment strategy which almost surely asymptotically outperforms any other strategy which satisfies the $\alpha$-drawdown constraint. In this way, we succeed in using the Kelly criterion while allowing for a flexible specification of risk attitudes. 
The optimal strategy is given explicitly in two manners. First, we obtain a version of the mutual fund theorem: the optimal strategy is a dynamic version of the so-called fractional Kelly strategy. It invests a fraction of wealth, which depends on the current level of drawdown, in the fund represented by $\Xhat$ and the remaining fraction in the baseline asset. Second, the optimal strategy is given as a pathwise and model-independent transformation of the unconstrained \num\ strategy $\Xhat$. As a result, it disentangles the effects of model specification and preferences specification. The former yields the Kelly strategy $\Xhat$. The latter specifies the transformation which is applied to $\Xhat$ to control the risk by avoiding drawdowns beyond a certain magnitude. In this way, long-run investment decisions are decomposed in two separate steps. The modelling effort is reduced to constructing the global \num\ portfolio $\Xhat$, or the best approximation thereof. Preferences elucidation is reduced to determining the acceptable level of drawdown parameter $\alpha\in [0,1)$.

An important tool in our study is the so-called Az\'ema--Yor transformation, a result in stochastic analysis which allows to build an explicit, model--independent, bijection between all wealth processes and wealth processes satisfying a drawdown constraint. This transformation was established in a general semimartingale set-up in \cite{BKO} and used by \cite{ChernyObloj:11} in a utility maximisation setting. However, a special case of it was already used in \cite{CvitanicKaratzas:94}. We show here that the Az\'ema--Yor transform $\xha$ of $\Xhat$ has the {\num} property within the class of portfolios satisfying the $\alpha$-drawdown constraint, both in an asymptotic sense and when sampled at times of its maximum. However, the optimal strategies may depend on the time horizon; in particular, it is not true that all other drawdown-constrained wealths are supermartingales in units on $\xha$, a feature often used previously to define the \num\ property, see \cite{Long:1990hm} or \cite{MR2267213}. Finally, we show that $\xha$ is the only wealth process which enjoys the {\num} property along some increasing sequence of stopping times that tends to infinity. Further, portfolios enjoying the \num \ property for investment with long time-horizons are close (in a very strong sense) to $\xha$ at initial times.
The optimal portfolio $\xha$, as mentioned above, can be explicitly produced by investing at each time a fraction of current wealth in the fund represented by $\Xhat$ and the remaining fraction in the baseline asset.
When the domestic savings account is taken as the baseline asset, $\Xhat$ and $\xha$ have the same instantaneous Sharpe ratio. Both portfolios are located at the Markowitz efficient frontier. However $\xha$ trades off long-term growth for a path-wise capital guarantee in the form of a drawdown constraint, something $\Xhat$, or solutions to expected utility maximisation in general, can not offer.

We stress the fact that the results presented here do not follow from previous literature because of the generality of our setup and the complex nature of the drawdown constraints. In fact, novel characteristics appear in this setting. For example, portfolios with the \num \ property are no longer myopic, and depend on the financial planning horizon. Interestingly, it is one of our main points that in an asymptotic sense the myopic structure is reinstated. Furthermore, we emphasise that the findings of this paper are essentially model-independent and, therefore, rather robust. Finally, we wish to draw some attention to the underlying philosophy relative to the practical perspective. A long-run investor will only witness a single realisation of the market dynamics. Therefore, path-wise outperformance is a very appealing decision criterion. We show that it is possible to combine it with risk preferences, and that this is best done not by complicating the investor's decision criteria, but rather by restricting the universe of acceptable trajectories of wealth evolution.

As already outlined above, drawdown constraints have features appealing to various participants in financial markets and are, thus, often encountered in practice, in either explicit or implicit manner. For an investor in a fund, past performance often serves as a benchmark. A large drawdown---the relative difference between the past maximum and the current value---would indicate and be perceived as a worsening of performance and even a loss. Accordingly, it may be used as a stop-loss trigger. This, in turn, would usually lead to a flight of capital from the fund, a threatening situation that should be avoided from a managerial perspective. Note that drawdown constraints may also result implicitly from the structure of hedge fund manager's incentives through the high-water mark provision---see e.g.\ \cite{GuasoniObloj:11}.

Despite their practical importance, there are relatively few theoretical studies on portfolio selection with drawdown constraints. The main obstacle is the inherent difficulty associated with such non-myopic constraint: it involves the running maximum process and, therefore, depends on the whole history of the process. Apart from the contributions mentioned above, we note that \cite{Magdon-IsmailAtiya:04} derived results linking the maximum drawdown to average returns. In \cite{CUZ:05}, the problem of maximising expected return subject to a risk constraint expressed in terms of the drawdown was considered and solved numerically in a simple discrete time setting.
Finally, in continuous-time models, drawdown constraints were also recently incorporated into problems of maximising expected utility from consumption---see \cite{Elie:08} and \cite{ET}. Options on drawdowns were also explored as instruments to hedge against portfolio losses, see \cite{Vecer:06}. Further, the maximisation of growth subject to constraints arising from alternative risk measures is discussed in \cite{MR2536869}.

\subsection*{Structure of the paper}
Section \ref{sec: market} contains a description of the financial market and introduces drawdown-constrained investments. In Section \ref{sec: num_property}, the \num \ property of drawdown-constrained investments is explored. Main results are Theorem \ref{thm: exist}, establishing existence and uniqueness of portfolios with the \num \ property for finite time-horizons, and Theorem \ref{thm: num}, which explicitly describes an investment that has the \num \ property at special stopping times where it achieves its maximum---in particular, this includes its asymptotic \num \ property. More asymptotic optimality properties of the aforementioned investment are explored in Section \ref{sec: more_asympt_optimal}. More precisely, its asymptotic (or long-run) growth-optimality is taken up in Theorem \ref{thm: asympt_growth}, and an important result in the spirit of turnpike theorems is given in Theorem \ref{thm: turnpike}. All the proofs are collected in Appendix \ref{sec: proofs}. Finally, in Appendix \ref{sec: careful_turnpike} we present an example in order to shed more light on the conclusion of the turnpike-type Theorem \ref{thm: turnpike}.

\section{Market and Drawdown Constraints}\label{sec: market}

\subsection{Financial market} \label{subsec: model}
We consider a very general frictionless financial market model with the assumption of \emph{continuous} price processes. Specifically, on a stochastic basis $\basis$, where $\bF = (\F_t)_{t \in \Real_+}$ is a filtration satisfying the usual hypotheses of right-continuity and saturation by $\prob$-null sets of $\F$, let $S = (S^1, \ldots, S^d)$ be a $d$-dimensional semimartingale with a.s.\ continuous paths---see, for example, \cite{MR1121940}. Each $S^i$, $i \in \set{1, \ldots, d}$, is modelling the random movement of an asset price in the market, discounted by a baseline asset. It is customary to assume that the baseline asset is the (domestic) savings account, but it does not necessarily have to be so. 

Define $\X$ to be the class of all \emph{nonnegative} processes $X$ of the form
\begin{equation} \label{eq: weath}
X = 1 + \int_0^\cdot \pare{H_t, \ud S_t} \equiv 1 + \int_0^\cdot \pare{\sum_{i=1}^d H^i_t \ud S^i_t},
\end{equation}
where $H = (H^1, \ldots, H^d)$ is a $d$-dimensional predictable and $S$-integrable\footnote{All integrals are understood in the sense of vector stochastic integration.This somewhat mathematical restriction has proved essential in order to formulate elegant versions of the Fundamental Theorem of Asset Pricing, as well as to ensure that optimal wealth processes exist---for example, it is crucial for the validity of Theorems \ref{thm: asss_num} and \ref{thm: asss}, which are used extensively throughout the paper.} process. Throughout the paper, $(\cdot, \cdot)$ is used to (sometimes, formally) denote the inner product in $\Real^d$.   The process $X$ of \eqref{eq: weath} represents the outcome of trading according to the investment strategy $H$, denominated in units of the baseline asset.  In the sequel we are interested in ratios of portfolios; therefore, the initial value $X_0$ plays no role as long as it is the same for all investment strategies. For convenience we assume $X_0 = 1$ holds for all $X \in \X$.

In the following, we characterise in a precise manner the rich world of models that we permit for our market. These include most continuous-path models that have been studied in the literature. Essential is the existence of the \emph{\num \ portfolio}---see \cite{Long:1990hm}. However, existence of an equivalent risk-neutral probability measure is not requested; therefore, certain forms of classical arbitrage are permitted.

\begin{defn}
We shall say that there are opportunities for arbitrage of the first kind if there exist $T \in \Real_+$ and an $\F_T$-measurable random variable $\xi$ such that:
\begin{itemize}
	\item $\prob[\xi \geq 0] = 1$ and $\prob[\xi > 0] > 0$;
	\item for all $x > 0$ there exists $X \in \X$, which may depend on $x$, with $\prob[x X_T \geq \xi] = 1$.
\end{itemize}
\end{defn}

The following mild and natural assumption is key to the development of the paper.

\begin{ass} \label{ass: basic}
In the market described above, the following hold:
\begin{enumerate}
	\item[(A1)] There is no opportunity for arbitrage of the first kind.
	\item[(A2)] There exists $X \in \X$ such that $\prob \bra{\lim_{t \to \infty} X_t = \infty} = 1$.
\end{enumerate}
\end{ass}

Condition (A1) in Assumption \ref{ass: basic} is a minimal market viability assumption. On the other hand, condition (A2) asks for sufficient market growth in the long run. They are equivalent to the existence and growth condition of the \num\ portfolio.

\begin{thm} \label{thm: asss_num}

Condition (A1) of Assumption \ref{ass: basic} is equivalent to:
\begin{enumerate}
	\item[(B1)] There exists $\Xhat  \in \X$ such that $X/ \Xhat$ is a (nonnegative) local martingale for all $X \in \X$.
\end{enumerate}
Under the validity of (A1) or (B1), condition (A2) of Assumption \ref{ass: basic} is equivalent to:
\begin{enumerate}
	\item[(B2)] $\prob \big[ \lim_{t \to \infty} \Xhat_t = \infty \big] = 1$.
\end{enumerate}
\end{thm}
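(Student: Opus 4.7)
The plan is to prove the two equivalences separately. The first, (A1) $\Leftrightarrow$ (B1), is the classical connection between no-arbitrage of the first kind and the existence of the num\'eraire portfolio in continuous semimartingale markets; the second, (A2) $\Leftrightarrow$ (B2), is a direct application of the supermartingale convergence theorem.

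For (B1) $\Rightarrow$ (A1), I would argue by contradiction: suppose $\Xhat$ as in (B1) exists, and that some $\xi$ realises an arbitrage of the first kind at some horizon $T \in \Real_+$. For each $n \in \Natural$ there is $X^n \in \X$ with $X^n_T \geq n\xi$ a.s. Since $X^n/\Xhat$ is a nonnegative local martingale, hence a supermartingale starting at $1$, we have $\expecp[X^n_T/\Xhat_T] \leq 1$, so $\expecp[\xi/\Xhat_T] \leq 1/n$ for every $n$. Sending $n \to \infty$ gives $\xi/\Xhat_T = 0$ a.s.; since $\Xhat_T$ is finite, this forces $\xi = 0$ a.s., contradicting $\prob[\xi>0]>0$. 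The converse (A1) $\Rightarrow$ (B1) is the substantial direction, and I would invoke existing results in the literature, notably the num\'eraire-portfolio existence theorem of Karatzas--Kardaras in a form compatible with our setting (see \cite{MR2335830} and the references therein). Its main ingredients are: (i) the reformulation of (A1) as $\lz$-boundedness of $\set{X_T : X \in \X}$ for every finite $T$; (ii) a Koml\'os-type compactness argument applied to log-optimal approximating sequences on each finite horizon, producing a candidate maximiser; and (iii) a uniqueness-based pasting of these maximisers across horizons to obtain a single global process $\Xhat$ satisfying the supermartingale property for every $X/\Xhat$.

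The second equivalence is more direct. The implication (B2) $\Rightarrow$ (A2) is immediate since $\Xhat \in \X$. For (A2) $\Rightarrow$ (B2), let $X \in \X$ with $\prob[\limn X_t = \infty] = 1$ and set $Y \dfn X/\Xhat$. By (B1), $Y$ is a nonnegative local martingale, hence a nonnegative supermartingale, and therefore converges a.s.\ to a finite random variable $Y_\infty \in [0, \infty)$. On the a.s.\ event $\set{X_t \to \infty}$ one has $Y_t > 0$ for all $t$ (otherwise, by the minimum principle for nonnegative supermartingales, $Y$ would be identically zero after its hitting time of $0$, forcing $X = Y \Xhat$ to vanish thereafter and contradicting $X_t \to \infty$). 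Hence $\Xhat_t = X_t/Y_t$ is well-defined; on $\set{Y_\infty > 0}$ this gives $\Xhat_t \to \infty$ at once, while on $\set{Y_\infty = 0}$ we combine $1/Y_t \to \infty$ with $X_t \to \infty$ to conclude $\Xhat_t = X_t/Y_t \to \infty$. Thus $\prob\bra{\limn \Xhat_t = \infty} = 1$, as required.

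The main obstacle, were one to reconstruct everything from scratch, is the (A1) $\Rightarrow$ (B1) direction: both the $\lz$-boundedness characterisation of (A1) and the construction of $\Xhat$ via log-optimisation and compactness require substantial semimartingale machinery. Once these are in place (or quoted from the literature), the remaining arguments---the supermartingale inequality for (B1) $\Rightarrow$ (A1) and the convergence analysis of $Y = X/\Xhat$ for (A2) $\Rightarrow$ (B2)---are elementary.
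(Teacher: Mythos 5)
Your proposal is correct and follows essentially the same route as the paper: the paper also delegates the substantial equivalence (A1)$\Leftrightarrow$(B1) to the literature (Theorem~4 of the cited work of Karatzas and Kardaras, which the paper invokes wholesale, whereas you additionally write out the easy supermartingale argument for (B1)$\Rightarrow$(A1)), and it proves (A2)$\Leftrightarrow$(B2) by exactly your argument, namely that the nonnegative supermartingale $X/\Xhat$ converges a.s.\ to a finite limit, which together with $X_t\to\infty$ forces $\Xhat_t\to\infty$. Your extra case analysis on $\{Y_\infty>0\}$ versus $\{Y_\infty=0\}$ and the minimum-principle remark are fine but not needed beyond what the paper records.
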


\begin{rem} \label{rem: num_log}
The equivalence of (A1) and (B1) was first discussed in \cite{Long:1990hm}. 
If the process $\Xhat$ in (B1) exists, then it is unique and is said to have the \textsl{\num \ property}. It is well known that it solves the log-utility maximisation problem on any finite time horizon, and that it achieves optimal asymptotic (or long-term) growth. We shall revisit these properties in a more general setting---see Remark \ref{rem: num implies log} and Theorem \ref{thm: asympt_growth}.
\end{rem}

The proof of Theorem \ref{thm: asss_num} is given in Subsection \ref{subsec: equivalentass} of Appendix \ref{sec: proofs}. In fact, it is a special case of a more general Theorem \ref{thm: asss} therein which contains several useful equivalent conditions to the ones presented in Assumption \ref{ass: basic}. 

\subsection{Drawdown constraints}
To each wealth process $X \in \X$, we associate its \textsl{running maximum} process $X^*$ defined via $X^*_t \dfn \sup_{u \in [0, t]} X_u$ for $t \in \Real_+$. The difference $X^* -X$ between the current wealth and its running maximum is called the \textsl{drawdown process}. As we argued in the introduction, different participants in financial markets may be interested to restrict the universe of their strategies to the ones which do not permit for drawdowns beyond a fixed fraction of the wealth's running maximum.

For any $\alpha \in \zo$, we write $\Xa$ for the class of wealth processes $X \in \X$ such that $X^*_t -X_t \leq (1-\alpha)X^*_t$, for all $t\geq 0$. The $[0,1]$-valued process $X / X^*$ is called the \textsl{relative drawdown} process associated to $X$.  Note that $X\in \Xa$ if and only if $X / X^* \geq \alpha$ holds identically.
It is clear that $\Xb \subseteq \Xa$ for $0 \leq \alpha \leq \beta < 1$, and that $\Xz = \X$. Note that if $X \in \X$ satisfies $X \geq \alpha X^*$ on the interval $\dbra{0, T}$ (here, $T$ can be any stopping time), then $(X_{T \wedge t})_{t \in \Real_+} \in \Xa$; therefore, it is appropriate to use $\Xa$ as the set of wealth processes regardless of the investment horizon. 

Interestingly, there is a one-to-one correspondence between wealth processes in $\X$ and wealth processes in $\Xa$ for any $\alpha \in \zo$.
The bijection was derived explicitly in terms of the so--called Az\'ema--Yor processes in \cite{BKO}, Theorem 3.4, and recently exploited in \cite{ChernyObloj:11}, in a general setting of possibly non-linear drawdown constraints. This elegant machinery simplifies greatly in the case of ``linear'' drawdown constraints considered here, and we provide explicit arguments, similarly to the pioneering work of \cite{CvitanicKaratzas:94}.
We first discuss how processes in $\X$ generate processes in $\Xa$---the converse will be established in the proof of Proposition \ref{prop: dd_charact} in the Appendix. For $X \in \X$ and $\alpha \in \zo$, define a process $\xa$ via\footnote{In the notation of \cite{BKO}, we have $\xa=M^{F_\alpha}(X)$ with $F_\alpha : \Real_+ \mapsto \Real_+$ defined via $F_\alpha(x)=x^{1-\alpha}$ for $x \in \Real_+$. Furthermore, Proposition 2.2 therein implies that $X=M^{G_\alpha}(\xa)$ with $G_\alpha=F_\alpha^{-1}$. This last converse construction is presented explicitly in Proposition \ref{prop: dd_charact}.}
\begin{equation} \label{eq: xa_defn}
\xa \dfn \alpha (X^*)^{1 - \alpha} + (1 - \alpha) X (X^*)^{- \alpha}.
\end{equation}
Using the fact that $\int_{0}^\infty \indic_{\set{X_t < X^*_t}} \ud X^*_t = 0$ a.s. holds, an application of It\^o's formula gives
\begin{equation} \label{eq: xa_SDE}
\xa = 1 + \int_0^\cdot (1 - \alpha) (X^*_t)^{- \alpha} \ud X_t,
\end{equation}
which implies that $\xa \in \X$. Furthermore, \eqref{eq: xa_defn} gives $\alpha (X^*)^{1 - \alpha} \leq \xa \leq (X^*)^{1 - \alpha}$. Note also that times of maximum of $X$ coincide with times of maximum of $\xa$ and consequently $\xa^* = (X^*)^{1 - \alpha}$. It follows that
\begin{equation} \label{eq: dd_xa}
\frac{\xa}{\xa^*} = \frac{\alpha (X^*)^{1 - \alpha} + (1 - \alpha) X (X^*)^{- \alpha}}{(X^*)^{1 - \alpha}} = \alpha + (1 - \alpha) \frac{X}{X^*}\geq \alpha,
\end{equation}
implying $\xa \in \Xa$.

The converse construction is presented in Proposition \ref{prop: dd_charact} below, the proof of which is given in Subsection \ref{subsec: proof of prop: dd_charact} of Appendix \ref{sec: proofs}. Together with \eqref{eq: xa_defn} they provide an extremely convenient representation of the class $\Xa$ for $\alpha \in \zo$, which we use extensively throughout the paper.

\begin{prop}[Proposition 2.2 of \cite{BKO}] \label{prop: dd_charact}
It holds that $\Xa = \set{\xa \such X \in \X}$.
\end{prop}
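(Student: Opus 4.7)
The forward inclusion $\set{\xa \such X \in \X} \subseteq \Xa$ has already been established via \eqref{eq: xa_defn}--\eqref{eq: dd_xa}, so the plan is to prove the reverse inclusion by explicitly inverting the transformation $X \mapsto \xa$. Given $Y \in \Xa$, the candidate pre-image I would take is
\begin{equation*}
X \dfn \frac{Y - \alpha Y^*}{1 - \alpha} (Y^*)^{\alpha/(1-\alpha)},
\end{equation*}
motivated by solving \eqref{eq: xa_defn} for $X$ together with the identity $\xa^* = (X^*)^{1-\alpha}$ from the excerpt (which suggests the ansatz $X^* = (Y^*)^{1/(1-\alpha)}$). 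Nonnegativity of $X$ is immediate from the defining property $Y \geq \alpha Y^*$ of $\Xa$, and $X_0 = 1$ follows from $Y_0 = Y_0^* = 1$.

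The next step is to show $X \in \X$. Treating $X = f(Y, Y^*)$ as a smooth function of the continuous semimartingale $Y$ and the continuous increasing process $Y^*$, a direct computation gives $\partial_{yy} f = 0$, and $\partial_m f$ vanishes on the diagonal $\set{y = m}$. Since $\ud Y^*$ is supported on $\set{Y = Y^*}$, an application of It\^o's formula should collapse to
\begin{equation*}
X = 1 + \int_0^\cdot \frac{(Y^*_t)^{\alpha/(1-\alpha)}}{1 - \alpha} \ud Y_t.
\end{equation*}
Writing $Y = 1 + \int_0^\cdot (K_t, \ud S_t)$ for a predictable $S$-integrable $K$ and noting that $(Y^*)^{\alpha/(1-\alpha)}/(1-\alpha)$ is continuous, predictable and locally bounded, I would absorb this scalar factor into $K$ to exhibit $X$ as a vector stochastic integral against $S$, hence $X \in \X$.

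The central technical point is to verify the identity $X^* = (Y^*)^{1/(1-\alpha)}$. The plan is to observe from the displayed SDE for $X$ that on the open set $\set{Y < Y^*}$ the processes $X$ and $Y$ move in the same direction (so $X$ attains no new maxima there), whereas at any time $t$ with $Y_t = Y^*_t$ direct substitution into the definition yields $X_t = (Y_t^*)^{1/(1-\alpha)}$. Together these identify the times of maximum of $X$ with those of $Y$ and give $X^* = (Y^*)^{1/(1-\alpha)}$. With this identity in hand, substitution into \eqref{eq: xa_defn} yields
\begin{equation*}
\xa = \alpha (X^*)^{1-\alpha} + (1-\alpha) X (X^*)^{-\alpha} = \alpha Y^* + (Y - \alpha Y^*) = Y,
\end{equation*}
as required.

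I expect the main obstacle to lie in the careful justification of $X^* = (Y^*)^{1/(1-\alpha)}$, namely matching the contact-set structure of the two running-maxima processes, together with ensuring the vector-stochastic-integrability of the resulting integrand; the remaining steps are routine given the explicit formulae.
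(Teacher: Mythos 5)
Your proposal is correct and follows essentially the same route as the paper: the same explicit inverse $X = (1-\alpha)^{-1}(Y^*)^{\alpha/(1-\alpha)}(Y - \alpha Y^*)$, nonnegativity from $Y \geq \alpha Y^*$, It\^o's formula together with $\int_0^\infty \indic_{\set{Y_t < Y^*_t}} \ud Y^*_t = 0$ to get $X \in \X$, and the identification of the times of maximum of $X$ with those of $Y$ (equivalently $Y^* = (X^*)^{1-\alpha}$) to conclude $\xa = Y$. The only difference is presentational: you spell out the verification of $X^* = (Y^*)^{1/(1-\alpha)}$, which the paper leaves as "straightforward to check."
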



\begin{rem}\label{rem: SDEintepret}
One can rewrite equation \eqref{eq: xa_SDE} in differential terms as
\[
\frac{\ud \xa_t}{\xa_t} = \pare{\frac{(1 - \alpha) (X^*_t)^{- \alpha} X_t}{\xa_t}} \frac{\ud X_t}{X_t} =
\frac{\xa_t-\alpha\xa_t^*}{\xa_t}
\frac{\ud X_t}{X_t},
\]
for $t < \inf\{u \in \Real_+ \such X_u=0 \}=\inf\{u\in \Real_+ \such \xa_u-\alpha\xa_u^*=0\}$.
The above equation carries an important message: for $X \in \X$, the way that $\xa$ is built is via investing a proportion
\[
\apix \dfn \frac{\xa-\alpha\xa^*}{\xa} =  1 - \frac{\alpha}{\xa / \xa^*} = \frac{(1 - \alpha) (X / X^*)}{\alpha + (1 - \alpha) (X / X^*)}
\]
in the fund represented by $X$, and the remaining proportion $1 - \apix$
in the baseline asset. In particular, when the baseline asset is the domestic savings account, it follows that the Sharpe ratios of $X$ and $\xa$ are the same. Note that $0 \leq \apix \leq 1 - \alpha$ (so that $\alpha \leq 1 - \apix \leq 1$). Furthermore, $\apix$ depends only on $\alpha \in \zo$ and the relative drawdown $X /X^*$ of $X$. In fact, the proportion $\apix$ invested in the underlying fund represented by $X$ is an increasing function of the relative drawdown $X/X^*$.

Recall the \num \ portfolio process $\Xhat$ in (B1) in Theorem \ref{thm: asss_num}. When the above discussion is applied to $\xha$, defined from $\Xhat$ via \eqref{eq: xa_defn},
it follows from \cite[Theorem 11.1.3 and Corollary 11.1.4]{MR2267213} that $\xha$ is a locally optimal portfolio, in the sense that it locally maximises the excess return over all investments with the same volatility. In view of \eqref{eq: num} in Appendix \ref{sec: proofs}, the wealth process $\Xhat$ is given explicitly in terms of the drift and quadratic covariation process of the multi-dimensional asset-price process. It follows that $\xha$ for $\alpha \in \zo$ is explicitly specified as well.
\end{rem}

Even though the \num \ portfolio $\Xhat$ has optimal growth in an asymptotic sense (in this respect, see also Theorem \ref{thm: asympt_growth} later in the text), it is a quite risky investment. In fact, it experiences arbitrarily large flights of capital, as its relative drawdown process $\Xhat / \Xhat^*$ will become arbitrarily close to zero infinitely often. This is in fact equivalent to the following, seemingly more general statement, showing an oscillatory behavior of the relative drawdown for all wealth processes $\xha$, $\alpha \in \zo$.

\begin{prop} \label{prop: dd_osc}
Under Assumption \ref{ass: basic}, it holds that
\[
\alpha = \liminf_{t \to \infty} \pare{\frac{\xha_t}{\xha^*_t}} < \limsup_{t \to \infty} \pare{\frac{\xha_t}{\xha^*_t}} = 1,\ \textrm{a.s.} \quad \forall \alpha\in [0,1).
\]
\end{prop}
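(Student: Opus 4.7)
The starting point is the identity~\eqref{eq: dd_xa}: $\xha_t/\xha^*_t = \alpha + (1-\alpha)\Xhat_t/\Xhat^*_t$. Hence it suffices to prove $\liminf_{t\to\infty}\Xhat_t/\Xhat^*_t = 0$ and $\limsup_{t\to\infty}\Xhat_t/\Xhat^*_t = 1$ almost surely, reducing at once to the unconstrained case. The $\limsup$ claim is elementary: by condition~(B2) in Theorem~\ref{thm: asss_num}, $\Xhat_t\to\infty$ a.s., and continuity of $\Xhat$ ensures that for every $A>1$ the hitting time $T_A := \inf\set{t\geq 0 : \Xhat_t\geq A}$ is a.s.\ finite with $\Xhat_{T_A} = \Xhat^*_{T_A} = A$; sending $A\to\infty$ gives $T_A\to\infty$ and hence $\limsup_{t\to\infty}\Xhat_t/\Xhat^*_t \geq 1$, while $\Xhat\leq\Xhat^*$ trivially gives the reverse bound.

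For the $\liminf$, set $L := 1/\Xhat$. By~(B1) applied to $X\equiv 1\in\X$, $L$ is a continuous positive local martingale with $L_0 = 1$ and, by~(B2), $L_\infty = 0$ a.s. Write $\log L = M - \tfrac{1}{2}[M]$ where $M := \int_0^{\cdot}\ud L_s/L_s$ is a continuous local martingale starting at zero. If $[M]_\infty < \infty$ on a set of positive probability, then on that set $M$ converges a.s.\ and $L$ converges to a strictly positive limit, contradicting $L_\infty = 0$; hence $[M]_\infty = \infty$ a.s. By Dambis--Dubins--Schwarz (on a suitable enlargement of the probability space) there is a Brownian motion $W$ with $M_t = W_{[M]_t}$. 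Set $Y_\sigma := \sigma/2 - W_\sigma$, a Brownian motion with positive drift $\tfrac{1}{2}$. Then $\log \Xhat_t = Y_{[M]_t}$, and continuity and monotonicity of $[M]$ on $[0,\infty)$ yield $\log\Xhat^*_t = \sup_{\sigma\leq [M]_t} Y_\sigma$. Since $[M]_t\to\infty$, the $\liminf$ claim is equivalent to
\[
\limsup_{\sigma\to\infty}\pare{\sup_{u\leq\sigma} Y_u - Y_\sigma} = +\infty \quad \text{a.s.,}
\]
which is the statement that Brownian motion with positive drift has arbitrarily deep drawdowns along $\sigma\to\infty$.

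I would prove this last fact by an independence-plus-Borel--Cantelli argument. Fix $K > 0$ and $N > K$. Since $Y_\sigma\to\infty$, the stopping times $\tau_n := \inf\set{\sigma\geq 0 : \sup_{u\leq\sigma} Y_u \geq nN}$ are a.s.\ finite, and continuity gives $Y_{\tau_n} = nN$. The strong Markov property at $\tau_n$ shows that $(Y_{\tau_n+\sigma} - nN)_{\sigma\geq 0}$ is a fresh Brownian motion with drift $\tfrac{1}{2}$ starting at $0$, run until its first hitting time of level~$N$ (which is $\tau_{n+1} - \tau_n$). Consequently the events
\[
E_n := \set{\inf_{\sigma\in[\tau_n,\tau_{n+1}]} Y_\sigma \leq nN - K}
\]
are mutually independent with common probability $p = p(K,N) > 0$ given by the classical Gambler's ruin formula for Brownian motion with drift. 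The second Borel--Cantelli lemma then yields $\prob\bra{E_n \text{ i.o.}} = 1$; on each $E_n$ there exists $\sigma\in[\tau_n,\tau_{n+1}]$ with $\sup_{u\leq\sigma} Y_u - Y_\sigma \geq K$, and since $\tau_n\to\infty$ this proves $\limsup_{\sigma\to\infty}\pare{\sup_{u\leq\sigma} Y_u - Y_\sigma} \geq K$ a.s. As $K$ was arbitrary, the limit superior is $+\infty$. The main obstacle I anticipate is precisely this drift-Brownian-drawdown fact; the preceding DDS reduction is standard once one knows $[M]_\infty = \infty$ a.s.\ and that running suprema commute with continuous time changes.
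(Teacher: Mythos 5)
Your proof is correct and follows essentially the same route as the paper: both reduce to showing $\liminf_{t\to\infty}(\Xhat_t/\Xhat^*_t)=0$ and $\limsup_{t\to\infty}(\Xhat_t/\Xhat^*_t)=1$ via \eqref{eq: dd_xa}, pass to a drifted Brownian motion through the Dambis--Dubins--Schwarz time change, and conclude with a regeneration-cycle Borel--Cantelli argument (packaged in the paper as Lemma \ref{lem: fin_dd_time}). The only real difference is how the per-cycle deep-drawdown probability is bounded below: you invoke the explicit two-sided exit (gambler's-ruin) formula for Brownian motion with drift together with genuine independence of the cycles, whereas the paper obtains the bound $\prob[T_1<\sigma_1]\geq\alpha$ by optional stopping applied to the local-martingale part of the decomposition $\Xhat^*/\Xhat = 1 + \int_0^\cdot \Xhat^*_t\,\ud(1/\Xhat_t) + \log(\Xhat^*)$ and then uses the conditional (martingale) version of Borel--Cantelli.
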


The proof of Proposition \ref{prop: dd_osc} is given in Subsection \ref{subsec: proof of prop: dd_osc} of Appendix \ref{sec: proofs}.

\section{The Num\'eraire Property} \label{sec: num_property}

\subsection{Expected relative return}

Fix a stopping time $\hor$ and $X,X' \in \X$, and define the \textsl{return of $X$ relative to $X'$ over the period} $[0, \hor]$ via
\[
\rrh (X|X') \dfn \limsup_{t \to \infty} \pare{\frac{X_{\hor \wedge t} - X'_{\hor \wedge t}}{X'_{\hor \wedge t}}} = \limsup_{t \to \infty} \pare{\frac{X_{\hor \wedge t}}{X'_{\hor \wedge t}}} - 1.
\]
(The convention $0/0 = 1$ is used throughout.) In other words, $\rrh (X|X') = (X_\hor - X'_\hor) / X'_\hor$ holds on the event $\set{\hor < \infty}$, while $\rrh (X|X') = \limsup_{t \to \infty} \pare{ (X_t - X'_t) / X'_t} = \rri(X|X')$ holds on the event $\set{\hor = \infty}$. The above definition conveniently covers both cases. Observe that $\rrh(X|X')$ is a $[-1, \infty]$-valued random variable. Therefore, for any stopping time $\hor$ and $X,X' \in \X$, the quantity
\[
\errh(X|X') \dfn \expec \bra{\rrh(X|X')}
\]
is well defined and $[-1, \infty]$-valued. $\errh(X|X')$ represents the \textsl{expected return of $X$ relative to $X'$ over the time-period $[0, \hor]$}.

The concept of expected relative returns is introduced for purposes of portfolio selection. A first idea that comes to mind is to proclaim that $X' \in \X$ is ``strictly better'' than $X \in \X$ for investment over the period $[0, \hor]$ if $\errh(X'|X) > 0$. However, this is not an appropriate notion: it is easy to construct examples where both $\errh(X'|X) > 0$ and $\errh(X|X') > 0$ hold. The reason is that, in general, $\rrh(X|X') \neq - \rrh(X'|X)$. In fact, Proposition \ref{prop: err} below implies that $\rrh(X|X') \geq - \rrh(X'|X)$, with equality holding only on the event $\set{\lim_{t \to \infty} ( X_{\hor \wedge t} / X'_{\hor \wedge t}) = 1}$. A more appropriate definition would call $X' \in \X$ ``strictly better'' than $X \in \X$ for investment over the period $[0, \hor]$ if \emph{both} $\errh(X'|X) > 0$ and $\errh(X|X') \leq 0$ hold. In fact, because of the inequality $\rrh(X|X') \geq - \rrh(X'|X)$, $\errh(X|X') \leq 0$ is enough to imply $\errh(X'|X) \geq 0$, and one has $\errh(X'|X) > 0$ in the case where $\prob \bra{ \lim_{t \to \infty} \pare{X_{\hor \wedge t} / X'_{\hor \wedge t}} = 1} < 1$.

The discussion of the previous paragraph can be summarised as follows: while positive expected returns of $X \in \X$ with respect to $X' \in \X$ do not imply that $X$ is a better investment than $X'$, we may regard non-positive expected returns of $X \in \X$ with respect to $X' \in \X$ to indicate that $X'$ is a better investment than $X$. Given the use of ``$\limsup$'' in the equality $\rrh (X|X') = \limsup_{t \to \infty} \pare{ (X_t - X'_t) / X'_t}$, valid on $\set{\hor = \infty}$, it seems particularly justified to regard $X'$ as better than $X$ when $\erri(X|X') \leq 0$ holds, at least in an asymptotic sense. We are led to the following concept.

\begin{defn}\label{defn: num_prop}
We say that $X'$ has the \textsl{\num \ property} in a certain class of wealth processes for investment over the period $[0, \hor]$ if and only if $\errh(X|X') \leq 0$ holds for all other $X$ in the same class.
\end{defn}

\begin{rem} \label{rem: depend_on_hor}
The above definition is close in spirit to the \num\ in \cite{Long:1990hm}. However following closely \cite{Long:1990hm} and the results pertaining to the non-constrained case, one may be tempted to define the \num\ portfolio in a certain class of wealth processes by postulating that all other wealth processes in this class are supermartingales in units of the \num. However, in the context of drawdown constraints this would be a void concept as portfolios with the \num \ property may depend on the planning horizon---see Example \ref{exa: time-horizon_matters}.
\end{rem}

The next result contains some useful properties of (expected) relative returns. In particular, it implies that the terminal value of an investment with the \num \ property within a certain class of processes for investment over a specified period of time is essentially unique.

\begin{prop} \label{prop: err}
For any stopping time $\hor$, any $X \in \X$ and any $X' \in \X$, it holds that
\[
\rrh(X'|X) \geq - \frac{\rrh(X|X')}{1 + \rrh(X|X')} \geq - \rrh(X|X'),
\]
with equality on $\set{\hor < \infty}$. Furthermore, the following equivalence is valid:
\[
\errh(X'|X) \leq 0 \text{ and } \errh(X|X') \leq 0 \ \Longleftrightarrow \ \prob \bra{ \lim_{t \to \infty} \pare{\frac{X_{\hor \wedge t}}{X'_{\hor \wedge t}}} = 1} = 1.
\]
\end{prop}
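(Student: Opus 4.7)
The plan is to reduce everything to the ratio process $Y_t \dfn X_{\hor \wedge t}/X'_{\hor \wedge t}$ (with the convention $0/0 = 1$). Since wealth processes in $\X$ are nonnegative, $Y_t$ takes values in $[0, \infty]$, and on $\set{\hor < \infty}$ the process $Y$ is eventually constant, equal to $Y_\hor$. With the standard conventions $1/0 = \infty$ and $1/\infty = 0$,
\[
\rrh(X|X') = \limsup_{t\to\infty} Y_t - 1, \qquad \rrh(X'|X) = \limsup_{t\to\infty} \pare{1/Y_t} - 1 = \pare{\liminf_{t\to\infty} Y_t}^{-1} - 1.
\]

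For the chain of inequalities, write $y \dfn \limsup_{t\to\infty} Y_t$ and $\ell \dfn \liminf_{t\to\infty} Y_t$, so $0 \leq \ell \leq y \leq \infty$. The left inequality then reads
\[
\rrh(X'|X) = \frac{1}{\ell} - 1 \geq \frac{1}{y} - 1 = -\frac{y-1}{y} = -\frac{\rrh(X|X')}{1+\rrh(X|X')},
\]
with equality on $\set{\hor < \infty}$ because there $\ell = y = Y_\hor$. The right inequality $-r/(1+r) \geq -r$ for $r \geq -1$ is just $r^2/(1+r) \geq 0$ (interpreted as $+\infty$ when $r = -1$).

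For the equivalence, combining the two chained inequalities yields the crucial pointwise bound
\[
\rrh(X|X') + \rrh(X'|X) \geq \frac{\rrh(X|X')^2}{1+\rrh(X|X')} \geq 0,
\]
with equality if and only if $\rrh(X|X') = 0$ together with $\ell = y$; that is, if and only if $\lim_{t\to\infty} Y_t = 1$ (the boundary case $y = 0$ is ruled out, as it produces an infinite sum). The implication $(\Leftarrow)$ is then immediate: if $\lim_{t\to\infty} Y_t = 1$ a.s., both relative returns vanish a.s.\ and so do their expectations.

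For $(\Rightarrow)$, the remaining step is justifying the addition of expectations. Since $\rrh(X|X')$ and $\rrh(X'|X)$ are both bounded below by $-1$, the hypothesis $\errh \leq 0$ forces $\expec[\rrh^-] \leq 1$, hence also $\expec[\rrh^+] \leq 1$, so both expectations are finite. Therefore $\expec[\rrh(X|X') + \rrh(X'|X)] = \errh(X|X') + \errh(X'|X) \leq 0$, while the integrand is a.s.\ nonnegative by the previous display, forcing $\rrh(X|X') + \rrh(X'|X) = 0$ a.s., i.e., $\lim_{t\to\infty} Y_t = 1$ a.s. The main obstacle I anticipate is purely bookkeeping: making sure that all boundary cases ($Y_t$ equal to $0$ or $\infty$, or $\rrh$ equal to $-1$) are treated consistently with the convention $0/0 = 1$, so that the manipulations with $1/\ell$ and $1/y$ remain legitimate in the extended real line.
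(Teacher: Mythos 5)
Your proof is correct and follows essentially the same route as the paper's: the key inequality $1+\rrh(X|X') \geq \pare{1+\rrh(X'|X)}^{-1}$ (your $1/\ell \geq 1/y$), the resulting pointwise bound $\rrh(X|X')+\rrh(X'|X) \geq \rrh(X|X')^2/\pare{1+\rrh(X|X')} \geq 0$, and the conclusion that nonpositive expectations force the sum to vanish a.s., hence $\lim_{t\to\infty} Y_t = 1$. Your explicit verification that both expected relative returns are finite (via the lower bound $-1$) before adding the expectations is a detail the paper leaves implicit, and is a worthwhile addition.
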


The proof is reported in Subsection \ref{subsec: proof of prop: err} in the Appendix \ref{sec: proofs}. It follows from the above Proposition that if $\errh(X|X') \leq 0$ and $\errh(X'|X) \leq 0$ both hold, then $X_\hor = X'_\hor$ a.s.\ on $\set{\hor < \infty}$, while $\lim_{t \to \infty} ( X_{t} / X'_{t}) = 1$ a.s.\ on $\set{\hor = \infty}$, the latter being a version of ``asymptotic equivalence'' between $X$ and $X'$. 

The next result establishes existence of a process with the \num \ property in the class $\Xa$ sampled at $\hor$ for all $\alpha \in \zo$ and finite time-horizon $\hor$, and shows that such process is uniquely defined on the stochastic interval $\dbra{0, \hor} = \set{(\omega, t) \in \Omega \times \Real_+ \such 0 \leq t \leq T(\omega) }$. (Note that the latter uniqueness property is stronger than plain uniqueness of the terminal value of processes with the \num \ property that is guaranteed by Proposition \ref{prop: err}.) Theorem \ref{thm: num} later will address the possibility of an infinite time-horizon. 

\begin{thm} \label{thm: exist}
Let $\hor$ be a stopping time with $\prob[T < \infty] = 1$. Under condition (A1) of Assumption \ref{ass: basic}, there exists $\xt \in \Xa$, which may depend on $\hor$, such that $\errh(X | \xt) \leq 0$ holds for all $X \in \Xa$. Furthermore, $\xt$ has the following uniqueness property: for any other process $\zt \in \Xa$ such that $\errh(X | \zt) \leq 0$ holds for all $X \in \Xa$, $\xt = \zt$ a.s. holds on $\dbra{0, \hor}$. 
\end{thm}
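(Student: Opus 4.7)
The plan is to first observe that since $\prob[\hor < \infty] = 1$ and all paths of wealth processes are a.s.\ continuous, the $\limsup$ in the definition of $\rrh$ collapses to give $\rrh(X | \xt) = X_\hor/\xt_\hor - 1$ almost surely. The statement therefore reduces to (i) constructing $\xt \in \Xa$ with $\expec[X_\hor/\xt_\hor] \leq 1$ for every $X \in \Xa$, and (ii) showing that any second such process agrees with $\xt$ on the stochastic interval $\dbra{0, \hor}$.

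\textbf{Existence.} I would realise $\xt$ as a maximiser of the concave functional $X \mapsto \expec[\log(X_\hor/\Xhat_\hor)]$ over $\Xa$, where $\Xhat$ is the unconstrained num\'eraire from Theorem \ref{thm: asss_num}. The set $\Xa$ is convex: given $X_1, X_2 \in \Xa$ and $\lambda \in [0,1]$, the process $X \dfn \lambda X_1 + (1-\lambda) X_2$ lies in $\X$ and satisfies $X^* \leq \lambda X_1^* + (1-\lambda) X_2^*$ pointwise, so $X \geq \alpha \pare{\lambda X_1^* + (1-\lambda) X_2^*} \geq \alpha X^*$. Since $X/\Xhat$ is a nonnegative supermartingale for every $X \in \X$, one has $\expec[X_\hor/\Xhat_\hor] \leq 1$, and Jensen's inequality bounds the supremum $v \dfn \sup \set{\expec[\log(X_\hor/\Xhat_\hor)] \such X \in \Xa}$ from above by $0$. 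Given a maximising sequence $(X^n) \subset \Xa$, the family $(X^n_\hor/\Xhat_\hor)$ is $L^1$-bounded, so a Koml\'os/Mazur extraction produces convex combinations with a.s.\ limit $\xi \in \lzp$; Fatou-closedness of $\set{X_\hor \such X \in \Xa}$ in $\lzp$---a consequence of the closure results developed in the proof of Theorem \ref{thm: asss} in the Appendix---yields $\xt \in \Xa$ with $\xt_\hor = \xi \Xhat_\hor$ a.s., attaining $v$.

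\textbf{Num\'eraire inequality.} Fix $X \in \Xa$ and $\eps \in (0, 1)$. By convexity $X^\eps \dfn (1-\eps) \xt + \eps X \in \Xa$, so optimality of $\xt$ yields $\expec[\log(1 + \eps \eta)] \leq 0$, where $\eta \dfn X_\hor/\xt_\hor - 1 \in [-1, \infty)$. Concavity of $\log$ implies that $\log(1 + \eps \eta)/\eps \uparrow \eta$ as $\eps \downarrow 0$, and this quantity is bounded below by the finite constant $\log(1 - \eps)/\eps$; monotone convergence then delivers $\expec[\eta] \leq 0$, i.e.\ $\expec[X_\hor/\xt_\hor] \leq 1$, as required.

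\textbf{Uniqueness on $\dbra{0, \hor}$.} Let $\zt \in \Xa$ be a second candidate. Applying both num\'eraire inequalities and Proposition \ref{prop: err}---equivalently, using strict concavity of $\log$ at the midpoint $(\xt + \zt)/2 \in \Xa$---forces $\xt_\hor = \zt_\hor$ a.s. Promoting this terminal identity to equality on all of $\dbra{0, \hor}$ is the main obstacle, since the drawdown constraint is path-dependent and naive ``switching'' perturbations at intermediate stopping times $\sigma \leq \hor$ need not remain in $\Xa$. I would resolve this via the Az\'ema--Yor bijection of Proposition \ref{prop: dd_charact}, writing $\xt = X_1^\alpha$ and $\zt = X_2^\alpha$ with $X_1, X_2 \in \X$; by \eqref{eq: xa_defn}, the identity $\xt_\hor = \zt_\hor$ combined with $\xt_\hor^* = (X_{1,\hor}^*)^{1-\alpha}$ forces both $X_{1,\hor} = X_{2,\hor}$ and $X_{1,\hor}^* = X_{2,\hor}^*$ almost surely. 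A first-order variational argument at every stopping time $\sigma \leq \hor$---available because the original optimisation is concave in the predictable integrand defining $X_1$---then pins down this integrand uniquely, giving $X_1 = X_2$ on $\dbra{0, \hor}$ and, through \eqref{eq: xa_SDE}, $\xt = \zt$ on the same set.
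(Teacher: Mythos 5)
Your existence route (maximise $X \mapsto \expec\bra{\log(X_\hor/\Xhat_\hor)}$ over $\Xa$ and differentiate) is genuinely different from the paper's, but it has two gaps. First, it presupposes that the relative log-utility problem is well posed, i.e.\ that its value is finite: if the supremum equals $-\infty$ (which condition (A1) alone does not exclude --- this is exactly the caveat the paper itself records in Remark \ref{rem: num implies log}, where the num\'eraire property is shown to \emph{imply} relative log-optimality ``as soon as the expected log-maximisation problem is well-posed''), then every element of $\Xa$ is a maximiser and the first-order condition yields nothing. The paper avoids this by invoking Theorem 1.1(4) of \cite{MR2724418}, an existence result for the num\'eraire in any convex, bounded-in-$\prob$-measure set of terminal payoffs that needs no integrability hypothesis. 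Second, the ``Fatou-closedness of $\set{X_\hor \such X \in \Xa}$'' is not a consequence of anything in the proof of Theorem \ref{thm: asss}, which contains no closure results; it is the technical core of the paper's existence proof, requiring the Fatou limit along dyadic times, a verification that the drawdown constraint survives that limit, the Optional Decomposition Theorem to write the limit as $\xt(1-A)$ with $\xt \in \X$, and a separate argument that $A \equiv 0$. You have outsourced precisely the hard step to a nonexistent reference.

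The uniqueness argument is broken. From $\xt_\hor = \zt_\hor$ and \eqref{eq: xa_defn} you cannot conclude that the generating processes $X_1, X_2$ satisfy both $(X_1)_\hor = (X_2)_\hor$ and $(X_1^*)_\hor = (X_2^*)_\hor$: that is a single scalar identity constraining two pairs of unknowns, and you would in addition need $\xt^*_\hor = \zt^*_\hor$, which does not follow from equality of the values at $\hor$ --- establishing the comparison of the running maxima is essentially what the paper's Lemma \ref{lem: uniq_help} is for. Likewise, ``a first-order variational argument at every stopping time pins down the integrand'' is not an argument: strict concavity identifies the optimal \emph{terminal value} (already known to be unique by Proposition \ref{prop: err}) but says nothing about the path, and distinct elements of $\Xa$ can share the same value at $\hor$. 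The paper's path-uniqueness proof is unavoidable here and quite different: a switching lemma showing that one drawdown-constrained wealth may be grafted onto another only at times of maximum of the latter (Lemma \ref{lem: switching}); the resulting conditional num\'eraire inequality at such times (Lemma \ref{lem: uniq_help}), which yields $\zt^* \leq \xt^*$ on $\dbra{0,\hor}$; and finally the stopping time $T_\epsilon \dfn \inf\set{t \in \Real_+ \such \zt_t > (1+\epsilon)\xt_t}$ together with an explicitly constructed competitor in $\Xa$ to force $\prob\bra{T_\epsilon \leq \hor} = 0$. None of these steps is present in, or recoverable from, your sketch.
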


The proof of Theorem \ref{thm: exist} is given at Subsection \ref{subsec: proof of thm:exist} of Appendix \ref{sec: proofs}.

\begin{rem} \label{rem: num implies log}
In the notation of Theorem \ref{thm: exist}, the log-utility maximisation problem at time $\hor$ is solved by the wealth process $\xta$. Indeed, the inequality $\log(x) \leq x - 1$, valid for all $x \in \Real_+$, gives
\[
\expec \bra{ \log \pare{ \frac{X_\hor}{\xt_\hor} } } \leq \expec \bra{ \frac{X_\hor}{\xt_\hor} - 1 } = \errh(X|\xt) \leq 0,
\]
for all $X \in \Xa$. This is a version of \emph{relative} expected log-optimality, which turns to actual expected log-optimality as soon as the expected log-maximisation problem is well-posed---in this respect, see also \cite[Subsection 3.7]{MR2335830}.

In view of Theorem \ref{thm: exist}, the above discussion ensures existence and uniqueness of expected log-utility optimal wealth processes for finite time-horizons in a drawdown-constrained investment framework. To the best of the authors' knowledge, results regarding existence and uniqueness of optimal processes for utility maximisation problems involving finite time-horizon and drawdown constraints are absent from the literature.
\end{rem}

\subsection{The \num \ property at times of maximum of the \num \ portfolio}

When $\alpha = 0$, the fact that $X / \Xhat$ is a nonnegative supermartingale and the optional sampling theorem imply that $\errh(X|\Xhat) \leq 0$ holds for all stopping times $\hor$ and all $X \in \X$. Therefore, the process $\Xhat$ has a ``global'' (in time) \num \ property. Furthermore, the supermartingale convergence theorem implies that $\lim_{t \to \infty} (X_t / \Xhat_t)$ $\prob$-a.s. exists for all $X \in \X$; therefore,
\begin{equation} \label{eq: pos_supermart_conv}
\rri (X|\Xhat) = \lim_{t \to \infty} \pare{ \frac{X_t - \xhat_t}{\xhat_t}} = \lim_{t \to \infty} \pare{ \frac{X_t}{\xhat_t}} - 1.
\end{equation}
For finite time-horizons, the situation is more complicated for $\alpha \in (0,1)$. In Theorem \ref{thm: num}, we shall see that $\xha$ has the \num \ property in $\Xa$ for certain stopping times (which include the asymptotic case $\hor = \infty$). However, $\xha$ does not have the \num \ property for all finite time-horizons, as the Example \ref{exa: time-horizon_matters} shows. This fact motivates the statement of Theorem \ref{thm: exist}, where it is hinted that portfolios with the \num \ property may depend on the time-horizon---see Remark \ref{rem: depend_on_hor}.

\begin{exa} \label{exa: time-horizon_matters}
Fix $\alpha \in (0,1)$. Define $T \dfn \inf \{ t \in (0,\infty) \such \Xhat_t / \Xhat^*_t = \alpha \}$ and observe that Proposition \ref{prop: dd_osc} implies that $\prob[T < \infty] = 1$ holds. With $\Xhat^{T}$ denoting the process $\Xhat$ stopped at $T$, we have $\Xhat^{T} \in \Xa$. The \num \ property of $\Xhat$ in $\X \supseteq \Xa$ implies that $\err_{T} (X  | \Xhat^{T}) \leq 0$ for all $X \in \Xa$, resulting in the \num \ property of $\Xhat^{T}$ in $\Xa$ over the investment period $[0, T]$. Since $\prob [ \xha_{T} = \Xhat_{T} ] = 0$, it follows that $\xha_{T}$ fails to have the \num \ property in $\Xa$ over the investment period $[0, T]$.

Before abandoning this example, note that if one follows the non-constrained \num \ portfolio $\Xhat$ up to $T$, the drawdown constraints will mean that one has to invest all capital in the baseline account from time $T$ onwards. It is clear that this strategy will not be long-run optimal.
\end{exa}

We continue with a definition of a class of stopping times which will be important in the sequel.

\begin{defn}
A stopping time $\tau$ will be called a \textsl{time of maximum of $\Xhat$} if $\Xhat_\tau = \Xhat^*_\tau$ holds a.s. on the event $\set{\tau < \infty}$.
\end{defn}

A couple of remarks are in order. Firstly, from \eqref{eq: xa_defn} one can immediately see that times of maximum of $\Xhat$ are also times of maximum of $\xha$ for all $\alpha \in \zo$. Secondly, the restriction in the definition of a time $\tau$ of maximum of $\Xhat$ is only enforced on $\set{\tau < \infty}$. Under Assumption \ref{ass: basic}, and in view of Theorem \ref{thm: asss}, one has $\Xhat_\tau = \Xhat^*_\tau = \infty$ holding a.s. on $\set{\tau = \infty}$. For this reason, $\tau = \infty$ is an important special case of a time of maximum of $\Xhat$.

The following theorem, the second main result of this section, establishes the \emph{{\num}  property} of $\xha$ in $\Xa$ over $[0,\infty]$ or, more generally, over $[0,\tau]$ for any time $\tau$ of maximum of $\Xhat$. We recall that $\Xa = \set{\xa \such X \in \X}$.

\begin{thm} \label{thm: num}
Recall that $\xha \in \Xa$ is defined from $\Xhat$ via \eqref{eq: xa_defn}. Under Assumption \ref{ass: basic}, for any $\alpha\in [0,1)$ and $X \in \X$, we have:
\begin{enumerate}
	\item $\lim_{t \to \infty} (\xa_t / \xha_t)$ a.s. exists in $\Real_+$. Moreover,
\begin{equation} \label{eq: asympt num rel}
\rri (\xa|\xha) = 
\pare{ \lim_{t \to \infty} \pare{ \frac{X_t}{\Xhat_t}} }^{1-\alpha} - 1 = \pare{1 + \rri (X|\Xhat)}^{1- \alpha} - 1.
\end{equation}
	\item For $\sigma$ and $\tau$ two times of maximum of $\Xhat$ with $\sigma \leq \tau$, it a.s. holds that
\begin{equation} \label{eq: supermart_analogue}
\expec \bra{\rrt(\xa|\xha) \ \big| \ \F_\sigma}\leq \rr_{\sigma}(\xa|\xha).
\end{equation}
In particular, letting $\sigma=0$, $\errt(Z|\xha) \leq 0$ holds for any $\alpha \in \zo$ and $Z \in \Xa$.
\end{enumerate}
\end{thm}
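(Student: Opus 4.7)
The two parts rest on different ideas: (1) is an asymptotic statement driven by the supermartingale property of $X/\Xhat$, while (2) requires identifying the correct ``shifted'' Az\'ema--Yor martingale. For (1), Theorem~\ref{thm: asss_num} gives that $X/\Xhat$ is a nonnegative local martingale---hence a supermartingale that converges a.s.\ to a finite limit $L := \lim_t X_t/\Xhat_t$---and that $\Xhat^*_t \to \infty$. A short case analysis (treating $L > 0$ and $L = 0$ separately) using only the definition of the running maximum yields $X^*_t/\Xhat^*_t \to L$ a.s. Substituting into the algebraic identity
\[
\frac{\xa_t}{\xha_t} \;=\; \left(\frac{X^*_t}{\Xhat^*_t}\right)^{\!1-\alpha} \cdot \frac{\alpha + (1-\alpha) X_t/X^*_t}{\alpha + (1-\alpha) \Xhat_t/\Xhat^*_t},
\]
which follows from \eqref{eq: dd_xa}, then settles (1): on $\{L > 0\}$ the second factor tends to $1$ because $X_t/X^*_t$ and $\Xhat_t/\Xhat^*_t$ differ by $o(1)$, while on $\{L = 0\}$ the first factor vanishes and the second remains bounded in $[\alpha, 1/\alpha]$.

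For (2), fix a time of maximum $\sigma$ of $\Xhat$. On the shifted filtration $(\F_{\sigma+t})_{t\geq 0}$, consider the continuous nonnegative local martingale $M_t := X_{\sigma+t}/\Xhat_{\sigma+t}$ starting at $M_0 = X_\sigma/\Xhat_\sigma$, together with the $\F_\sigma$-measurable initial running maximum $m^*_0 := X^*_\sigma/\Xhat_\sigma \geq M_0$. Applying the Az\'ema--Yor construction with $F(x) = x^{1-\alpha}$ produces the nonnegative local martingale
\[
N_t \;:=\; \alpha (M^*_t)^{1-\alpha} + (1-\alpha) M_t (M^*_t)^{-\alpha}, \qquad M^*_t := \max\Big(m^*_0,\; \sup_{0 \leq s \leq t} M_s\Big),
\]
whose value at $t = 0$ is, using $\Xhat_\sigma = \Xhat^*_\sigma$, exactly $N_0 = \xa_\sigma/\xha_\sigma$. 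Being a nonnegative local martingale, $N$ is a supermartingale with an a.s.\ terminal value $N_\infty$.

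The crux is the pointwise bound $\xa_\tau/\xha_\tau \leq N_{\tau-\sigma}$ for any time of maximum $\tau \geq \sigma$ of $\Xhat$. At such $\tau$, $\xha_\tau = \Xhat_\tau^{1-\alpha}$, so $\xa_\tau/\xha_\tau = \alpha A^{1-\alpha} + (1-\alpha) B A^{-\alpha}$ with $A := X^*_\tau/\Xhat_\tau$ and $B := X_\tau/\Xhat_\tau = M_{\tau-\sigma}$. Splitting the supremum defining $X^*_\tau$ over $[0,\sigma]$ and $[\sigma,\tau]$, and using $\Xhat_\tau = \Xhat^*_\tau \geq \Xhat_u$ on each piece, gives $A \leq M^*_{\tau-\sigma}$; since the map $a \mapsto \alpha a^{1-\alpha} + (1-\alpha) B a^{-\alpha}$ has derivative $\alpha(1-\alpha) a^{-\alpha-1}(a - B) \geq 0$ on $[B, \infty)$, this upgrades to $\xa_\tau/\xha_\tau \leq N_{\tau-\sigma}$. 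Optional sampling for the nonnegative supermartingale $N$ then delivers
\[
\expec\!\left[\frac{\xa_\tau}{\xha_\tau} \,\Big|\, \F_\sigma\right] \leq \expec\!\left[N_{\tau-\sigma} \mid \F_\sigma\right] \leq N_0 = \frac{\xa_\sigma}{\xha_\sigma},
\]
which is the desired inequality; the case $\tau = \infty$ is handled by applying the same monotonicity with $M^*_\infty \geq L$ to obtain $L^{1-\alpha} \leq N_\infty$, and identifying $\lim_t \xa_t/\xha_t$ with $L^{1-\alpha}$ via part (1). The principal obstacle is the choice of initial maximum for the shifted Az\'ema--Yor transform: the naive construction built from the \emph{global} running maximum of $X/\Xhat$ also yields a nonnegative local martingale and preserves $\xa_\tau/\xha_\tau \leq N_\tau$ at times of maximum, but its value at $\sigma$ strictly dominates $\xa_\sigma/\xha_\sigma$, so the resulting supermartingale bound goes in the wrong direction; starting the maximum at the $\F_\sigma$-measurable level $X^*_\sigma/\Xhat_\sigma$ is precisely what matches $N_0$ to $\xa_\sigma/\xha_\sigma$ while preserving the upper bound at subsequent times of maximum.
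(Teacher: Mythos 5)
Your proposal is correct. Part (1) follows essentially the same route as the paper: the identity $\xa/\xha = (X^*/\Xhat^*)^{1-\alpha}\,\frac{\alpha+(1-\alpha)X/X^*}{\alpha+(1-\alpha)\Xhat/\Xhat^*}$ combined with the convergence $X^*_t/\Xhat^*_t \to \lim_t (X_t/\Xhat_t)$; the ``short case analysis'' you invoke is exactly the paper's Lemma \ref{lem: conv of max}, proved there via the last times of maximum of $\Xhat$ and of $X$ together with $\Xhat^*_t\to\infty$, so you should be aware that this step is a genuine (if brief) lemma rather than an immediate observation. Part (2) is where you genuinely diverge. The paper first proves only the unconditional case $\sigma=0$, by dominating $\xa/\xha$ at times of maximum of $\Xhat$ by the Az\'ema--Yor local martingale $\phi = \alpha(\chi^*)^{1-\alpha}+(1-\alpha)\chi(\chi^*)^{-\alpha}$ built from $\chi = X/\Xhat$ with its \emph{global} running maximum (using the same monotonicity of $z\mapsto \alpha z^{1-\alpha}+(1-\alpha)yz^{-\alpha}$ that you use), and then obtains the conditional inequality \eqref{eq: supermart_analogue} indirectly, via the switching Lemma \ref{lem: switching}: one splices $\xa$ onto $\xha$ at the time of maximum $\sigma$ on an arbitrary event $A\in\F_\sigma$, checks the spliced process stays in $\Xa$, and feeds it back into the unconditional result. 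You instead prove the conditional statement directly by restarting the Az\'ema--Yor transform in the shifted filtration with the $\F_\sigma$-measurable initial maximum level $X^*_\sigma/\Xhat_\sigma$, which calibrates $N_0$ to equal $\xa_\sigma/\xha_\sigma$ exactly while preserving the domination $\xa_\tau/\xha_\tau\le N_{\tau-\sigma}$ at later times of maximum; your closing remark correctly identifies why the global-maximum version cannot give the conditional bound. Your route is more self-contained for this theorem and makes the supermartingale structure along times of maximum transparent; the paper's route reuses the switching lemma, which it needs anyway for the uniqueness part of Theorem \ref{thm: exist}. Two small points to tidy up: the degenerate event $\{\sigma=\infty\}$ (where \eqref{eq: supermart_analogue} is an equality and the shifted filtration is not needed) should be dispatched separately, as the paper does; and the bounds $[\alpha,1/\alpha]$ in part (1) require $\alpha>0$, so the case $\alpha=0$ should be noted as the classical supermartingale statement.
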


We proceed with several remarks on the implications of Theorem \ref{thm: num}, the proof of which is given in Subsection \ref{subsec: proof of thm:num} of Appendix \ref{sec: proofs}.

\begin{rem}
The existence of the limit in \eqref{eq: pos_supermart_conv} is guaranteed by the nonnegative supermartingale convergence theorem. In contrast, proving that $\lim_{t \to \infty} (\xa_t / \xha_t)$ exists a.s.\ for $X \in \X$ and $\alpha \in (0,1)$ is more involved, since, in general, the process $\xa / \xha$ does not have the supermartingale property. In fact, the existence of the latter limit is proved together with the asymptotic relationship \eqref{eq: asympt num rel}. Note, however, that an analogue of the supermartingale property is provided by statement (2) of Theorem \ref{thm: num}. Indeed, \eqref{eq: supermart_analogue} implies that, when sampled at an increasing sequence of times of maximum of $\Xhat$, the process $\xa / \xha$ is a supermartingale along these times for all $X \in \X$.
\end{rem}

\begin{rem}
Given statement (1) of Theorem \ref{thm: num}, the fact that $\erri(\xa|\xha) \leq 0$ holds for any  $X \in \X$  and $\alpha \in \zo$ is a simple consequence of Jensen's inequality. Indeed, for any $X \in \X$,
\begin{align*}
    \erri (\xa|\xha) &= \expec \bra{ \pare{1 + \rri (X|\Xhat)}^{1- \alpha}} - 1 \\
&\leq  \pare{\expec \bra{1 + \rri (X|\Xhat)}}^{1- \alpha} - 1 \\
&=  \pare{ 1 + \erri (X|\Xhat)}^{1- \alpha} - 1 \ \leq \ 0.
\end{align*}
The full proof of statement (2) of Theorem \ref{thm: num}, given in Appendix \ref{sec: proofs}, is more involved.
\end{rem}

\begin{rem}
The fact that $\rri (\xa|\xha) = \big( 1 + \rri (X|\Xhat) \big)^{1- \alpha} - 1$ holds for all $\alpha \in \zo$ can be easily seen to imply that $\big| \rri (\xb|\xhb) \big| \leq \big| \rri (\xa|\xha)\big|$ holds whenever $0 \leq \alpha \leq \beta < 1$. In other words, using the same generating wealth process $X$ and enforcing harsher drawdown constraints reduces the (asymptotic) difference in the performance of the drawdown-constrained process $\xa$ against the long-run optimum $\xha$.
\end{rem}

\begin{rem} 
Let us consider the hitting times of $\Xhat$, parametrized on the logarithmic scale:
\begin{equation} \label{eq: level_crossing}
\tau_\ell \dfn \inf \set{t \in \Real_+ \such \Xhat_t = \el},\quad \ell \in \Real_+.
\end{equation}
Note that $\tau_\ell$ is a time of maximum of $\Xhat$. Since times of maximum of $\Xhat$ coincide with times of maximum of $\xha$ for $\alpha \in \zo$, $\tau_\ell = \inf \big\{ t \in \Real_+ \such \xha_t = \ela \big\}$ holds for all $\alpha \in \zo$.
According to Assumption \ref{ass: basic}, $\prob \bra{\tau_\ell < \infty} = 1$ holds for all $\ell \in \Real_+$.

By Remark \ref{rem: num implies log}, the log-utility maximisation problem at time $\tau_\ell$ for the class $\Xa$ is solved by the wealth process $\xha$. Moreover, assume that $U: \Real_+ \mapsto \Real \cup \set{- \infty}$ is any increasing and concave function such that $U(x) > - \infty$ for all $x \in (0, \infty)$. Jensen's inequality implies that
\[
\expec \bra{U(\xa_{\tau_\ell})} \leq U(\expec \bra{\xa_{\tau_\ell}}) \leq U(\ela) = \expec \bra{U(\xha_{\tau_\ell})}, \quad \text{for all } X \in \X.
\]
It follows that \emph{any} (and not only the logarithmic) utility maximisation problem at a hitting time $\tau_\ell$ for the class $\Xa$ is solved by the wealth process $\xha$. This is a remarkable fact that is extremely robust, since it does not require any modelling assumptions.
\end{rem}

\section{More on Asymptotic Optimality}
\label{sec: more_asympt_optimal}

\subsection{Maximisation of long-term growth} \label{subsec: asympt_growth}

The next theorem is concerned with the asymptotic growth-optimality property of $\xha$ in $\Xa$ for $\alpha \in \zo$. It extends the result of \cite[Section 7]{CvitanicKaratzas:94} to a  more general setting and with a simpler proof. In the subsequent subsection we continue with a considerably finer analysis relating the finite-time and asymptotic optimality of $\xha$ in $\Xa$. 

One of the equivalent conditions to (A1) of Assumption \ref{ass: basic} is that a market-growth process $G$ exists: $G$ is a non-negative and non-decreasing process such that $\log(\Xhat)=G+L$ for a local martingale $L$; furthermore,  Assumption (A2) is equivalent to $\lim_{t \to \infty} G_t = \infty$; see Theorem \ref{thm: asss}. We can use the process $G$ to control the growth rate of any portfolio.

\begin{thm} \label{thm: asympt_growth}
Under Assumption \ref{ass: basic}, for any $Z \in \Xa$ we a.s.\ have that
\begin{equation}\label{eq: growth_ineq}
\limsup_{t \to \infty} \pare{\frac{1}{G_t} \log(Z_t)} \leq 1 - \alpha = \lim_{t \to \infty} \pare{\frac{1}{G_t} \log\pare{\xha_t}}.
\end{equation}
\end{thm}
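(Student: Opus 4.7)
The plan is to reduce both sides of the display to the asymptotic behaviour of $\log \Xhat^*_t / G_t$, and then to pin that ratio down via the semimartingale decomposition $\log \Xhat = G + L$ (provided by the characterisation of condition (A1) in Theorem \ref{thm: asss}). The first step, which I expect to be the main technical obstacle, is to show $\log \Xhat^*_t / G_t \to 1$ almost surely. Applying It\^o's formula to $\Xhat = \exp(G+L)$ gives $\ud \Xhat/\Xhat = \ud G + \ud L + \tfrac{1}{2} \ud \langle L\rangle$; applying It\^o once more to $1/\Xhat$ and using that $1/\Xhat$ is a nonnegative local martingale (by (B1) applied with $X = 1$), the finite-variation part must vanish, forcing $\langle L \rangle = 2G$. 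Under (A2) we have $G_\infty = \infty$ a.s., whence $\langle L\rangle_\infty = \infty$, so the Dambis--Dubins--Schwarz representation together with the strong law for Brownian motion yields $L_t / \langle L\rangle_t \to 0$, i.e.\ $L_t/G_t \to 0$ a.s. Therefore $\log \Xhat_t / G_t \to 1$, and a routine envelope estimate (pick $T_\eps$ with $|L_t| \leq \eps G_t$ for $t \geq T_\eps$ and exploit the monotonicity of $G$) promotes this to $\log \Xhat^*_t/G_t \to 1$ a.s.

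The equality in \eqref{eq: growth_ineq} then drops out of \eqref{eq: xa_defn}: since $\xha^* = (\Xhat^*)^{1-\alpha}$ and $\alpha \xha^* \leq \xha \leq \xha^*$, we have
\[
\log \xha_t \eq (1-\alpha)\log \Xhat^*_t + R_t \text{ with } |R_t| \leq |\log \alpha|,
\]
and dividing by $G_t \to \infty$ yields the claimed limit $1-\alpha$.

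For the inequality, I would proceed as follows. By Proposition \ref{prop: dd_charact}, any $Z \in \Xa$ may be written $Z = \xa$ for some $X \in \X$, and \eqref{eq: xa_defn} gives the deterministic bound $Z \leq (X^*)^{1-\alpha}$. By (B1), $Y \dfn X/\Xhat$ is a nonnegative local martingale, hence a supermartingale, hence a.s.\ convergent to a finite limit; in particular $Y^*_\infty \dfn \sup_{t \in \Real_+} Y_t$ is a.s.\ finite. Since $X_t = Y_t \Xhat_t \leq Y^*_\infty \Xhat_t \leq Y^*_\infty \Xhat^*_t$ for every $t$, taking suprema gives $X^*_t \leq Y^*_\infty \Xhat^*_t$, and thus
\[
\log Z_t \leq (1-\alpha)\bigl(\log Y^*_\infty + \log \Xhat^*_t\bigr).
\]
Dividing by $G_t$ and using the first step yields $\limsup_{t \to \infty} \log Z_t / G_t \leq 1-\alpha$ a.s., which completes the proof.

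The crux is thus step one: the identity $\langle L\rangle = 2G$ is where the continuous-paths hypothesis, the local martingale characterisation (B1), and the growth assumption (A2) are jointly mobilised. Everything after is soft---envelope estimates, elementary facts about nonnegative supermartingales, and the explicit form \eqref{eq: xa_defn}.
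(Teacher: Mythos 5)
Your proof is correct, and for the key inequality it takes a genuinely different route from the paper's. The paper first proves the equality $\lim_{t\to\infty}\pare{\log(\xha_t)/G_t}=1-\alpha$ (lower bound via the concavity estimate $\xha\geq \Xhat^{1-\alpha}$, upper bound by sampling along the times of maximum $\tau_\ell$ and using monotonicity of $G$), and then disposes of a general $Z\in\Xa$ by writing $\limsup_{t\to\infty}\pare{\log(Z_t/\xha_t)/G_t}\leq 0$, which it justifies by invoking $\prob\bra{\rri(Z|\xha)<\infty}=1$ --- a consequence of the asymptotic \num\ property $\erri(Z|\xha)\leq 0$ from Theorem \ref{thm: num}. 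You instead bypass Theorem \ref{thm: num} entirely: from $Z=\xa\leq (X^*)^{1-\alpha}$ and the pathwise bound $X^*_t\leq Y^*_\infty\,\Xhat^*_t$ (with $Y=X/\Xhat$ a nonnegative supermartingale, so $1\leq Y^*_\infty<\infty$ a.s.), the inequality follows using only (B1) and supermartingale convergence. This makes the growth-optimality result logically independent of the delicate asymptotic analysis behind Theorem \ref{thm: num} (in particular of Lemma \ref{lem: conv of max}), at the modest cost of proving the slightly stronger statement $\log\Xhat^*_t/G_t\to 1$ rather than just $\log\Xhat_t/G_t\to 1$; your envelope estimate for the running maximum is the exact counterpart of the paper's sampling-at-maxima step, and your derivation of $[L,L]=2G$ from the local martingale property of $1/\Xhat$ is a clean substitute for the paper's direct computation from \eqref{eq: num}. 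One cosmetic remark: in your equality step the bound $|R_t|\leq|\log\alpha|$ is vacuous when $\alpha=0$; that case is of course already covered by $\log\Xhat_t/G_t\to 1$ from your first step, since then $\xha=\Xhat$, but it is worth saying so explicitly.
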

The proof is reported in Subsection \ref{subsec: proof of thm: asympt_growth} in Appendix \ref{sec: proofs}.
\begin{rem}
Fix $\alpha \in \zo$. In the setting of Theorem \ref{thm: asympt_growth}, any $\xa \in \Xa$ such that, a.s.,
\[
\lim_{t \to \infty} \pare{\frac{1}{G_t} \log \pare{ \frac{\xa_t}{\xha_t} } } = 0
\]
also enjoys the asymptotic growth-optimality property in the sense of achieving equality in \eqref{eq: growth_ineq}. As a simple example, let $X\in \X$, $\kappa\in (0,1)$ and $\xt :=\kappa \Xhat + (1-\kappa)X$. Then $\xt^*\geq \kappa \Xhat^*$ so that $\xta\geq \alpha \xta^*\geq \alpha (\kappa \Xhat^*)^{1-\alpha}= \pare{\alpha \kappa^{1-\alpha}} \xha^* \geq \pare{\alpha \kappa^{1-\alpha}} \xha$ and, consequently, $\xta$ enjoys the asymptotic growth optimality. In contrast, the asymptotic \num\ property is much stronger. Combining Theorem \ref{thm: num} and Proposition \ref{prop: err}, it follows that if $\xa \in \Xa$ is to have the asymptotic \num \ property, then the much stronger ``asymptotic equivalence'' condition $\lim_{t \to \infty} \big( \xa_t / \xha_t \big) = 1$ has to be a.s. valid. We shall see below that this in fact implies the even stronger condition $\xa=\xha$.
\end{rem}

\subsection{Optimality through sequences of stopping times converging to infinity} \label{subsec: turnpike}

By Theorem \ref{thm: num}, $\erri(\xa|\xha) \leq 0$ holds for all $X \in \X$, a result which can be interpreted as long-run \num \ optimality property of $\xha$ in $\Xa$. However, in effect, this result assumes that the investment time-horizon is actually equal to infinity. On both theoretical and practical levels, one may be rather interested in considering a sequence of stopping times $(T_n)_{\nin}$ that converge to infinity and examine the behaviour of optimal wealth processes (in the \num \ sense) in the limit. We present two results in this direction. Proposition \ref{prop: re_asympt_num} establishes that the only process in $\Xa$ possessing the \num \ property along an increasing sequence of stopping times tending to infinity is $\xha$. The second result, Theorem \ref{thm: turnpike}, is more delicate than Proposition \ref{prop: re_asympt_num}, and may be regarded as a version of so-called \emph{turnpike theorems}, an appellation coined in \cite{leland1972turnpike}. While the traditional formulation of turnpike theorems involves two investors with long financial planning horizon and similar preferences for large levels of wealth, Theorem \ref{thm: turnpike} compares a portfolio having the \num \ property for a long, but finite, time-horizon  with the corresponding portfolio having the asymptotic \num \ property. Loosely speaking, Theorem \ref{thm: turnpike} states that, when the time horizon $T$ is long, the process $\xta$ that has the \num \ property in $\Xa$ for investment over the interval $[0, T]$ will be very close initially (in time) to $\xha$ in a very strong sense.

\begin{prop} \label{prop: re_asympt_num}
Under the validity of Assumption \ref{ass: basic}, suppose that there exist $X \in \X$ and a sequence of (possibly infinite-valued) stopping times $(T_n)_{\nin}$ with $\limn \prob \bra{T_n > t} = 1$ holding for all $t \in \Real_+$, such that $\liminf_{n \to \infty} \err_{T_n} (\xha|\xa) \leq 0$. Then, $\xa = \xha$.
\end{prop}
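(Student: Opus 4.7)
The plan is to reduce the conclusion $\xa = \xha$ to proving $W = 1$ a.s., where $W := \lim_{t \to \infty} X_t / \Xhat_t$, and to obtain this by combining two one-sided bounds, $\expec[W^{1-\alpha}] \leq 1$ (coming from Theorem \ref{thm: num}) and $\expec[W^{\alpha - 1}] \leq 1$ (extracted from the hypothesis via Fatou), tied together by Cauchy--Schwarz. The limit $W$ exists a.s.\ in $[0, \infty)$ since $X/\Xhat$ is a nonnegative local martingale by (B1) of Theorem \ref{thm: asss_num}, hence a nonnegative supermartingale. Theorem \ref{thm: num}(1) gives $\lim_t \xa_t / \xha_t = W^{1-\alpha}$ a.s., and consequently $\lim_t \xha_t / \xa_t = W^{\alpha - 1}$ exists a.s.\ in $(0, \infty]$; Theorem \ref{thm: num}(2) applied at the time of maximum $\tau = \infty$ then yields $\expec[W^{1-\alpha}] \leq 1$.

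The main step is to convert the hypothesis into $\expec[W^{\alpha - 1}] \leq 1$. First extract a subsequence along which $\expec[\rr_{T_n}(\xha|\xa)]$ converges to its liminf, which is $\leq 0$. Since $T_n \to \infty$ in probability, pass to a further subsequence so that $T_n \to \infty$ a.s. Along this subsequence I claim $\rr_{T_n}(\xha|\xa) \to W^{\alpha - 1} - 1$ a.s.: on $\{T_n = \infty\}$ the value equals $\limsup_t \xha_t / \xa_t - 1 = W^{\alpha - 1} - 1$ by the established existence of the limit, and on $\{T_n < \infty\}$ it equals $\xha_{T_n}/\xa_{T_n} - 1$, which tends to the same quantity as $T_n \to \infty$. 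Since $\rr_{T_n}(\xha|\xa) \geq -1$, Fatou applied to $\rr_{T_n}(\xha|\xa) + 1 \geq 0$ gives $\expec[W^{\alpha - 1}] \leq 1$; in particular $W > 0$ a.s., as $\prob[W = 0] > 0$ would otherwise force $\expec[W^{\alpha - 1}] = \infty$.

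Setting $\beta := 1 - \alpha \in (0, 1]$, I now apply Cauchy--Schwarz:
\[
1 = \pare{\expec\bra{W^{\beta/2} W^{-\beta/2}}}^2 \leq \expec\bra{W^\beta} \expec\bra{W^{-\beta}} \leq 1,
\]
so equality holds throughout. The equality case of Cauchy--Schwarz forces $W^\beta$ (and hence $W$) to be a.s.\ constant, and the bounds $\expec[W^\beta] \leq 1$ and $\expec[W^{-\beta}] \leq 1$ then force that constant to equal $1$, i.e., $W = 1$ a.s. Finally, $M := X/\Xhat$ is a nonnegative supermartingale with $M_0 = 1$ and $M_\infty = W = 1$ a.s.; conditional Fatou gives $M_t \geq \expec[M_\infty \,|\, \F_t] = 1$ a.s., while the supermartingale property gives $\expec[M_t] \leq 1$, forcing $M_t = 1$ a.s.\ for every $t$. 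By continuity $X = \Xhat$, and therefore $\xa = \xha$ via \eqref{eq: xa_defn}.

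The delicate point is the Fatou step: since the hypothesis provides only an asymptotic liminf along $(T_n)$ and $T_n \to \infty$ holds merely in probability, a double subsequence extraction is needed to identify the almost-sure limit of $\rr_{T_n}(\xha|\xa)$ as $W^{\alpha - 1} - 1$ before Fatou can be applied profitably; once this is done, the Cauchy--Schwarz squeeze and the final supermartingale identity are essentially automatic.
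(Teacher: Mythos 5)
Your proposal is correct and follows essentially the same route as the paper: pass to a subsequence with $T_n \to \infty$ a.s., use Theorem \ref{thm: num} for $\expec[W^{1-\alpha}]\leq 1$ and Fatou on the hypothesis for $\expec[W^{\alpha-1}]\leq 1$, deduce $W=1$ a.s.\ from the two reciprocal bounds, and upgrade to $X=\Xhat$ via conditional Fatou and the supermartingale property of $X/\Xhat$. The only cosmetic difference is that you close the squeeze with Cauchy--Schwarz where the paper uses Jensen's inequality, and you are slightly more explicit about the double subsequence extraction.
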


The proof is given in Subsection \ref{subsec: proof of prop: re_asympt_num} in Appendix \ref{sec: proofs}. 
Following the reasoning therein, one can also show that if $\tau$ is a time of maximum of $\Xhat$ and $\errt(\xha|\xa) \leq 0$ holds for some $X \in \X$, then $\xa = \xha$ holds identically on the stochastic interval $\dbra{0, \tau}$.

In order to state Theorem \ref{thm: turnpike}, we define a strong notion of convergence in the space of semimartingales, introduced in \cite{MR544800}.

\begin{defn} \label{dfn: emery}
For a stopping time $\hor$, we say that a sequence $(\xi^n)_{\nin}$ of semimartingales converges over $[0, \hor]$ in the \textsl{Emery topology} to another semimartingale $\xi$, and write $\Shn \xi^n = \xi$, if
\begin{equation} \label{eq: emery_conv}
\lim_{n \to \infty} \sup_{\eta \in \Pre_1} \prob \bra{  \sup_{t \in [0, \hor]} \abs{\eta_0 (\xi^n_0 - \xi_0) + \int_0^t \eta_s \ud \xi^n_s - \int_0^t \eta_s \ud \xi_s} > \epsilon } = 0
\end{equation}
holds for all $\epsilon > 0$, where $\Pre_1$ denotes the set of all predictable processes $\eta$ with $\sup_{t \in \Real_+} |\eta_t| \leq 1$. Furthermore, we say that the sequence $(\xi^n)_{\nin}$ of semimartingales converges \textsl{locally in the Emery topology} to another semimartingale $\xi$, and write $\Sln \xi^n = \xi$, if $\Shn \xi^n = \xi$ holds for all a.s. finitely-valued stopping times $\hor$.
\end{defn}

\begin{rem} \label{rem: ucp_conv}
In the setting of Definition \ref{dfn: emery}, assume that $(\xi^n)_{\nin}$ converges locally in the Emery topology to $\xi$. By taking $\eta \equiv 1$ in \eqref{eq: emery_conv}, we see that
\[
\lim_{n \to \infty} \prob \bra{  \sup_{t \in [0, \hor]} \abs{\xi^n_t - \xi_t} > \epsilon } = 0
\]
holds for all $\epsilon > 0$ and all a.s. finitely-valued stopping times $\hor$. In other words, the sequence $(\xi^n)_{\nin}$ converges \textsl{in probability, uniformly on compacts,} to $\xi$.
\end{rem}

\begin{thm} \label{thm: turnpike}
Suppose that $(T_n)_{\nin}$ is a sequence of stopping times such that $\limn \prob \bra{T_n > t} = 1$ holds for all $t \in \Real_+$. For each $\nin$, let $\xta^n \in \Xa$ have the \num \ property in $\Xa$ for investment over the period $[0, T_n]$. Under Assumption \ref{ass: basic}, it holds that $\Sln \xta^n = \xha$.
\end{thm}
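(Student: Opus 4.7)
My approach is to use the Az\'ema--Yor bijection of Proposition~\ref{prop: dd_charact} in order to reduce the statement to a convergence in $\X$. For each $n$, let $X^n \in \X$ denote the unique wealth process such that $\xta^n$ arises from $X^n$ via \eqref{eq: xa_defn}. From the stochastic integral representation \eqref{eq: xa_SDE}, the integrand $(1-\alpha)(X^{n,*})^{-\alpha}$ is a continuous functional of $X^n$ with respect to uniform-on-compacts convergence, and the stability of stochastic integration in the Emery topology \cite{MR544800} implies that the Az\'ema--Yor transform $X \mapsto {}^\alpha X$ is continuous from the local Emery topology of $\X$ into the local Emery topology of $\Xa$. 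Hence it suffices to prove $\Sln X^n = \Xhat$.

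I would then pass to subsequential limits. By Theorem~\ref{thm: asss_num}(B1), each ratio $X^n/\Xhat$ is a nonnegative continuous local martingale with initial value $1$, hence a supermartingale. Classical precompactness results for sequences of nonnegative continuous supermartingales permit extraction of a subsequence along which $X^n/\Xhat$ converges to some nonnegative process $\xi$, uniformly on compact intervals in probability. Setting $X^\infty \dfn \xi \Xhat \in \X$, this yields a candidate limit, and the Az\'ema--Yor continuity established above then gives convergence $\xta^n \to {}^{\alpha} X^{\infty}$ locally in the Emery topology along this subsequence.

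The crux is identifying $X^\infty = \Xhat$. For this I would apply the finite-horizon num\'eraire inequality with $Z = \xha$: $\expec\bra{\xha_{T_n}/\xta^n_{T_n}} \le 1$. A Fatou-type argument should then yield $\expec\bra{\lim_t \xha_t/{}^{\alpha} X^{\infty}_t} \le 1$, i.e.\ $\erri(\xha \, | \, {}^{\alpha} X^{\infty}) \le 0$. Proposition~\ref{prop: re_asympt_num}, applied to $X = X^\infty$ with the constant-infinity sequence of stopping times, then forces ${}^{\alpha} X^{\infty} = \xha$, and by the bijectivity of the Az\'ema--Yor transform, $X^\infty = \Xhat$. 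Since every subsequence of $(X^n)_{\nin}$ has a further sub-subsequence converging to $\Xhat$, the full sequence converges to $\Xhat$ uniformly on compact intervals in probability. Standard stability results for continuous positive local martingales (convergence of the quadratic variations of $\log(X^n/\Xhat)$ to zero) upgrade this to local Emery convergence, which transfers to $\xta^n \to \xha$ by the first paragraph.

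The main obstacle is the Fatou passage to the limit: since $T_n$ tends to infinity only in probability and the convergence $\xta^n \to {}^{\alpha} X^{\infty}$ is only uniform on compacts, the ratio $\xha_{T_n}/\xta^n_{T_n}$ cannot be directly controlled by the limiting quantity $\xha_t/{}^{\alpha} X^{\infty}_t$ at a fixed time $t$. To bridge this gap one can exploit the abundance of times of maximum of $\Xhat$ guaranteed by Proposition~\ref{prop: dd_osc} together with the supermartingale-at-times-of-maximum property from Theorem~\ref{thm: num}(2), stopping $\xta^n$ at a time of maximum $\sigma_n$ of $\Xhat$ lying in $[t, T_n]$ with large probability; on such $\sigma_n$ the supermartingale structure provides the required control. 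The finite-horizon drawdown bound $\xta^n \ge \alpha \xta^{n, *}$ enters as an auxiliary tool for bounding the discrepancy between $\xta^n_{T_n}$ and $\xta^n_{\sigma_n}$.
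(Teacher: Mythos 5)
Your reduction to $\Sln X^n = \Xhat$ via continuity of the Az\'ema--Yor transform in the local Emery topology is sound and is in fact how the paper finishes its own argument (via \eqref{eq: xa_SDE} and the stability lemmas of \cite{Kar11}). However, the core of your proposal has two genuine gaps.

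First, the compactness step fails. There is no ``classical precompactness'' of sequences of nonnegative continuous supermartingales with initial value $1$ in the topology of uniform convergence on compacts in probability. Take $M^n_t = \exp\pare{\beta_{nt} - nt/2}$ for a Brownian motion $\beta$: each $M^n$ is a nonnegative continuous martingale with $M^n_0 = 1$, yet $\sup_{t \in [0,T]} M^n_t$ has a fixed nondegenerate law for every $n$ while $M^n_t \to 0$ in probability for each fixed $t > 0$, so no subsequence converges ucp to any continuous process. The available selection results (Helly-type arguments, the F\"ollmer--Kramkov machinery used in the paper's own proof of Theorem \ref{thm: exist}) yield only Fatou-limits of convex combinations along a countable dense set of times, which is far weaker than ucp convergence of a subsequence of the original processes, and also does not by itself place the limit in $\X$ rather than in $\Xoo$.

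Second, the identification $X^\infty = \Xhat$ is precisely the hard part, and your sketch of how to bridge it does not close. You correctly note that $\expec\bra{\xha_{T_n}/\xta^n_{T_n}} \leq 1$ cannot be passed to the limit at a fixed time because $T_n \to \infty$ only in probability. But the tool you invoke to repair this---the supermartingale property at times of maximum of $\Xhat$ from Theorem \ref{thm: num}(2)---controls returns \emph{relative to} $\xha$, i.e.\ the process $\xa/\xha$, whereas what you need is control of $\xha/\xta^n$ between an intermediate time $\sigma_n$ and $T_n$. That would require the conditional num\'eraire property of $\xta^n$ at times of maximum \emph{of $\xta^n$} (this is Lemma \ref{lem: uniq_help} in the paper), and even then relating $\xha_{\sigma_n}/\xta^n_{\sigma_n}$ to the putative limit presupposes the convergence you are trying to prove. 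The paper avoids both difficulties entirely: it fixes the hitting times $\tau_\ell$ of \eqref{eq: level_crossing}, tests the num\'eraire property of $\xta^n$ over $[0,T_n]$ against the switching strategy that follows $\xha$ up to $\tau_\ell$ and then tracks $\xta^n$ (Lemma \ref{lem: switching}), and combines this with $\err_{\tau_\ell}(\xta^n|\xha) \leq 0$ to obtain
\[
\limn \expec \bra{\frac{(\xta^n_{\tau_\ell} - \xha_{\tau_\ell})^2}{\xha_{\tau_\ell}\, \xta^n_{\tau_\ell}}\, \indic_{\set{\tau_\ell < T_n}}} = 0,
\]
whence $\plimn \xta^n_{\tau_\ell} = \xha_{\tau_\ell}$ directly, with no subsequence extraction and no limit object to identify. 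The upgrade to Emery convergence then proceeds through the local-martingale stability results of \cite{Kar11}. I would recommend replacing your compactness-plus-identification scheme with a two-sided relative-return estimate of this kind.
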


The proof of Theorem \ref{thm: turnpike} is given in Subsection \ref{subsec: proof of thm:turnpike} of Appendix \ref{sec: proofs}.

\begin{rem}
In the setting of Theorem \ref{thm: turnpike}, the fact that $\Sln \xta^n = \xha$ implies by Proposition 2.9 in \cite{Kar11} that $\lim_{n \to \infty} \prob \big[ \, [ \xta^n - \xha, \xta^n - \xha]_\hor > \epsilon \big] = 0$ holds for all a.s. finitely-valued stopping times $\hor$ and $\epsilon > 0$. Writing $\xha = 1 + \int_0^\cdot (\hha_t, \ud S_t)$ and $\xta^n = 1 + \int_0^\cdot (\hta^n_t, \ud S_t)$ for all $\nin$ for appropriate $d$-dimensional strategies $\hha$ and $(\hta^n)_{\nin}$, we obtain
\[
\lim_{n \to \infty} \prob \bra{ \int_0^\hor \pare{\hta^n_t - \hha_t, \, \ud [S, S]_t \big(\hta^n_t - \hha_t\big)} > \epsilon } = 0,
\]
for all a.s. finitely-valued stopping times $\hor$ and $\epsilon > 0$. The previous relation implies that it is not only wealth that converges to the limiting one in each finite time-interval---the corresponding employed strategy does so as well.
\end{rem}

\begin{rem} \label{rem: careful_turnpike}
In the setting of Theorem \ref{thm: turnpike}, the conclusion is that convergence of $\xta^n$ to $\xha$ holds over finite time-intervals that do \emph{not} depend on $\nin$. One can ask whether the whole wealth process $\xta^n$ is close to $\xha$ over the stochastic interval $\dbra{0, T_n}$ for each $\nin$. This is not true in general; in Appendix \ref{sec: careful_turnpike} we present an example, valid under \emph{all} models for which Assumption \ref{ass: basic} holds, where the ratio $\xta^n_{T_n} / \xha_{T_n}$ as $n \to \infty$ oscillates between $1/(2 - \alpha)$ and $\infty$. Note that the example only covers cases where $\alpha \in (0, 1)$; if $\alpha = 0$, $\xta^n = \xha$ always holds for all $\nin$.
\end{rem}

\section{Conclusions}
The \num\ portfolio $\Xhat$, in the global sense of condition (B1) in Theorem \ref{thm: asss_num}, exists and is unique in a very general modelling set-up. The \num\ property is a strong one and implies that $\Xhat$ maximises the growth rate as well as the expected logarithmic utility, see Remarks \ref{rem: num_log}, \ref{rem: num implies log} and Theorem \ref{thm: asympt_growth}. Many experts on portfolio allocation have argued  that it makes it a natural choice for a long-run investor. On the other hand, one has to admit that it offers little flexibility to control for investor's risk appetites. The literature usually points to expected utility (including ``quadratic utility'' reflecting in some sense Markowitz's mean-variance approach) as a more flexible framework. 

However, expected utility maximisation has its inherent problems: its solution typically depends on the (arbitrary) choice of a time horizon, the utility function is a theoretical concept hard to elucidate and the optimal solution depends in a complex way on the specification of the model and the preferences. In this paper we suggest a possible way out of this unsatisfactory situation: we propose maximising the growth rate within a restricted class of investment strategies. We only consider wealth processes $X$ that satisfy a drawdown constraint: $X\geq \alpha X^*$, with $\alpha\in [0,1)$ which quantifies the investor's attitude against risk. Drawdown constraints are encountered in practice and were shown to be an effective and robust way of encoding preferences, at least for long horizons. Their drawback lies in the path dependent, non-myopic, nature of the  constraint, which renders certain features of traditional asset-allocation theory invalid.

This paper presented a rather complete investigation of the \num\ property in a drawdown-constrained context. First, we gave a new definition based on the expected relative return, which extends the \num\ property from a global setting $\X$ to any subset of investment strategies. We showed that for each time horizon there exists an essentially unique portfolio with the \num\ property within the class $\Xa$. Moreover, as horizon became distant, these are close, in a very strong sense on any fixed time interval to $\xha$, which is the unique portfolio with the asymptotic \num\ property. It is defined through an explicit and model-independent Az\'ema--Yor transformation \eqref{eq: xa_defn} from the global \num\ portfolio $\Xhat$ and has a natural investment interpretation, see Remark \ref{rem: SDEintepret}. Furthermore, $\xha$ has the \num\ property also along an increasing sequence of stopping times: the times of maximum of $\Xhat$. However, contrary to the unconstrained case, it does not  enjoy the \num\ property for all times. This is an important novel feature. 

\appendix

\section{Proofs} \label{sec: proofs}

\renewcommand{\theequation}{App.\arabic{equation}}    
 \setcounter{equation}{0}  
  
We start by describing in Subsection \ref{subsec: equivalentass} several useful equivalent formulations of Assumption \ref{ass: basic}.
Thereafter, through the course of Appendix \ref{sec: proofs}, the validity of Assumption \ref{ass: basic} is always in force. The only exception are Subsection \ref{subsec: proof of prop: err}, where no assumption is made, and Subsection \ref{subsec: proof of thm:exist}, where only the condition (A1) of Assumption \ref{ass: basic} is required.

\subsection{Equivalent conditions to Assumption \ref{ass: basic}}\label{subsec: equivalentass}
Recall the market specification in Section \ref{subsec: model}. For $i \in \set{1, \ldots, d}$ write $S^i = S^i_0 + B^i + M^i$ for the Doob-Meyer decomposition of $S^i$ into a continuous finite variation process $B^i$ with $B^i_0 = 0$ and a local martingale $M^i$ with $M^i_0 = 0$. For $i \in \set{1, \ldots, d}$ and $j \in \set{1, \ldots, d}$, $[S^i, S^j] = [M^i, M^j]$ denotes the covariation process of $S^i$ and $S^j$.

The following result follows Theorem 4 of \cite{Kar_PF} and contains useful equivalent conditions to the ones presented in Assumption \ref{ass: basic}.

\begin{thm} \label{thm: asss}

Condition (A1) of Assumption \ref{ass: basic} is equivalent to any of the following:
\begin{enumerate}
	\item[(B1)] There exists $\Xhat  \in \X$ such that $X/ \Xhat$ is a (nonnegative) local martingale for all $X \in \X$.
	\item[(C1)] There exists a $d$-dimensional process $\rho$ such that $B^i = \int_0^\cdot \sum_{j=1}^d \rho^j_t \ud [S^j, S^i]_t$ holds for each $i \in \set{1, \ldots, d}$. Furthermore, the nonnegative and nondecreasing process
\begin{equation} \label{eq: growth}
G \dfn \frac{1}{2} \int_0^\cdot  \pare{\rho_t,  \ud [S, S]_t \rho_t} \equiv \frac{1}{2} \int_0^\cdot \sum_{i=1}^d \sum_{j=1}^d  \rho^i_t \rho^j_t \ud [S^j, S^i]_t 
\end{equation}
is such that $\prob \bra{G_T < \infty} = 1$ holds for all $T \in \Real_+$.
\end{enumerate}
Under the validity of any of (A1), (B1), (C1), and with the above notation, it holds that
\begin{equation} \label{eq: num}
\log(\Xhat) = G + L, \quad \text{where } L \dfn \int_0^\cdot \sum_{i=1}^d \rho^i_t \ud M^i_t
\end{equation}

Furthermore, under the validity of any of the equivalent (A1), (B1), (C1), condition (A2) of Assumption \ref{ass: basic} is equivalent to any of the following:
\begin{enumerate}
	\item[(B2)] $\prob \big[ \lim_{t \to \infty} \Xhat_t = \infty \big] = 1$.
	\item[(C2)] $\prob \big[ G_\infty = \infty \big] = 1$, where $G_\infty \dfn \, \uparrow \lim_{t \to \infty} G_t$.
\end{enumerate}
\end{thm}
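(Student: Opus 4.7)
The plan is to prove the equivalences (A1) $\Leftrightarrow$ (B1) $\Leftrightarrow$ (C1) first, and then, under any of these, to show (A2) $\Leftrightarrow$ (B2) $\Leftrightarrow$ (C2). The equivalence (A1) $\Leftrightarrow$ (B1) is the classical characterisation of no-arbitrage-of-the-first-kind through the existence of the num\'eraire portfolio; this is the real obstacle, and rather than reconstruct the semimartingale-characteristics machinery, I would invoke it directly from Theorem~4 of \cite{Kar_PF}. Once (B1) is in hand, the remaining equivalences reduce to It\^o calculus together with one application of the Dambis--Dubins--Schwarz theorem.

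For (B1) $\Rightarrow$ (C1): write $\Xhat = 1 + \int_0^\cdot (\widehat{H}_t, \ud S_t)$ for an appropriate $\widehat{H}$, set $\rho \dfn \widehat{H}/\Xhat$, and apply It\^o's formula to $\log \Xhat$; the local-martingale requirement on $X/\Xhat$ for every $X \in \X$---in particular for wealth processes obtained by a coordinatewise perturbation of $\widehat{H}$---forces the drift identity on $B^i$ by a standard measurable-selection argument exhausting $\prob$-null sets, and \eqref{eq: num} then drops out of the same It\^o computation. Conversely, for (C1) $\Rightarrow$ (B1), the assumption $\prob[G_T<\infty]=1$ for every finite $T$ makes $L$ a well-defined continuous local martingale with $[L,L]=2G$; define $\Xhat \dfn \exp(G+L)$ and use It\^o together with the drift identity in (C1) to rewrite $\Xhat$ as $1 + \int_0^\cdot (\rho_t \Xhat_t, \ud S_t)$, so that $\Xhat \in \X$. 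For a generic $X = 1+\int(H,\ud S)\in\X$, the same drift identity makes the finite-variation part of $X/\Xhat$ vanish after applying It\^o, exhibiting $X/\Xhat$ as a non-negative local martingale.

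For the second block, (B2) $\Rightarrow$ (A2) is immediate since $\Xhat \in \X$. For (A2) $\Rightarrow$ (B2), pick $X \in \X$ with $X_t \to \infty$ a.s.; by (B1), $X/\Xhat$ is a non-negative local martingale, hence a supermartingale, and converges a.s. to a finite $[0,\infty)$-valued limit by Doob's theorem, forcing $\Xhat_t \to \infty$ a.s. Finally, for (B2) $\Leftrightarrow$ (C2), use the decomposition $\log \Xhat = G + L$ with $[L,L] = 2G$. On $\{G_\infty < \infty\}$, $L$ converges a.s. to a finite limit by the standard convergence theorem for $L^2$-bounded continuous local martingales, so $\log \Xhat$ would be bounded above, contradicting (B2); this gives (B2) $\Rightarrow$ (C2). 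Conversely, on $\{G_\infty = \infty\}$ the strong law for continuous local martingales (via the Dambis--Dubins--Schwarz time change to Brownian motion) yields $L_t/[L,L]_t \to 0$ a.s., so $L_t/G_t \to 0$ and $\log \Xhat_t = G_t + o(G_t) \to \infty$, establishing (C2) $\Rightarrow$ (B2).
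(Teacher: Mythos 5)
Your proposal is correct and, for the part of the theorem that the paper actually argues in detail, follows essentially the same route: (B2)$\Rightarrow$(A2) is trivial, (A2)$\Rightarrow$(B2) goes through the supermartingale convergence of $X/\Xhat$, and (B2)$\Leftrightarrow$(C2) rests on $[L,L]=2G$ together with a time-change to Brownian motion. The paper handles both directions of the last equivalence at once by writing $\log \Xhat_t = G_t + \beta_{2G_t}$ via Dambis--Dubins--Schwarz and reading off the behaviour on $\{G_\infty<\infty\}$ and $\{G_\infty=\infty\}$ separately; your split into ``$L^2$-bounded continuous local martingales converge'' on $\{G_\infty<\infty\}$ and the strong law $L_t/[L,L]_t\to 0$ on $\{G_\infty=\infty\}$ is the same argument phrased without naming the time change in the first half. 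The one genuine divergence is the first block: the paper imports the entire equivalence (A1)$\Leftrightarrow$(B1)$\Leftrightarrow$(C1) \emph{and} the identity \eqref{eq: num} from \cite[Theorem 4]{Kar_PF}, whereas you cite that reference only for (A1)$\Leftrightarrow$(B1) and reconstruct (B1)$\Leftrightarrow$(C1) by It\^o calculus. Your (C1)$\Rightarrow$(B1) computation is fine. For (B1)$\Rightarrow$(C1), be aware that a literal ``coordinatewise perturbation of $\widehat H$'' generally leaves $\X$ (nonnegativity fails); the clean fix is to first take $X\equiv 1$ to kill the drift of $1/\Xhat$, i.e.\ to get $(\rho,\ud B)=(\rho,\ud[S,S]\rho)$, and then test against the stopped, scaled strategies $H=\tfrac{1}{2n}\indic_{\dbra{0,T_n}}e_i$ with $T_n=\inf\{t: |S_t-S_0|\geq n\}$, which do stay in $\X$ and yield $\ud B^i=\sum_j \rho^j\,\ud[S^j,S^i]$ on $\dbra{0,T_n}$ for each $n$; no measurable selection is needed once $\rho=\widehat H/\Xhat$ is already in hand. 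With that repair your route buys a self-contained proof of the structure condition at the cost of some extra bookkeeping, while the paper's wholesale citation keeps the appendix short.
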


\begin{proof}
The fact that the three conditions (A1), (B1) and (C1) are equivalent, as well as the validity of \eqref{eq: num}, can be found in \cite[Theorem 4]{Kar_PF}. Now, assume any of the equivalent conditions (A1), (B1) or (C1). Clearly, (B2) implies (A2). On the other hand, suppose that there exists $X \in \X$ such that $\prob \bra{\lim_{t \to \infty} X_t = \infty} = 1$. The nonnegative supermartingale theorem implies that $\lim_{t \to \infty} \big(X_t / \Xhat_t)$ $\prob$-a.s. exists in $\Real_+$, which implies that $\prob \big[ \lim_{t \to \infty} \Xhat_t = \infty \big] = 1$ holds as well. Therefore, (A2) implies (B2). Continuing, note that
\eqref{eq: num} implies that
\[
[L, L] =  \int_0^\cdot  \pare{\rho_t, \ud [M, M]_t \rho_t} =  \int_0^\cdot  \pare{\rho_t, \ud [S, S]_t \rho_t} = 2 G.
\]
In view of the celebrated result of Dambis, Dubins and Schwarz---see Theorem 3.4.6 in \cite{MR1121940}---there exists a standard Brownian motion $\beta$ (in a potentially enlarged probability space, and the Brownian motion property of $\beta$ is with respect to its own natural filtration) such that $L_t = \beta_{2 G_t}$ holds for $t \in \Real_+$. It follows that $\log(\Xhat_t) = G_t + \beta_{2 G_t}$ holds for $t \in \Real_+$. Therefore, on $\set{G_\infty < \infty}$, $\lim_{t \to \infty} \Xhat_t$ a.s. exists and is $\Real_+$-valued. On the other hand, the strong law of large numbers for Brownian motion implies that on $\set{G_\infty = \infty}$, $\lim_{t \to \infty} \big( \log \big( \Xhat_t \big) / G_t\big) = 1$ a.s. holds, which in turn implies that $\lim_{t \to \infty} \Xhat_t = \infty$ a.s. holds. The previous facts imply the a.s. set-equality $\set{G_\infty = \infty} = \{ \lim_{t \to \infty} \Xhat_t = \infty \}$, which establishes the equivalence of conditions (B2) and (C2) and completes the proof.
\end{proof}

\begin{rem} 
In It\^o process models, it holds that $B^i = \int_0^\cdot S^i_t b^i_t \ud t$ and $M^i = \int_0^\cdot S^i_t \sum_{j=1}^m \sigma^{ij}_t \ud W^j_t$ for $i \in \set{1, \ldots, d}$, where $b = (b^1, \ldots, b^d)$ is the predictable $d$-dimensional vector of excess rates of return, $(W^1, \ldots, W^m)$ is an $m$-dimensional standard Brownian motion, and we write $c = \sigma \sigma^\top$ for the predictable $d \times d$ matrix-valued process of local covariances. According to Theorem \ref{thm: asss}, condition (A1) of Assumption \ref{ass: basic} is equivalent to the fact that there exists a $d$-dimensional process $\rho$ such that $c  \rho = b$, in which case we write $\rho = c^\dagger b$ where $c^\dagger$ is the Moore-Penrose pseudo-inverse of $c$, and that $G \dfn (1/2) \int_0^\cdot (b_t, c^\dagger_t b_t) \ud t = (1/2) \int_0^\cdot (\rho_t, c_t \rho_t) \ud t$ is an a.s. finitely-valued process. Observe that the process $G$ is half of the integrated squared risk-premium in the market.
\end{rem}

\subsection{Proof of Proposition \ref{prop: dd_charact}}

\label{subsec: proof of prop: dd_charact}

Since $\set{\xa \such X \in \X} \subseteq \Xa$ has already been established, we only need to show that $\Xa \subseteq \set{\xa \such X \in \X}$ also holds. Pick any $\chi \in \Xa$ and define
\[
X \dfn \frac{1}{1 - \alpha} (\chi^*)^{\alpha / (1 - \alpha)} \chi - \frac{\alpha}{1 - \alpha} (\chi^*)^{1 / (1 - \alpha)} = \frac{1}{1 - \alpha} (\chi^*)^{\alpha / (1 - \alpha)} \pare{\chi - \alpha \chi^*}.
\]
The fact that $\chi / \chi^* \geq \alpha$ implies that $X \geq 0$. Furthermore, and since $\int_{0}^\infty \indic_{\set{\chi_t < \chi^*_t}} \ud \chi^*_t = 0$ a.s. holds, a use of It\^o's formula gives
\[
X = 1 + \int_0^\cdot \frac{1}{1 - \alpha} (\chi^*_t)^{\alpha/(1 - \alpha)} \ud \chi_t,
\]
which implies that $X \in \X$. Finally using the fact that $\chi$ and $X$ have the same times of maximum---which implies, in particular, that $\chi^* = (X^*)^{1 - \alpha}$---it is straightforward to check that $\chi = \xa$. Therefore, $\Xa \subseteq \set{\xa \such X \in \X}$ and the proof of Proposition \ref{prop: dd_charact} is complete.

\subsection{Proof of Proposition \ref{prop: dd_osc}}

\label{subsec: proof of prop: dd_osc}

Since $\xha / \xha^* = \alpha + (1 - \alpha) (\Xhat / \Xhat^*)$ holds in view of \eqref{eq: dd_xa}, we only need to establish that $0 = \liminf_{t \to \infty} (\Xhat_t / \Xhat^*_t) < \limsup_{t \to \infty} (\Xhat_t / \Xhat^*_t) = 1$. The fact that $\limsup_{t \to \infty} (\Xhat_t / \Xhat^*_t) = 1$ follows directly from $\lim_{t \to \infty} \Xhat_t = \infty$. On the other hand, the fact that $\liminf_{t \to \infty} (\Xhat_t / \Xhat^*_t) = 0$ follows immediately from the next result (which is stated separately as it is also used on another occasion) and the martingale version of the Borel-Cantelli lemma.

\begin{lem} \label{lem: fin_dd_time}
Let $\sigma$ be a stopping time with $\prob[\sigma < \infty] = 1$. For $\alpha \in (0,1)$ define the stopping time $T \dfn \inf \big \{ t \in (\sigma, \infty) \such \Xhat_t / \Xhat^*_t \leq \alpha \big \}$. Then $\prob \bra{T < \infty} = 1$.
\end{lem}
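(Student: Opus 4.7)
The plan is to identify a sequence of disjoint stochastic intervals past $\sigma$ on each of which $\Xhat$ has a uniformly positive conditional probability of dropping to $\alpha$ times its running maximum, and then to invoke the conditional Borel--Cantelli lemma. Concretely, for $k \in \Natural$ I would set
\[
\sigma_k \dfn \inf\set{t \geq \sigma \such \Xhat_t \geq e^k}, \quad \rho_k \dfn \inf\set{t > \sigma_k \such \Xhat_t \leq \alpha e^k}, \quad \tau_k \dfn \rho_k \wedge \sigma_{k+1},
\]
and let $N \dfn \inf\set{k \in \Natural \such e^k > \Xhat^*_\sigma}$, which is a.s.\ finite since $\Xhat^*_\sigma < \infty$ a.s. For $k \geq N$, continuity of $\Xhat$ together with $\Xhat_t \to \infty$ a.s.\ force $\sigma_k < \infty$ with $\Xhat_{\sigma_k} = e^k$, and also $\tau_k < \infty$ (else $\Xhat$ would remain forever in the bounded band $[\alpha e^k, e^{k+1}]$). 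On the event $A_k \dfn \set{\rho_k < \sigma_{k+1}}$ one has $\Xhat_{\rho_k} = \alpha e^k$ while $\Xhat^*_{\rho_k} \geq \Xhat_{\sigma_k} = e^k$, so $\Xhat_{\rho_k}/\Xhat^*_{\rho_k} \leq \alpha$ and hence $T \leq \rho_k < \infty$ on $A_k$.

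The main step is to compute $\prob[A_k \,|\, \F_{\sigma_k}]$ on $\set{k \geq N}$. Applying condition (B1) of Theorem \ref{thm: asss_num} to the constant wealth process $X \equiv 1 \in \X$ shows that $1/\Xhat$ is a nonnegative local martingale; on $\dbra{\sigma_k, \tau_k}$ it is bounded, with $\Xhat$ confined to $[\alpha e^k, e^{k+1}]$, so optional sampling yields $\expec\bra{1/\Xhat_{\tau_k} \,|\, \F_{\sigma_k}} = e^{-k}$. Continuity of $\Xhat$ forces $\Xhat_{\tau_k} \in \set{\alpha e^k, e^{k+1}}$, so writing $p_k \dfn \prob[A_k \,|\, \F_{\sigma_k}]$ and solving $p_k/(\alpha e^k) + (1 - p_k)/e^{k+1} = e^{-k}$ gives
\[
p_k \eq \frac{\alpha(e-1)}{e-\alpha} \, =: \, p^* \, \in \, (0,1) \quad \text{on } \set{k \geq N}.
\]

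To conclude, each $A_k$ is $\F_{\sigma_{k+1}}$-measurable, and since $N < \infty$ a.s.\ one has $\sum_k \prob[A_k \,|\, \F_{\sigma_k}] \geq p^* \sum_k \indic_{\set{k \geq N}} = \infty$ a.s. L\'evy's conditional Borel--Cantelli lemma, applied along the filtration $(\F_{\sigma_k})_{k \in \Natural}$, then yields $\prob[A_k \text{ i.o.}] = 1$, and since any occurring $A_k$ with $k \geq N$ forces $T < \infty$, we deduce $\prob[T < \infty] = 1$. The only delicate points are the identification of the constant $p^* > 0$ via the bounded local martingale $1/\Xhat$, and accommodating the random nature of $N$; both are handled above, and the strict positivity of $p^*$ (which fails precisely when $\alpha = 0$, the trivial case) is what makes the argument work.
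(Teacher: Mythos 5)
Your proof is correct, and it shares the paper's overall skeleton: partition time beyond $\sigma$ into excursions between successive multiplicative levels of $\Xhat$, show each excursion carries a uniformly positive conditional probability of a drawdown to the fraction $\alpha$, and conclude by a conditional Borel--Cantelli argument. The quantitative heart, however, is genuinely different. The paper first invokes Dambis--Dubins--Schwarz to reduce to $\Xhat_t = \exp(t/2+\beta_t)$, reduces to $\sigma=0$ a time of maximum using the regeneration of Brownian motion (so that all conditional probabilities equal $\prob[A_1]$), and then bounds $\prob[A_1]\geq\alpha$ from below via an It\^o identity for $\Xhat^*/\Xhat$ involving the local martingale $\int_0^\cdot \Xhat^*_t\,\ud(1/\Xhat_t)$. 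You instead stay in the general model, absorb the arbitrary $\sigma$ into the a.s.\ finite random index $N$, and replace the running-maximum event $\set{\Xhat/\Xhat^*\leq\alpha}$ by the two-fixed-barrier event ``hit $\alpha e^k$ before $e^{k+1}$'', which suffices because the running maximum at that moment is at least $e^k$; this turns the key estimate into an exact gambler's-ruin computation for the bounded local martingale $1/\Xhat$ (obtained from (B1) with $X\equiv 1$, just as in the paper), yielding the explicit value $p^*=\alpha(e-1)/(e-\alpha)$ rather than the paper's lower bound $\alpha$. Your route is more elementary---no time change and no regeneration argument---at the cost of slightly more bookkeeping around the random starting index $N$; the localisation needed to apply optional sampling to $1/\Xhat$ on $\dbra{\sigma_k,\tau_k}$ is the same standard step the paper itself leaves implicit.
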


\begin{proof}
Recall that $\lim_{t \to \infty} \Xhat_t = \infty$ holds by Theorem \ref{thm: asss}. Using the result of Dambis, Dubins and Schwarz---Theorem 3.4.6 in \cite{MR1121940}---and a time-change argument, \eqref{eq: num} implies that we can assume without loss of generality that $\Xhat$ satisfies $\Xhat_t = \exp\pare{t/2 + \beta_t}$ for $t \in \Real_+$, where $\beta$ is a standard Brownian motion. Furthermore, using again the fact that $\lim_{t \to \infty} \Xhat_t = \infty$, we may assume without loss of generality that $\sigma$ is a time of maximum of $\Xhat$. Then, the independent increments property of Brownian motion implies that we can additionally assume without loss of generality that $\sigma = 0$. Set $\sigma_0 =0$ and, via induction, for each $\nin$ set
\[
\sigma_n \dfn \inf \set{t \in (\sigma_{n-1}, \infty) \such \Xhat_t = \mathrm{e} \Xhat_{\sigma_{n-1}}}, \text{ and } T_n = \inf \set{ t \in (\sigma_{n-1}, \infty) \such \Xhat_t / \Xhat^*_t = \alpha }.
\]
With $T = T_1$, we wish to show that $\prob \bra{T < \infty} = 1$. For each $\nin$ define the event $A_n \dfn \set{T_n < \sigma_n}$.  Note that $\prob \bra{A_n \such \F_{\sigma_{n-1}}} = \prob \bra{A_1}$ holds for all $\nin$ in view of the regenerating property of Brownian motion and the fact that each $\sigma_{n-1}$, $\nin$, is a time of maximum of $\Xhat$. Since $\limsup_{n \to \infty} A_n \subseteq \set{T < \infty}$, the martingale version of the Borel-Cantelli lemma implies that $\prob[T < \infty] = 1$ will be established as long as we can show that $\prob[T_1 < \sigma_1] = \prob[A_1] > 0$.

Since $\int_{0}^\infty \indic_{\set{\Xhat_t < \Xhat^*_t}} \ud \Xhat^*_t = 0$ a.s. holds, It\^o's formula implies that
\[
\frac{\Xhat^*}{\Xhat} = 1 + \int_0^\cdot \Xhat^*_t \ud \pare{\frac{1}{\Xhat_t}} + \log(\Xhat^*).
\]
Both processes $\Xhat^* / \Xhat$ and $\log(\Xhat^*)$ are bounded on the stochastic interval $\dbra{0, \sigma_1 \wedge T_1}$---therefore, since $\prob \bra{\sigma_1 < \infty} = 1$ and $\int_0^\cdot \Xhat^*_t \ud \big( 1 / \Xhat_t \big)$ is a local martingale (by Assumption \ref{ass: basic} and the fact that $1 \in \X$), a localisation argument gives
\[
 \prob \bra{\sigma_1 \leq T_1} + \frac{1}{\alpha} \prob \bra{T_1 < \sigma_1} = \expec \bra{\frac{\Xhat^*_{\sigma_1 \wedge T_1}}{\Xhat_{\sigma_1 \wedge T_1}}} = 1 + \expec \bra{\log(\Xhat^*_{\sigma_1 \wedge T_1})} \geq 1 + \prob \bra{\sigma_1 \leq T_1},
\]
which gives $\prob \bra{T_1 < \sigma_1} \geq \alpha > 0$ and completes the proof of Lemma \ref{lem: fin_dd_time}.
\end{proof}

\subsection{Proof of Proposition \ref{prop: err}}\label{subsec: proof of prop: err}
To begin with, note that
\[
1 + \rrh(X | X') = \limsup_{t \to \infty} \pare{\frac{X_{\hor \wedge t}}{X'_{\hor \wedge t}}} \geq \pare{\limsup_{t \to \infty} \pare{\frac{X'_{\hor \wedge t}}{X_{\hor \wedge t}}}}^{-1} = \frac{1}{1 + \rrh(X' | X)},
\]
with equality holding on $\set{\hor < \infty}$. Continuing, we obtain
\[
\rrh(X | X') + \rrh(X' | X) \geq \frac{1}{1 + \rrh(X' | X)} - 1 + \rrh(X' | X) = \frac{\rrh(X' | X)^2}{1 + \rrh(X' | X)}.
\]
Upon interchanging the roles of $X$ and $X'$, we also obtain the corresponding inequality $\rrh(X | X') + \rrh(X' | X) \geq \rrh(X | X')^2/ \pare{1 + \rrh(X | X')}$; therefore,
\begin{equation} \label{eq: rrt_ineqs}
\rrh(X | X') + \rrh(X' | X) \geq  \frac{\rrh(X' | X)^2}{1 + \rrh(X' | X)} \vee \frac{\rrh(X | X')^2}{1 + \rrh(X | X')}.
\end{equation}
It immediately follows that $\rrh(X'|X) + \rrh(X|X') \geq 0$.  Therefore, by \eqref{eq: rrt_ineqs}, the conditions $\errh(X'|X) \leq 0$ and $\errh(X|X') \leq 0$ are equivalent to $\prob \bra{ \rrh(X | X') = 0 = \rrh(X' | X)} = 1$, which is in turn equivalent to $\prob \bra{ \lim_{t \to \infty} \pare{X_{\hor \wedge t} / X'_{\hor \wedge t}} = 1} = 1$.

\subsection{Proof of Theorem \ref{thm: exist}} \label{subsec: proof of thm:exist}
For the purposes of Subsection \ref{subsec: proof of thm:exist}, only condition (A1) of Assumption \ref{ass: basic} is in force.  Fix an a.s. finitely-valued stopping time $T$ throughout. As the result of Theorem \ref{thm: exist} for the case $\alpha = 0$ is known, we tacitly assume that $\alpha \in (0,1)$ throughout.

\subsubsection{Existence}
We shall first prove existence of a process with the \num \ property in $\Xa$ for investment over the period $[0, \hor]$. As $T$ is a.s. finitely-valued, without loss of generality we shall assume that all processes that appear below are constant after time $T$, and their value after time $T$ is equal to their value at time $T$. In particular, the limiting value of a process for time tending to infinity exists and is equal to its value at time $T$.

Define $\Xo$ as the class of all nonnegative \cadlag \ processes $Y$ with $Y_0 \leq 1$ and with the property that $Y X$ is a supermartingale for all $X \in \X$. Note that $(1 / \Xhat) \in \Xo$. In a similar way, define $\Xoo$ as the class of all nonnegative \cadlag \ processes $\chi$ with $\chi_0 \leq 1$ and with the property that that $Y \chi$ is a supermartingale for all $Y \in \Xo$. It is clear that $\X \subseteq \Xoo$. The next result reveals the exact structure of $\Xoo$.

\begin{thm}[Optional Decomposition Theorem \cite{MR1469917}, \cite{MR1651229}] \label{thm: ODT}
The class $\Xoo$ consists exactly of all processes $\chi$ of the form $\chi = X (1 - A)$, where $X \in \X$ and $A$ is an adapted, nonnegative and nondecreasing \cadlag \ process with $0 \leq A \leq 1$.
\end{thm}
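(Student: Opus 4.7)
The statement is a multiplicative form of the Optional Decomposition Theorem; the natural plan is to split the proof into the two inclusions and tackle each separately.

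\textbf{Easy direction ($\supseteq$).} Suppose $\chi = X(1-A)$ with $X \in \X$ and $A$ adapted, non-decreasing, c\`adl\`ag, and $[0,1]$-valued. Fix $Y \in \Xo$, so that $YX$ is a non-negative supermartingale by definition of $\Xo$. For $0 \leq s \leq t$, the $\F_s$-measurability of $A_s$ and the inequality $A_t \geq A_s$ give
\begin{align*}
\expec\bra{Y_t \chi_t \mid \F_s} &= \expec\bra{Y_t X_t (1 - A_t) \mid \F_s} \\
&\leq \expec\bra{Y_t X_t (1 - A_s) \mid \F_s} \\
&= (1 - A_s)\, \expec\bra{Y_t X_t \mid \F_s} \\
&\leq (1 - A_s)\, Y_s X_s \eq Y_s \chi_s,
\end{align*}
so $Y \chi$ is a supermartingale and $\chi \in \Xoo$. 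This is the only completely elementary step.

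\textbf{Hard direction ($\subseteq$).} Take $\chi \in \Xoo$. Since Theorem \ref{thm: asss_num} gives $1/\Xhat \in \Xo$, the deflated process $\chi / \Xhat$ is a non-negative supermartingale, and more: testing against a rich enough family of deflators in $\Xo$ one sees that $\chi$ satisfies the dual superhedging inequality $\sup_{Y \in \Xo} \expec\bra{Y_T \chi_T} \leq \chi_0$ for every stopping time $T$. This is exactly the hypothesis of the Optional Decomposition Theorem in its numeraire-portfolio formulation (Kramkov~\cite{MR1469917}, F\"ollmer--Kabanov~\cite{MR1651229}), so we obtain $X^0 \in \X$ and an adapted non-decreasing c\`adl\`ag process $C$ with $C_0 = 0$ such that $\chi = X^0 - C$ and in particular $X^0 \geq \chi$.

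\textbf{From additive to multiplicative form.} The remaining (and main) obstacle is to promote the additive decomposition $\chi = X^0 - C$ to the multiplicative one $\chi = X(1 - A)$ with $A$ non-decreasing; equivalently, to find $X \in \X$ dominating $\chi$ for which the ratio $\chi / X$ is pathwise non-increasing. The plan is to work in the ``num\'eraire world'': consider the nonnegative supermartingale $V \dfn \chi / \Xhat$ and apply the multiplicative (It\^o--Watanabe) decomposition, which on the set where $V > 0$ produces a strictly positive local martingale $N$ and a predictable, non-increasing process $R$ with $R_0 = 1$ such that $V = N \cdot R$. Setting $X \dfn N \Xhat$ and $A \dfn 1 - R$ yields the desired multiplicative decomposition, since one then checks via Theorem \ref{thm: asss_num} that $N \Xhat \in \X$ (using that $N$ inherits the representability of $V$ through the OD construction). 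The hard part is precisely this last point: ensuring that the local-martingale factor $N$ in the multiplicative decomposition of $V$ corresponds to a genuine wealth process in $\X$. This is handled by combining the representation of $\chi/\Xhat$ as a stochastic integral against $S/\Xhat$ supplied by the optional decomposition with the chain rule, and by extending the decomposition across the (polar) set $\set{V = 0}$ by setting $A \equiv 1$ there, which is consistent because nonnegative supermartingales are absorbed at zero.
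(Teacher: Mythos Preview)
The paper does not prove this theorem. It is stated with attribution to Kramkov and F\"ollmer--Kabanov and then used as a black box in the existence part of the proof of Theorem~\ref{thm: exist}; no argument is given in the paper. So there is no ``paper's own proof'' to compare your proposal against.

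As for your sketch on its own merits: the easy direction is clean and correct. In the hard direction, your plan---apply the classical (additive) Optional Decomposition Theorem and then upgrade to multiplicative form via the It\^o--Watanabe decomposition of $\chi/\Xhat$---is the natural route, but the step you yourself flag as ``the hard part'' is not actually carried out. The multiplicative decomposition of a nonnegative supermartingale produces a local martingale factor $N$, and you need $N\Xhat \in \X$, i.e., that $N$ is a stochastic integral of the right form with respect to $S/\Xhat$. This does \emph{not} follow from Theorem~\ref{thm: asss_num} alone (a generic positive local martingale in the num\'eraire world need not come from a wealth process), and your appeal to ``combining the representation \ldots\ with the chain rule'' is an assertion, not an argument. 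In the continuous-path setting one can make this work, for instance by writing $\chi = X^0 - C$ from the additive OD and then constructing $X$ as the stochastic exponential of $\int (X^0)^{-1}\,\ud X^0$ on $\{X^0 > 0\}$, checking that $A \dfn 1 - \chi/X$ is nondecreasing; but this requires care with the zero set of $X^0$ and with the distinction between adapted and predictable $A$. Since the paper treats the whole statement as a citation, none of this is expected here.
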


The result that follows enables one to construct a process that will be a candidate to have the  \num \ property in $\Xa$ for investment over the interval $[0, T]$.

\begin{lem} \label{lem: xa conv and bdd}
For any $\alpha \in \zo$and $t \in [0, \infty]$, the set $\set{Z_t \such  Z \in \Xa}$ is convex and bounded in $\prob$-measure, the latter meaning that $\lim_{K\to\infty} \sup_{Z\in \Xa} \prob \bra{Z_t > K} = 0$.
\end{lem}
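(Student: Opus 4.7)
The claim has two independent parts; convexity is elementary and the real work is boundedness in probability. For convexity, I would first establish the stronger statement that $\Xa$ itself is convex as a set of \emph{processes}, from which the claim about terminal values is immediate. Given $Z^1, Z^2 \in \Xa$ with associated strategies $H^1, H^2$ and $\lambda \in [0,1]$, set $Z \dfn \lambda Z^1 + (1-\lambda) Z^2 = 1 + \int_0^\cdot (\lambda H^1_t + (1-\lambda) H^2_t, \ud S_t)$. Since $Z \geq 0$, we have $Z \in \X$. For the drawdown condition, subadditivity of the running maximum gives $Z^*_t \leq \lambda (Z^1)^*_t + (1-\lambda) (Z^2)^*_t$, so that
\[
\alpha Z^*_t \leq \lambda \alpha (Z^1)^*_t + (1-\lambda) \alpha (Z^2)^*_t \leq \lambda Z^1_t + (1-\lambda) Z^2_t = Z_t,
\]
where the second inequality uses $Z^i \geq \alpha (Z^i)^*$. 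Hence $Z \in \Xa$.

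For boundedness in probability, the key observation is that, for $Z = \xa$ with $X \in \X$, the bound $X \leq X^*$ gives
\[
\xa = \alpha (X^*)^{1-\alpha} + (1-\alpha) X (X^*)^{-\alpha} \leq (X^*)^{1-\alpha},
\]
so it suffices to bound $X^*_t$ in probability, uniformly over $X \in \X$. I would then invoke condition (B1): the process $X/\Xhat$ is a nonnegative local martingale, hence a supermartingale, starting from $1$. Writing $X = (X/\Xhat) \cdot \Xhat$ yields the pathwise estimate $X^*_t \leq (X/\Xhat)^*_t \cdot \Xhat^*_t$. Applying Doob's maximal inequality to the nonnegative supermartingale $X/\Xhat$ gives $\prob[(X/\Xhat)^*_t > \sqrt{K}] \leq 1/\sqrt{K}$, uniformly in $X \in \X$. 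For the factor $\Xhat^*_t$, recall that in this subsection all processes are stopped at the a.s.\ finite stopping time $T$, so $\Xhat^*_t \leq \Xhat^*_T < \infty$ a.s., and in particular $\prob[\Xhat^*_t > \sqrt{K}] \to 0$ as $K \to \infty$ (this term does not depend on $X$).

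Combining, for any $K > 0$ and any $Z = \xa \in \Xa$,
\[
\prob \bra{Z_t > K^{1-\alpha}} \leq \prob \bra{X^*_t > K} \leq \prob \bra{(X/\Xhat)^*_t > \sqrt{K}} + \prob \bra{\Xhat^*_t > \sqrt{K}} \leq \frac{1}{\sqrt{K}} + \prob \bra{\Xhat^*_t > \sqrt{K}},
\]
and the right-hand side tends to zero as $K \to \infty$ uniformly in $Z \in \Xa$. The mildly delicate point is to confirm that the case $t = \infty$ is covered; here this is automatic from the stopping convention in force throughout Subsection \ref{subsec: proof of thm:exist}, which makes $\Xhat^*_\infty = \Xhat^*_T$ a.s.\ finite. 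No further obstacle arises.
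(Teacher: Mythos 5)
Your proof is correct. The convexity half is essentially the paper's own argument: subadditivity of the running maximum together with $Z^i \geq \alpha (Z^i)^*$ shows that $\Xa$ is convex as a set of processes, hence so is the set of its time-$t$ values.

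For boundedness in probability you take a genuinely different route. The paper's argument is shorter: since $\xa \in \X$ by \eqref{eq: xa_SDE}, and since under the convention that all processes are frozen after the a.s.\ finite horizon $T$ every time-$t$ value is the terminal value of a stopped element of $\X$, one has the inclusions $\set{Z_t \such Z \in \Xa} \subseteq \set{X_t \such X \in \X} \subseteq \set{X_\infty \such X \in \X}$; the last set is bounded in probability because $\sup_{X \in \X} \expec \bra{X_\infty / \Xhat_\infty} \leq 1$ (supermartingale property plus Fatou), Markov's inequality, and $\prob[\Xhat_\infty > 0] = 1$. You instead use the pointwise bound $\xa \leq (X^*)^{1-\alpha}$ and control the running maximum through the factorisation $X^*_t \leq (X/\Xhat)^*_t \, \Xhat^*_t$ and Doob's maximal inequality for the nonnegative supermartingale $X/\Xhat$. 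Both arguments rest on the same engine, condition (B1). Yours delivers the slightly stronger conclusion that the running suprema $\set{X^*_t \such X \in \X}$ are uniformly bounded in probability, at the cost of the $\sqrt{K}$-splitting; the paper's exploits the closure of $\X$ under stopping to reduce everything to terminal values and avoids maximal inequalities altogether. Your remark that the case $t = \infty$ is covered by the stopping convention in force in that subsection is exactly the right observation.
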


\begin{proof}
Fix $\alpha \in \zo$. Let $\lambda \in [0,1]$ and pick processes $X \in \X$ and $X' \in \X$. Since $\X$ is convex, $((1 - \lambda) \xa + \lambda \xa') \in \X$. Furthermore, since
\[
\alpha ((1 - \lambda) \xa + \lambda \xa')^* \leq (1 -\lambda) \alpha \xa^* + \lambda \alpha (\xa')^* \leq \alpha ((1 - \lambda) \xa + \lambda \xa'),
\]
we obtain $((1 - \lambda) \xa + \lambda \xa') \in \Xa$, which shows that $\Xa$ is convex for all $\alpha \in \zo$.

Furthermore, it holds that $\sup_{X \in \X} \expec \big[ X_\infty / \Xhat_\infty \big] \leq 1$ and, using Markov's inequality, we see that $\{ X_\infty / \Xhat_\infty \such X \in \X \}$ is bounded in $\prob$-measure. Since $\prob[\Xhat_\infty > 0] = 1$, the set $\set{X_\infty \such X \in \X}$ is bounded in $\prob$-measure; the same is then true for $\set{\xa_t \such X \in \X} \subseteq \set{X_t \such X \in \X} \subseteq \set{X_\infty \such X \in \X}$ for any value of $t \in [0, \infty]$.
\end{proof}

In the sequel, fix $\alpha \in (0,1)$. In view of Lemma \ref{lem: xa conv and bdd} above and Theorem 1.1(4) in \cite{MR2724418}, there exists a random variable $\chic_\infty$ in the closure in $\prob$-measure of $\set{X_\infty \such X \in \Xa}$ such that $\expec \bra{X_\infty / \chic_\infty} \leq 1$ holds for all $X \in \Xa$. 
Define the countable set $\Time = \set{k / 2^m \such k \in \Natural, \, m \in \Natural}$. A repeated application of Lemma A1.1 in \cite{MR1304434} combined with Lemma \ref{lem: xa conv and bdd} and a diagonalisation argument implies that one can find an $\Xa$-valued sequence $(X^n)_{\nin}$ such that $\chic_\infty = \lim_{n \to \infty} X^n_\infty$ and $\lim_{n \to \infty} X^n_t$ a.s. exists simultaneously for all $t \in \Time$. Define then $\chic_t = \lim_{n \to \infty} X^n_t$ for all $t \in \Time$. Since $T$ is a.s. finitely-valued and all processes are constant after $T$, it is straightforward that $\chic_\infty = \lim_{t \to \infty} \chic_t$ a.s.

Since $\expec [Y_t X^n_t \such \F_s] \leq Y_s X^n_s$ holds for all $\nin$, $Y \in \Xo$, $t \in \Time$ and $s \in \Time \cap [0, t]$, the conditional version of Fatou's lemma gives that $\expec [Y_t \chic_t \such \F_s] \leq Y_s \chic_s$ holds for all $Y \in \Xo$, $t \in \Time$ and $s \in \Time \cap [0, t]$. In particular, with $\Yhat \dfn 1 / \Xhat \in \Xo$, the process $(\Yhat_t \chic_t)_{t \in \Time}$ is a supermartingale in the corresponding stochastic basis with time-index $\Time$. Since $\prob[\inf_{s \in [0, t]} \Yhat_s > 0] = 1$ holds for all $t \in \Real_+$, the supermartingale convergence theorem implies that there exists a nonnegative \cadlag \ process $\chi$ such that\footnote{Note that $\chi$ is indeed the limit of $(X^n)_{\nin}$ in the ``Fatou'' sense. Fatou-convergence has proved to be extremely useful in the theory of Mathematical Finance; for example, see \cite{MR1469917}, \cite{MR1722287} and \cite{MR1883202}.} $\chi_s = \lim_{\Time \ni t \downarrow \downarrow s} \chic_t$ holds for all $s \in \Real_+$. (The notation ``$\lim_{\Time \ni t \downarrow \downarrow s}$'' denotes limit along times $t \in \Time$ that are \emph{strictly} greater than $s \in \Real_+$ and converge to $s$.) The fact that $\expec [Y_t \chic_t \such \F_s] \leq Y_s \chic_s $ holds for all $Y \in \Xo$, $t \in \Time$ and $s \in \Time \cap [0, t]$, right-continuity of the filtration $(\F_t)_{t \in \Real_+}$ and the conditional version of Fatou's lemma give that $\expec [Y_t \chi_t \such \F_s]  \leq  Y_s \chi_s$ holds for all $Y \in \Xo$, $t \in \Real_+$ and $s \in [0, t]$. Therefore, $\chi \in \Xoo$. Of course, $\chi_\infty = \chic_\infty = \lim_{t \to \infty} \chi_t$ a.s. holds. In view of Theorem \ref{thm: ODT}, it holds that $\chi = \xt (1 - A)$, where $\xt \in \X$ and $A$ is an adapted, nonnegative and nondecreasing \cadlag \ process with $0 \leq A \leq 1$. Furthermore, note that $\expec[X_\infty / \chi_\infty] \leq 1$ holds for all $X \in \xa$. 

Continuing, we shall show that $A \equiv 0$ and $\chi (= \xt) \in \Xa$. If $(X^n)_{\nin}$ is the $\Xa$-valued sequence such that $\chic_t = \lim_{n \to \infty} X^n_t$ holds a.s. simultaneously for all $t \in \Time$, we have that $X^n_t \geq \alpha X^n_s$ a.s. holds for all $t \in \Time$ and $s \in \Time \cap [0, t]$. By passing to the limit, and using the fact that $\Time$ is countable, we obtain that $\chic_t \geq \alpha \chic_s$ holds a.s. simultaneously for all $t \in \Time$ and $s \in \Time \cap [0, t]$. Therefore, $\chi_t \geq \alpha \chi_s$ holds a.s. simultaneously for all $t \in \Real_+$ and $s \in [0, t]$. Then,
\[
\xt_t = \frac{\chi_t}{1 - A_t} \geq  \frac{\chi_t}{1 - A_s} \geq \alpha  \frac{\chi_s}{1 - A_s} = \alpha \xt_s
\]
holds a.s. simultaneously for all $t \in \Real_+$ and $s \in [0, t]$. It follows that $\xt \in \Xa$. This implies, in particular, that $\expec[\xt_\infty / \chi_\infty] \leq 1$ has to hold. Since $\xt_\infty / \chi_\infty = 1 / (1 - A_\infty) \geq 1$, we obtain $\prob[A_\infty = 0] = 1$, i.e., $A \equiv 0$. Therefore, $\chi = \xt$ and $\expec[X_\infty / \xt_\infty] \leq 1$ holds for all $X \in \xa$, which concludes the proof of existence of a wealth process that possesses the \num \ property in $\Xa$ for investment over $[0, T]$.

\subsubsection{Uniqueness}

We proceed in establishing uniqueness of a process with the \num \ property in $\Xa$ for investment over the period $[0, \hor]$. We start by stating and proving a result that will be used again later.

\begin{lem} \label{lem: switching}
Let $Z \in \xa$, and let $\sigma$ be a stopping time such that $Z_\sigma = Z^*_\sigma$ a.s. holds on $\set{\sigma < \infty}$. Fix $X \in \Xa$ and $A \in \F_{\sigma}$ and define a new process\footnote{Note that, since we tacitly assume that $\alpha \in (0,1)$, $X_\sigma > 0$ a.s. holds on $\set{\sigma < \infty}$. Therefore, the process $\xi$ is well-defined.} $\xi \dfn Z \indic_{\dbraco{0, {\sigma}}} + \big( Z  \indic_{\Omega \setminus A} +  \pare{Z_{{\sigma}}  / X_{\sigma}  } X \indic_A \big) \indic_{\dbraco{{\sigma}, \infty}}$. Then, $\xi \in \Xa$.
\end{lem}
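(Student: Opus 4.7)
The plan is to verify the two defining properties of $\Xa$: that $\xi \in \X$ (i.e., $\xi$ is the wealth process of some predictable, $S$-integrable strategy starting from $1$), and that $\xi \geq \alpha \xi^*$ pathwise. The entire argument hinges on the hypothesis $Z_\sigma = Z^*_\sigma$ on $\{\sigma < \infty\}$, which says $\sigma$ is a time at which $Z$ attains its running maximum; this is what makes the ``switching'' at time $\sigma$ respect the drawdown constraint.

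For the first property, write $Z = 1 + \int_0^\cdot (H_t, \ud S_t)$ and $X = 1 + \int_0^\cdot (K_t, \ud S_t)$ for predictable, $S$-integrable strategies $H$ and $K$. Define the candidate strategy
\[
\widetilde{H}_t \dfn H_t \indic_{\dbraco{0, \sigma}}(t) + H_t \indic_{\Omega \setminus A} \indic_{\dbraoc{\sigma, \infty}}(t) + \pare{\frac{Z_\sigma}{X_\sigma}} K_t \indic_A \indic_{\dbraoc{\sigma, \infty}}(t).
\]
Since $\sigma$ is a stopping time, $A \in \F_\sigma$, and $Z_\sigma / X_\sigma$ is $\F_\sigma$-measurable (with $X_\sigma > 0$ a.s. on $\{\sigma < \infty\}$ because $\alpha > 0$), the process $\widetilde{H}$ is predictable and $S$-integrable. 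A direct computation shows that $1 + \int_0^\cdot (\widetilde{H}_t, \ud S_t)$ equals $Z$ on $\dbraco{0, \sigma}$, agrees with $Z$ on $\dbraco{\sigma, \infty}$ on $\Omega \setminus A$, and equals $(Z_\sigma / X_\sigma) X$ on $\dbraco{\sigma, \infty}$ on $A$ (using $(Z_\sigma / X_\sigma) X_\sigma = Z_\sigma$ for continuity at $\sigma$). Hence $\xi \in \X$.

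For the drawdown constraint, split into three regimes. On $\dbraco{0, \sigma}$, $\xi = Z$ and $\xi^* = Z^*$, so $\xi \geq \alpha \xi^*$ follows from $Z \in \Xa$. On $\dbraoc{\sigma, \infty}$ restricted to $\Omega \setminus A$, we have $\xi = Z$ throughout $\dbra{0, \cdot}$ and therefore $\xi^* = Z^*$ as well, so again $\xi \geq \alpha \xi^*$ by $Z \in \Xa$. The substantive case is on $\dbraoc{\sigma, \infty}$ restricted to $A$, where $\xi_t = (Z_\sigma / X_\sigma) X_t$. Here I compute
\[
\xi^*_t = \max\pare{Z^*_\sigma, \, \sup_{s \in [\sigma, t]} \pare{\frac{Z_\sigma}{X_\sigma}} X_s} = \max\pare{Z_\sigma, \, \pare{\frac{Z_\sigma}{X_\sigma}} \sup_{s \in [\sigma, t]} X_s},
\]
using $Z_\sigma = Z^*_\sigma$ crucially. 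Since $\sup_{s \in [\sigma, t]} X_s \geq X_\sigma$, the second term dominates, giving $\xi^*_t = (Z_\sigma / X_\sigma) \sup_{s \in [\sigma, t]} X_s$.

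Putting this together, on $A$ and for $t > \sigma$,
\[
\frac{\xi_t}{\xi^*_t} = \frac{X_t}{\sup_{s \in [\sigma, t]} X_s} \geq \frac{X_t}{X^*_t} \geq \alpha,
\]
where the first inequality uses $\sup_{s \in [\sigma, t]} X_s \leq X^*_t$ and the second uses $X \in \Xa$. This completes the verification that $\xi \in \Xa$. The main conceptual point—and the only real obstacle—is recognising that the identity $Z_\sigma = Z^*_\sigma$ is precisely what allows the ``reset'' at $\sigma$: if $\sigma$ were not a time of maximum of $Z$, then $Z^*_\sigma$ could strictly exceed $Z_\sigma$, and the rescaled process $(Z_\sigma / X_\sigma) X$, which starts at $Z_\sigma$ at time $\sigma$, could easily violate $\alpha \xi^*$ with respect to the prior maximum $Z^*_\sigma$.
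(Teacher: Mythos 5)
Your proof is correct and follows essentially the same route as the paper: the paper also observes that $\xi/\xi^* = Z/Z^*$ off $A$ and before $\sigma$, and that on $\dbraco{\sigma,\infty}\cap A$ one has $\xi/\xi^* = X/\sup_{s\in[\sigma,\cdot]}X_s \geq X/X^* \geq \alpha$, using $\xi^*_\sigma = Z^*_\sigma = Z_\sigma$ exactly as you do. You merely spell out in more detail the verification that $\xi\in\X$ (which the paper dismisses as straightforward) and the computation of $\xi^*$ after the switch; both are accurate.
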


\begin{proof}
It is straightforward to check that $\xi \in \X$. To see that $\xi \in \Xa$, note that $\xi / \xi^* =  Z / Z^* \geq \alpha$ holds on $\dbraco{0, \sigma} \, \cup \, \pare{\dbraco{\sigma, \infty} \cap (\Omega \setminus A)}$, while, using the fact that $\xi^*_{\sigma} = Z^*_{\sigma} = Z_{\sigma}$ holds a.s.on $\set{\sigma < \infty}$,
\[
\frac{\xi}{\xi^*} = \frac{X}{\sup_{t \in [\sigma, \cdot]}  X_t} \geq  \frac{X}{X^*} \geq \alpha, \quad \text{holds on } \dbraco{\sigma, \infty} \cap A.
\]
The result immediately follows.
\end{proof}

\begin{rem}
As can be seen via the use of simple counter-examples, if one drops the assumption that $\sigma$ is a time of maximum of $Z$ in the statement of Lemma \ref{lem: switching}, the resulting process $\xi$ may fail to satisfy the drawdown constraints. This is in direct contrast with the non-constrained case $\alpha = 0$, where any stopping time $\sigma$ will result in $\xi$ being an element of $\X$. It is exactly this fact, a consequence of the non-myopic structure of the drawdown constraints, which results in portfolios with the \num \ property that depend on the investment horizon.
\end{rem}

\begin{lem} \label{lem: uniq_help}
Let $Z \in \Xa$ be such that $\errh(X|Z) \leq 0$ holds for all $X \in \xa$, and suppose that $\sigma$ is a stopping time such that $Z_\sigma = Z^*_\sigma$ a.s. holds on $\set{\sigma < \infty}$. Then,
\[
\expec \bra{ \frac{X_{T}}{Z_{T}} \ \Big| \ \F_{T \wedge \sigma}} \leq \frac{X_{T \wedge \sigma}}{Z_{T \wedge \sigma}} \text{ holds a.s. for all } X \in \Xa.
\]
\end{lem}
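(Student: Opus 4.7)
The plan is to apply the switching construction of Lemma \ref{lem: switching} to generate a large family of admissible ``test'' processes in $\Xa$, then use the assumed \num \ property of $Z$ on each of them to extract the desired conditional-expectation inequality. Throughout, recall that for $\alpha \in (0,1)$ processes in $\Xa$ are strictly positive so all ratios below are well defined.

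First, for an arbitrary $A \in \F_\sigma$, let $\xi \in \Xa$ be the process defined from $Z$, $X$, $\sigma$ and $A$ by Lemma \ref{lem: switching}. A direct inspection of the formula for $\xi$ shows that $\xi_T = Z_T$ off $A \cap \set{\sigma \leq T}$, while $\xi_T = (Z_\sigma / X_\sigma) X_T$ on $A \cap \set{\sigma \leq T}$. The hypothesis $\errh(\xi|Z) \leq 0$, rewritten as $\expec [\xi_T/Z_T] \leq 1$, therefore reduces to
\[
\expec \bra{\indic_{A \cap \set{\sigma \leq T}} \frac{Z_\sigma X_T}{X_\sigma Z_T}} \leq \prob [A \cap \set{\sigma \leq T}], \quad \forall A \in \F_\sigma.
\]

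Second, since $\set{\sigma \leq T} \in \F_\sigma$ (a standard property for pairs of stopping times), replacing $A$ with $B \cap \set{\sigma \leq T}$ for arbitrary $B \in \F_\sigma$ and rewriting the above as a conditional-expectation statement yields
\[
\indic_{\set{\sigma \leq T}}\, \expec \bra{\frac{Z_\sigma X_T}{X_\sigma Z_T} \ \Big| \ \F_\sigma} \leq \indic_{\set{\sigma \leq T}}.
\]
Multiplying both sides by the strictly positive $\F_\sigma$-measurable quantity $X_\sigma / Z_\sigma$ gives $\expec \bra{X_T/Z_T \, \big| \, \F_\sigma} \leq X_\sigma / Z_\sigma$ on $\set{\sigma \leq T}$.

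Finally, I transfer the conditioning from $\F_\sigma$ to $\F_{T \wedge \sigma}$ by splitting $\Omega$ according to $\set{\sigma \leq T}$ and its complement. On $\set{T < \sigma}$ we have $T \wedge \sigma = T$, and so $\expec \bra{X_T/Z_T \, \big| \, \F_{T \wedge \sigma}}$ trivially equals $X_{T \wedge \sigma}/Z_{T \wedge \sigma}$. On $\set{\sigma \leq T}$ we have $T \wedge \sigma = \sigma$ and $\F_{T \wedge \sigma}$ agrees with $\F_\sigma$, so the inequality from the previous step gives precisely $\expec \bra{X_T/Z_T \, \big| \, \F_{T \wedge \sigma}} \leq X_{T \wedge \sigma}/Z_{T \wedge \sigma}$. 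The two cases combine to complete the proof.

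The only real content lies in the first step, where the switching lemma supplies enough admissible variations of $Z$; the remaining work is essentially bookkeeping between $\F_\sigma$ and $\F_{T \wedge \sigma}$ on the two events $\set{\sigma \leq T}$ and $\set{T < \sigma}$, which is the subtle but routine part.
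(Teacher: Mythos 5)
Your proposal is correct and follows essentially the same route as the paper: apply Lemma \ref{lem: switching} to build the test process $\xi$, use $\errh(\xi|Z)\leq 0$ to obtain the inequality over all $A\in\F_\sigma$, and convert this into the conditional-expectation bound, handling the event $\set{T<\sigma}$ trivially. The only difference is that you spell out the bookkeeping between $\F_\sigma$ and $\F_{T\wedge\sigma}$ slightly more explicitly than the paper does.
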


\begin{proof}
Fix $X \in \Xa$ and $A \in \F_{\sigma}$. Define the process $\xi \dfn Z \indic_{\dbraco{0, {\sigma}}} + \big( Z  \indic_{\Omega \setminus A} +  \pare{Z_{{\sigma}}  / X_{\sigma}  } X \indic_A \big) \indic_{\dbraco{{\sigma}, \infty}}$; by Lemma \ref{lem: switching}, $\xi \in \Xa$. Furthermore, it is straightforward to check that
\[
\frac{\xi_T}{Z_T} = \indic_{\Omega \setminus \pare{A \cap \set{\sigma \leq T}} } + \pare{ \frac{X_T}{Z_T} \frac{Z_{\sigma}}{X_{\sigma}} } \indic_{A \cap \set{\sigma \leq T}}.
\]
Therefore, the fact that $\errh(\xi|Z) \leq 0$ holds implies
\[
\expec \bra{ \frac{X_T}{Z_T} \frac{Z_{\sigma}}{X_{\sigma}} \indic_{A \cap \set{\sigma \leq T}}} \leq \prob[A \cap \set{\sigma \leq T} ].
\]
As the previous is true for all $A \in \F_{\sigma}$, we obtain that $\expec \bra{ X_{T} / Z_{T} \such \F_{\sigma}} \leq X_{\sigma} / Z_{\sigma}$ holds a.s. on $\set{\sigma \leq T}$ for all  $X \in \Xa$. Combined with the fact that $\expec \bra{ X_{T} / Z_{T} \such \F_{T}} = X_{T} / Z_{T}$ trivially holds a.s. on $\set{\sigma > T}$ for all  $X \in \Xa$, we obtain the result.
\end{proof}

We now proceed to the actual proof of uniqueness. Assume that both $\zt \in \Xa$ and $\xt \in \Xa$ have the \num \ property in $\Xa$ for investment over $[0, \hor]$. Since $\prob \bra{T < \infty} = 1$, Proposition \ref{prop: err} implies that $\prob \big[ \xt_T = \zt_T \big] = 1$. We shall show below that $\zt \leq \xt$ holds on $\dbra{0, T}$. Interchanging the roles of $\xt$ and $\zt$, it will also follow that $\xt \leq \zt$ holds on $\dbra{0, T}$, which will establish that $\xt = \zt$ holds on $\dbra{0, T}$ and will complete the proof of Theorem \ref{thm: exist}.

Since $\prob \big[ \xt_T = \zt_T \big] = 1$ and $\errh(X|\zt) \leq 0$ holds for all $X \in \Xa$, Lemma \ref{lem: uniq_help} above implies that $1 = \expec \big[ \xt_T / \zt_T \such \F_{T \wedge \sigma} \big] \leq \xt_{T \wedge \sigma} / \zt_{T \wedge \sigma}$ a.s. holds whenever $\sigma$ is a stopping time such that $\zt_\sigma = \zt^*_{\sigma}$ a.s. holds on $\set{\sigma < \infty}$. The fact that $\zt_{T \wedge \sigma} \leq \xt_{T \wedge \sigma}$ a.s. holds whenever $\sigma$ is a stopping time such that $\zt_\sigma = \zt^*_{\sigma}$ a.s. holds on $\set{\sigma < \infty}$ implies in a straightforward way that $\zt^* \leq \xt^*$ holds on $\dbra{0, T}$.

We now claim that $\prob \big[ \zt_T = \xt_T \big] = 1$ combined with $\zt^* \leq \xt^*$ holding on $\dbra{0, T}$ imply that $\zt \leq \xt$ on $\dbra{0, T}$, which will complete the proof. To see the last claim, for $\epsilon > 0$ define the stopping time
\[
T_\epsilon \dfn \inf \set{t \in \Real_+ \such \zt_t > (1 + \epsilon) \xt_t}.
\]
We shall show that $\prob \bra{T_\epsilon < T} = 0$; as this will hold for all $\epsilon > 0$, it will follow that $\zt \leq \xt$ holds on $\dbra{0, T}$.
Define a new process $\xt^\epsilon$ via
\[
\xt^\epsilon = \zt \indic_{\dbraco{0, T_\epsilon}} + \pare{ \frac{\zt_{T^\epsilon}}{\xt_{T^\epsilon}} }  \xt \indic_{\dbraco{T^\epsilon, \infty}} = \zt \indic_{\dbraco{0, T_\epsilon}} + \pare{ 1 + \epsilon } \xt \indic_{\dbraco{T^\epsilon, \infty}}.
\]
We first show that $\xt^\epsilon \in \Xa$. The fact that $\xt \in \X$ is obvious. Note also that $\xt^\epsilon \geq \alpha (\xt^\epsilon)^*$ clearly holds on $\dbraco{0, T^\epsilon}$, since $\zt \in \xa$. On the other hand,
\[
(\xt^\epsilon)^*_t = (\zt_{T^\epsilon})^* \vee \sup_{s \in [T^\epsilon, t]} \pare{(1 + \epsilon) \xt_t} \leq (1 + \epsilon) \xt^*_t \text{ holds for } t \geq T^\epsilon,
\]
the latter inequality holding in view of the fact that $\zt^* \leq \xt^*$. Therefore, for $t \geq T^\epsilon$ it holds that $\xt^\epsilon_t = (1 + \epsilon) \xt_t \geq (1 + \epsilon) \alpha \xt^*_t \geq \alpha (\xt^\epsilon)^*_t$. It follows that $\xt^\epsilon \geq \alpha (\xt^\epsilon)^*$ also holds on $\dbraco{T^\epsilon, \infty}$, which shows that $\xt^\epsilon \in \Xa$. Note that
\[
\xt^\epsilon_T = \zt_T \indic_{\set{T < T^\epsilon}} + (1 + \epsilon) \xt_T \indic_{\set{T^\epsilon \leq T}} =  \xt_T \indic_{\set{T < T^\epsilon}} + (1 + \epsilon) \xt_T \indic_{\set{T^\epsilon \leq T}},
\]
which implies that $\xt^\epsilon_T / \xt_T = 1 + \epsilon \indic_{\set{T^\epsilon \leq T}}$ and, as a consequence, $\errh (\xt^\epsilon|\xt) = \epsilon \prob \bra{T^\epsilon \leq T}$. In case $\prob \bra{T^\epsilon \leq T} > 0$, it would follow that $\xt$ fails to have the \num \ property in $\Xa$ for investment in $[0, T]$. Therefore, $\prob \bra{T^\epsilon \leq T} = 0$, which implies that $\zt \leq \xt$ holds on $\dbra{0, T}$, as already mentioned. The proof of Theorem \ref{thm: exist} is complete.

\subsection{Proof of Theorem \ref{thm: num}} \label{subsec: proof of thm:num}

The main tool towards proving assertion (1) of Theorem \ref{thm: num} is the following auxiliary result.

\begin{lem} \label{lem: conv of max}
For any $X \in \X$, $\lim_{t \to \infty} ( X^*_t / \Xhat^*_t )$ a.s. exists. Moreover, it a.s. holds that
\[
\lim_{t \to \infty} \pare{ \frac{X^*_t}{\Xhat^*_t} } = \lim_{t \to \infty} \pare{ \frac{X_t}{\Xhat_t} }.
\]
\end{lem}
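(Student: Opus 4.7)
The plan is to first invoke condition (B1) of Theorem \ref{thm: asss_num} to see that $X / \Xhat$ is a nonnegative local martingale, hence a supermartingale, and then apply the supermartingale convergence theorem to obtain a nonnegative random variable $L$ with $\lim_{t \to \infty} \pare{X_t / \Xhat_t} = L$ almost surely. Under Assumption \ref{ass: basic}, condition (B2) of Theorem \ref{thm: asss_num} gives $\lim_{t \to \infty} \Xhat_t = \infty$ almost surely, and in particular $\lim_{t \to \infty} \Xhat^*_t = \infty$ almost surely. Working pathwise on the full-probability event where all these assertions hold, it then suffices to show that $\lim_{t \to \infty} \pare{X^*_t / \Xhat^*_t} = L$.

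For this, I would fix $\epsilon > 0$ and choose a (random) time $t_0 \in \Real_+$ large enough that $(L - \epsilon)^+ \Xhat_t \leq X_t \leq (L + \epsilon) \Xhat_t$ holds for all $t \geq t_0$. Taking running suprema on the interval $[t_0, t]$ and combining with the values at $t_0$, I obtain
\[
X^*_t \leq X^*_{t_0} \vee \pare{(L + \epsilon) \Xhat^*_t}, \qquad X^*_t \geq (L - \epsilon)^+ \sup_{s \in [t_0, t]} \Xhat_s
\]
for all $t \geq t_0$. Since $\Xhat^*_t \to \infty$, the first inequality forces $\limsup_{t \to \infty} \pare{X^*_t / \Xhat^*_t} \leq L + \epsilon$. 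The identity $\Xhat^*_t = \Xhat^*_{t_0} \vee \sup_{s \in [t_0, t]} \Xhat_s$ combined with $\sup_{s \in [t_0, t]} \Xhat_s \to \infty$ yields $\sup_{s \in [t_0, t]} \Xhat_s / \Xhat^*_t \to 1$, and the second inequality then gives $\liminf_{t \to \infty} \pare{X^*_t / \Xhat^*_t} \geq (L - \epsilon)^+$. Sending $\epsilon \downarrow 0$ produces the desired equality.

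I do not anticipate a genuine obstacle here: the entire argument is an $\epsilon$-estimate powered by the divergence of $\Xhat$, which is what converts pointwise convergence of the ratio $X / \Xhat$ into convergence of the ratio of running maxima, because any fixed finite contribution from the interval $[0, t_0]$ is asymptotically swamped by $\Xhat^*_t$. The only subtlety worth flagging is the need to use $(L - \epsilon)^+$ in the lower bound, so that the event $\set{L = 0}$ is absorbed automatically rather than requiring a separate case analysis.
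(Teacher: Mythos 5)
Your proof is correct, but it takes a different route from the paper's. The paper works with the random times of last maximum, $\rhoh_t \dfn \sup\{s\in[0,t] : \Xhat_s=\Xhat^*_s\}$ and $\rho_t \dfn \sup\{s\in[0,t] : X_s=X^*_s\}$, writing $\Xhat^*_t=\Xhat_{\rhoh_t}$ and $X^*_t=X_{\rho_t}$ and then evaluating the convergent ratio $X/\Xhat$ along these times; this forces a case split on whether $\rho_t\to\infty$ or stays bounded (the latter event being contained in $\{\sup_t X_t<\infty\}$, where both limits are $0$ because $\Xhat^*\to\infty$). Your argument replaces the sampling-at-maxima device by a uniform two-sided bound $(L-\epsilon)^+\Xhat_t\leq X_t\leq(L+\epsilon)\Xhat_t$ on $[t_0,\infty)$ and lets the divergence of $\Xhat^*$ swallow the contribution of $[0,t_0]$; the use of $(L-\epsilon)^+$ indeed absorbs the event $\{L=0\}$ and so no case analysis is needed. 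Both proofs rest on the same two pillars --- the supermartingale convergence of $X/\Xhat$ (condition (B1)) and $\lim_{t\to\infty}\Xhat_t=\infty$ (condition (B2)) --- so neither is more general than the other; yours is arguably more elementary and self-contained, while the paper's is shorter on the page and reuses the times-of-maximum machinery that recurs throughout (e.g.\ in the proof of Theorem \ref{thm: num}). One small point worth making explicit in your write-up: the limit $L$ of the nonnegative supermartingale $X/\Xhat$ is a.s.\ finite by the convergence theorem, which is what legitimises choosing $t_0$ with $X_t\leq(L+\epsilon)\Xhat_t$ for all $t\geq t_0$.
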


\begin{proof}
For $t \in \Real_+$, define the $[0,t]$-valued random time $\rhoh_t \dfn \sup \{ s \in [0, t] \such \Xhat_s = \Xhat^*_s \}$; then, $\Xhat^*_t = \Xhat_{\rhoh_t }$. Note that $\prob \bra{\uparrow \lim_{t \to \infty} \rhoh_t = \infty} = 1$ holds in view of Assumption \ref{ass: basic}. It follows that, for any $X \in \X$, it a.s. holds that
\begin{equation} \label{eq: conv of max inf}
\liminf_{t \to \infty} \pare{\frac{X^*_t}{\Xhat^*_t}} = \liminf_{t \to \infty} \pare{\frac{X^*_t}{\Xhat^*_{\rhoh_t}}} \geq \liminf_{t \to \infty} \pare{\frac{X_{\rhoh_t}}{\Xhat_{\rhoh_t}}} =  \lim_{t \to \infty} \pare{ \frac{X_t}{\Xhat_t} }.
\end{equation}

In what follows, fix $X \in \X$. For $t \in \Real_+$ define $\rho_t \dfn \sup \set{s \in [0, t] \such X_s = X^*_s}$, which is a $[0,t]$-valued random time. For each $t \in \Real_+$, $X^*_t = X_{\rho_t}$. Note that the set-inclusions $\set{\uparrow \lim_{t \to \infty} \rho_t < \infty} \subseteq \set{\sup_{t \in \Real_+} X_t < \infty} \subseteq \{ \rri (X | \Xhat) = -1 \}$ are valid a.s., the last in view of Assumption \ref{ass: basic}. Therefore,
\begin{equation} \label{eq: conv of max sup 1}
\lim_{t \to \infty} \pare{\frac{X_t}{\Xhat_t}} = \lim_{t \to \infty} \pare{\frac{X^*_t}{\Xhat^*_t}} = 0 \text{ holds on } \set{\lim_{t \to \infty} \rho_t < \infty}.
\end{equation}
Furthermore,
\begin{equation} \label{eq: conv of max sup 2}
\limsup_{t \to \infty} \pare{\frac{X^*_t}{\Xhat^*_t}} = \limsup_{t \to \infty} \pare{\frac{X^*_{\rho_t}}{\Xhat^*_t}} \leq \limsup_{t \to \infty} \pare{\frac{X_{\rho_t}}{\Xhat_{\rho_t}}} =  \lim_{t \to \infty} \pare{ \frac{X_t}{\Xhat_t} } \text{ holds on } \set{\lim_{t \to \infty} \rho_t = \infty}.
\end{equation}
The claim now readily follows from \eqref{eq: conv of max inf}, \eqref{eq: conv of max sup 1}, and \eqref{eq: conv of max sup 2}.
\end{proof}

\begin{proof}[Proof of Theorem \ref{thm: num}, statement (1)]
In the sequel, fix $X \in \X$ and assume that $\alpha \in (0,1)$. Results for the case $\alpha = 0$ are well-understood and not discussed.

To ease notation, let $D \dfn X / X^*$ and $\Dh \dfn \Xhat / \Xhat^*$. The process $D$ is $[0,1]$-valued and $\Dh$ is $(0,1]$-valued. Observe that
\[
\frac{\xa}{\xha} = \frac{\alpha (X^*)^{1 - \alpha} + \alpha (X^*)^{- \alpha} X}{\alpha (\Xhat^*)^{1 - \alpha} + \alpha (\Xhat^*)^{- \alpha} \Xhat} = \pare{\frac{X^*}{\Xhat^*}}^{1 - \alpha} \pare{\frac{\alpha + (1 - \alpha) D}{\alpha + (1 - \alpha) \Dh}}.
\]
In view of Lemma \ref{lem: conv of max}, $\lim_{t \to \infty} ( X^*_t / \Xhat^*_t)^{1 - \alpha} = (1 + \rri(X|\Xhat))^{1 - \alpha}$ holds. Firstly, the fact that
\[
\frac{\alpha + (1 - \alpha) D}{\alpha + (1 - \alpha) \Dh} \leq \frac{1}{\alpha}
\]
implies that $\xa / \xha \leq (1/\alpha) (X^* / \Xhat^*)^{1 - \alpha}$, which readily gives \eqref{eq: asympt num rel} on $\{ \rri(X|\Xhat) = -1 \}$. Furthermore, the facts that $0 \leq D \leq 1$, $0 < \Dh \leq 1$ and $\lim_{t \to \infty} (D_t / \Dh_t) = 1$, the latter holding a.s. on $\{ \rri(X|\Xhat) > -1 \}$ in view of Lemma \ref{lem: conv of max}, imply that
\[
\limsup_{t \to \infty} \abs{\frac{\alpha + (1 - \alpha) D_t}{\alpha + (1 - \alpha) \Dh_t} - 1} \leq \frac{1 - \alpha}{\alpha} \limsup_{t \to \infty} |D_t - \Dh_t| = 0 \text{ holds on } \set{\rri(X|\Xhat) > -1}.
\]
Therefore, $\lim_{t \to \infty} ( \xa_t / \xha_t ) = (1 + \rri(X|\Xhat))^{1 - \alpha}$ also holds on the event $\{\rri(X|\Xhat) > -1 \}$, which completes the proof of statement (1) of Theorem \ref{thm: num}.
\end{proof}

\begin{proof}[Proof of Theorem \ref{thm: num}, statement (2)]
Let $\tau$ be a time of maximum of $\Xhat$. Recall the definition of the stopping times $(\tau_\ell)_{\ell \in \Real_+}$ from \eqref{eq: level_crossing}. In view of statement (1) of Theorem \ref{thm: num},
\begin{equation} \label{eq: lim through tau_ell}
\rrt(\xa|\xha) = \lim_{\ell \to \infty} \pare{ \frac{\xa_{\tau \wedge \tau_\ell}}{\xha_{\tau \wedge \tau_\ell}} } - 1
\end{equation}
a.s. holds. Now, observe that $\tau \wedge \tau_\ell$ is a time of maximum of $\Xhat$ for each $\ell \in \Real_+$; therefore, $\xha_{\tau \wedge \tau_\ell} = (\Xhat_{\tau \wedge \tau_\ell})^{1 - \alpha} = (\Xhat^*_{\tau \wedge \tau_\ell})^{1 - \alpha}$.  It then follows that
\begin{equation} \label{eq: ratio}
\frac{\xa_{\tau \wedge \tau_\ell}}{\xha_{\tau \wedge \tau_\ell}} = \alpha \pare{ \frac{X^*_{\tau \wedge \tau_\ell} }{\Xhat^*_{\tau \wedge \tau_\ell}} }^{1 - \alpha} + (1 - \alpha) \pare{ \frac{X_{\tau \wedge \tau_\ell} }{\Xhat_{\tau \wedge \tau_\ell}} } \pare{ \frac{X^*_{\tau \wedge \tau_\ell} }{\Xhat^*_{\tau \wedge \tau_\ell}} }^{- \alpha}.
\end{equation}
Define $\chi \dfn X / \Xhat$ and, in the obvious way, $\chi^* \dfn \sup_{t \in [0, \cdot]} (X_t / \Xhat_t)$. For $y \in \Real_+$, the function $[y, \infty) \ni z \mapsto \alpha z^{1-\alpha} + (1 - \alpha) y z^{-\alpha}$ is nondecreasing, which can be shown upon simple differentiation. With $y = \chi_{\tau \wedge \tau_\ell}$, $z_1 = X^*_{\tau \wedge \tau_\ell} / \Xhat^*_{\tau \wedge \tau_\ell} = X^*_{\tau \wedge \tau_\ell} / \Xhat_{\tau \wedge \tau_\ell} \geq y$ and $z_2 = \chi^*_{\tau \wedge \tau_\ell} \geq X^*_{\tau \wedge \tau_\ell} / \Xhat^*_{\tau \wedge \tau_\ell} = z_1$, \eqref{eq: ratio} then implies that
\[
\frac{\xa_{\tau \wedge \tau_\ell}}{\xha_{\tau \wedge \tau_\ell}} \leq \alpha \pare{ \chi^*_{\tau \wedge \tau_\ell} }^{1 - \alpha} + (1 - \alpha) \chi_{\tau \wedge \tau_\ell} \pare{ \chi^*_{\tau \wedge \tau_\ell} }^{- \alpha}.
\]
Define the process $\phi \dfn \alpha \pare{ \chi^* }^{1 - \alpha} + (1 - \alpha) \chi \pare{ \chi^* }^{- \alpha}$; then, by the last estimate and \eqref{eq: lim through tau_ell},
\[
\rrt(\xa|\xha) \leq \liminf_{\ell \to \infty} \pare{ \phi_{\tau \wedge \tau_\ell} } - 1.
\]
Since $\int_{0}^\infty \indic_{\set{\chi_t < \chi^*_t}} \ud \chi^*_t = 0$ a.s. holds, a straightforward use of It\^o's formula gives
\[
\phi = 1 + \int_0^\cdot (1 - \alpha) \pare{ \chi_t^* }^{- \alpha} \ud \chi_t;
\]
since $\chi$ is a local martingale, $\phi$ is a local martingale as well. Since $\phi$ is nonnegative, it is a supermartingale with $\phi_0 = 1$, which implies that $\expec \bra{\phi_{\tau \wedge \tau_\ell}} \leq 1$ holds for all $\ell \in \Real_+$. It follows that
\[
\errt(\xa|\xha) = \expec \bra{\rrt(\xa|\xha)} \leq \expec \bra{\liminf_{\ell \to \infty} \pare{\phi_{\tau \wedge \tau_\ell}} } - 1 \leq \liminf_{\ell \to \infty} \pare{ \expec \bra{\phi_{\tau \wedge \tau_\ell}} } - 1 \leq 0,
\]

Now, let $\sigma$ be a time of maximum of $\Xhat$ with $\sigma \leq \tau$. Fix $X \in \X$ and $A \in \F_{\sigma}$; by Lemma \ref{lem: switching}, the process $\xi \dfn \xha \indic_{\dbraco{0, {\sigma}}} + \big( \xha  \indic_{\Omega \setminus A} + \xha_{{\sigma}} \pare{\xa  / \xa_{\sigma}  } \indic_A \big) \indic_{\dbraco{{\sigma}, \infty}}$ is an element of $\Xa$. Furthermore, it is straightforward to check that
\[
\rrt(\xi|\xha) = \pare{\frac{1 + \rrt(\xa|\xha)}{1 + \rr_{\sigma}(\xa|\xha)} - 1} \indic_{A \cap \set{\sigma < \infty}}.
\]
Since $\errt(\xi|\xha) \leq 0$ has to hold by the result previously established, we obtain
\[
\expec \bra{\frac{1 + \rrt(\xa|\xha)}{1 + \rr_{\sigma}(\xa|\xha)} \indic_{A \cap \set{\sigma < \infty}}} \leq  \prob[A \cap \set{\sigma < \infty}].
\]
Since the previous holds for all $A \in \F_\sigma$, we obtain that $\expec \big[ \rrt(\xa|\xha) \such \F_\sigma \big] \leq \rr_{\sigma}(\xa|\xha)$ holds on $\set{\sigma < \infty}$. On $\set{\sigma = \infty}$, we have $\sigma=\tau$ and $\expec \big[ \rrt(\xa|\xha) \such \F_\sigma \big] = \rri(\xa|\xha) = \rr_{\sigma}(\xa|\xha)$. Therefore, $\expec \big[ \rrt(\xa|\xha) \such \F_\sigma \big] \leq \rr_{\sigma}(\xa|\xha)$ holds.
\end{proof}

\subsection{Proof of Theorem \ref{thm: asympt_growth}}\label{subsec: proof of thm: asympt_growth}
The fact that $\lim_{t \to \infty} \pare{\log\big( \xhat_t \big) / G_t} = 1$ holds on the event $\set{G_\infty = \infty}$ was established in the proof of Theorem \ref{thm: asss}. Again, in view of Theorem \ref{thm: asss}, condition (A2) of Assumption \ref{ass: basic} is equivalent to $\prob \bra{G_\infty = \infty} = 1$; therefore, a.s.,
\[
\lim_{t \to \infty} \pare{\frac{1}{G_t} \log(\Xhat_t)} = 1.
\]
Observe that by concavity of the function $\Real_+ \ni x \mapsto x^{1-\alpha}$, $\xha\geq \Xhat^{1-\alpha}$ holds. Combining this with the above yields, a.s.,
\[
\liminf_{t \to \infty} \pare{\frac{1}{G_t} \log(\xha_t)} \geq 1 - \alpha.
\]
On the other hand, since $G$ is nondecreasing and $\xha$ achieves maximum values at the times $(\tau_\ell)_{\ell \in \Real_+}$ of \eqref{eq: level_crossing}, it holds a.s.\ that
\begin{align*}
\limsup_{t \to \infty} \pare{\frac{1}{G_t} \log(\xha_t)} &= \limsup_{\ell \to \infty} \pare{\frac{1}{G_{\tau_\ell}} \log(\xha_{\tau_\ell})} \\
&= (1 - \alpha) \limsup_{\ell \to \infty} \pare{\frac{ \ell}{G_{\tau_\ell}}} \\
&= (1 - \alpha) \limsup_{\ell \to \infty} \pare{\frac{1}{G_{\tau_\ell}} \log(\Xhat_{\tau_\ell})} = 1 - \alpha.
\end{align*}
It follows that, a.s.,
\[
\lim_{t \to \infty} \pare{\frac{1}{G_t} \log(\xha_t)} = 1 - \alpha.
\]

Fix $Z \in \Xa$. The full result of Theorem \ref{thm: asympt_growth} now follows immediately upon noticing that, a.s.,
\[
\limsup_{t \to \infty} \pare{\frac{1}{G_t} \log \pare{\frac{Z_t}{\xha_t} }} \leq 0,
\]
which is valid in view of the facts that $\prob[G_\infty = \infty]= 1$ and $\prob \big[ \rri(Z|\xha) < \infty \big]  = 1$, the latter following from the inequality $\erri(Z|\xha) \leq 0$, which was established in Theorem \ref{thm: num}.

\subsection{Proof of Proposition \ref{prop: re_asympt_num}}\label{subsec: proof of prop: re_asympt_num}
Upon passing to a subsequence of $(T_n)_{\nin}$ if necessary, we may assume without loss of generality that $\prob \bra{\limn T_n = \infty} = 1$. Then, by Theorem \ref{thm: num}, $\lim_{t \to \infty} \big( \xha_t / \xa_t \big)$ exists a.s. in $(0, \infty]$ and a use of Fatou's lemma gives
\[
\expec \bra{\lim_{t \to \infty} \pare{ \frac{\xha_t}{\xa_t} } } = \expec \bra{\liminf_{n \to \infty} \pare{ \frac{\xha_{T_n}}{\xa_{T_n}} } } \leq \liminf_{n \to \infty} \pare{\expec \bra{ \frac{\xha_{T_n}}{\xa_{T_n}} } } = 1 + \liminf_{n \to \infty} \err_{T_n} (\xha|\xa) \leq 1.
\]
 Since we have both $\expec \big[ \lim_{t \to \infty} \big( \xha_t / \xa_t \big) \big] \leq 1$ and $\expec \big[ \lim_{t \to \infty} \big( \xa_t / \xha_t \big) \big] \leq 1$ holding, Jensen's inequality implies that $\lim_{t \to \infty} \big( \xa_t / \xha_t \big) = 1$ a.s. holds. By Theorem \ref{thm: num}, $\lim_{t \to \infty} \big( X_t / \Xhat_t \big) = 1$ a.s. holds. This fact, combined with the conditional form of Fatou's lemma and the supermartingale property of $X /\Xhat$ gives $X_t / \Xhat_t \geq 1$ a.s.\ for each $t \in \Real_+$. Combined with $\expec \big[ X_t / \Xhat_t \big] \leq 1$, this gives $\Xhat_t = X_t$ a.s.\ for all $t \in \Real_+$. The path-continuity of the process $X/\Xhat$ implies that $X = \Xhat$, i.e., that $\xa = \xha$.

\subsection{Proof of Theorem \ref{thm: turnpike}} \label{subsec: proof of thm:turnpike}

In the setting of Definition \ref{dfn: emery}, consider a sequence $(\xi^n)_{\nin}$ of semimartingales and another semimartingale $\xi$. It is straightforward to check that $\Sln \xi^n = \xi$ holds if and only if there exists a nondecreasing sequence $(\tau_k)_{\kin}$ of finitely-valued stopping times with $\prob \bra{\lim_{k \to \infty} \tau_k = \infty} = 1$ such that $\Stkn \xi^n = \xi$ holds for all $\kin$. For the proof of Theorem \ref{thm: turnpike}, we shall use the previous observation along the sequence $(\tau_\ell)_{\ell \in \Natural}$ of finitely-valued stopping times defined in \eqref{eq: level_crossing}. Therefore, in the course of the proof, we keep $\ell \in \Real_+$ fixed and will show that $\Stln \xta^n = \xha$.

As a first step, we shall show that $\plimn \xta^n_{\tau_\ell} = \xha_{\tau_\ell}$, where ``$\plim$'' denotes limit in probability. For each $\nin$, consider the process $\xi^n \dfn \xha \indic_{\dbraco{0, {\tau_\ell}}} + \xha_{{\tau_\ell}} \pare{\xta^n  / \xta^n_{\tau_\ell}  } \indic_{\dbraco{{\tau_\ell}, \infty}}$. By Lemma \ref{lem: switching}, $\xi^n \in \Xa$ for all $\nin$. Furthermore, note that
\[
\rr_{T_n} (\xi^n | \xta) = \rr_{\tau_\ell} (\xha | \xta^n) \indic_{\set{\tau_\ell < T_n}} + \rr_{T_n} (\xha | \xta^n) \indic_{\set{T_n \leq \tau_\ell}} = \rr_{T_n \wedge \tau_\ell} (\xha | \xta^n).
\]
Using the previous relationship, the assumptions of Theorem \ref{thm: turnpike} give $\err_{T_n \wedge \tau_\ell} (\xha | \xta^n) \leq 0$ for all $\nin$. Furthermore, by Theorem \ref{thm: num}, $\err_{\tau_\ell} (\xta^n | \xha) \leq 0$ holds for all $\nin$. Therefore, $\expec\big[ \rr_{T_n \wedge \tau_\ell} (\xha | \xta^n) + \rr_{\tau_\ell} (\xta^n | \xha) \big] \leq 0$ holds for all $\nin$. Observe that the equality $\rr_{T_n \wedge \tau_\ell} (\xha | \xta^n) + \rr_{\tau_\ell} (\xta^n | \xha ) = (\xta^n_{\tau_\ell} - \xha_{\tau_\ell})^2 / (\xha_{\tau_\ell} \xta^n_{\tau_\ell})$ holds on $\set{\tau_\ell < T_n}$, and that the inequality $\rr_{T_n \wedge \tau_\ell} (\xha | \xta^n) + \rr_{\tau_\ell} (\xta^n | \xha) \geq -2$ is always true; therefore,
\[
\rr_{T_n \wedge \tau_\ell} (\xha | \xta^n) + \rr_{\tau_\ell} (\xta^n | \xha) \geq \frac{(\xta^n_{\tau_\ell} - \xha_{\tau_\ell})^2}{\xha_{\tau_\ell} \xta^n_{\tau_\ell}} \indic_{\set{\tau_\ell < T_n}} - 2 \indic_{\set{T_n \leq \tau_\ell}}.
\]
Since $\expec\big[ \rr_{T_n \wedge \tau_\ell} (\xha | \xta^n) + \rr_{\tau_\ell} (\xta^n | \xha) \big] \leq 0$ holds for all $\nin$ and $\limn \prob \bra{T_n \leq \tau_\ell} = 0$ holds in view of Theorem \ref{thm: asss}, we obtain that
\[
\limn \expec \bra{\frac{(\xta^n_{\tau_\ell} - \xha_{\tau_\ell})^2}{\xha_{\tau_\ell} \xta^n_{\tau_\ell}} \indic_{\set{\tau_\ell < T_n}}} = 0.
\]
Using again the fact that $\limn \prob \bra{\tau_\ell < T_n} = 1$, we obtain that $\plimn \xta^n_{\tau_\ell} = \xha_{\tau_\ell}$.

Given $\plimn \xta^n_{\tau_\ell} = \xha_{\tau_\ell}$, we now proceed in showing that $\plimn (\xt^n_{\tau_\ell} / \xhat_{\tau_\ell}) = 1$. We use some arguments similar to the first part of the proof of statement (2) of Theorem \ref{thm: num}, where the reader is referred to for certain details that are omitted here. Define $\chi^n \dfn \xt^n / \Xhat$ and $(\chi^n)^* \dfn \sup_{t \in [0, \cdot]} (\xt^n_t / \Xhat_t)$. It then follows that
\begin{equation} \label{eq: sandw_turnpike}
\frac{\xta^n_{\tau_\ell}}{\xha_{\tau_\ell}} \leq \alpha \pare{ (\chi^n)^*_{\tau_\ell} }^{1 - \alpha} + (1 - \alpha) \chi^n_{\tau_\ell} \pare{ (\chi^n)^*_{\tau_\ell} }^{- \alpha} =: \phi^n_{\tau_\ell},
\end{equation}
where the process $\phi^n \dfn \alpha \pare{ (\chi^n)^* }^{1 - \alpha} + (1 - \alpha) \chi^n \pare{ (\chi^n)^* }^{- \alpha}$ is a nonnegative local martingale for each $\nin$. We claim that $\plimn \phi^n_{\tau_\ell} = 1$. To see this, first observe that $\pliminfn \phi^n_{\tau_\ell} \geq 1$ holds, in the sense that $\liminfn \prob \bra{\phi^n_{\tau_\ell} > 1 - \epsilon} \geq \liminfn \prob \bra{\xta^n_{\tau_\ell} /  \xha^n_{\tau_\ell} > 1 - \epsilon} = 1$ holds for all $\epsilon \in (0,1)$. Then, given that $\pliminfn \phi^n_{\tau_\ell} \geq 1$, if $\limsupn \prob \bra{\phi^n_{\tau_\ell} > 1 + \epsilon} > 0$ was true, one would conclude that $\limsupn \expec \bra{\phi^n_{\tau_\ell}} > 1$, which contradicts the fact that $\expec \bra{\phi^n_{\tau_\ell}} \leq \phi^n_{0} = 1$ holds for all $\nin$. Therefore, $\limsupn \prob \bra{\phi^n_{\tau_\ell} > 1 + \epsilon} = 0$ holds for all $\epsilon \in (0,1)$, which combined with $\pliminfn \phi^n_{\tau_\ell} \geq 1$ gives $\plimn \phi^n_{\tau_\ell} = 1$. To recapitulate, the setting is the following: $(\phi^n)_{\nin}$ is a sequence of nonnegative local martingales with $\phi^n_0 = 1$, and  $\plimn \phi^n_{\tau_\ell} = 1$ holds. In that case, Lemma 2.11 in \cite{Kar11} implies that $\plimn (\phi^n)^*_{\tau_\ell} = 1$ holds as well. Note that $(\phi^n)^* = ((\chi^n)^*)^{1 - \alpha}$, so that $\plimn (\chi^n)^*_{\tau_\ell} = 1$ holds as well. Then, the bounds in \eqref{eq: sandw_turnpike} imply that $\plimn \chi^n_{\tau_\ell} = 1$.

Once again, we are in the following setting: $(\chi^n)_{\nin}$ is a sequence of nonnegative local martingales with $\chi^n_0 = 1$, and  $\plimn \chi^n_{\tau_\ell} = 1$ holds.
An application of Proposition 2.7 and Lemma 2.12 in \cite{Kar11} gives that $\Stln \chi^n = 1$, which also implies that $\Stln \xt^n = \xhat$ by Proposition 2.10 in \cite{Kar11}. This implies that $\limn \prob \big[\sup_{t \in [0, \tau_\ell]} |(\xt^n)^*_t - \xhat^*_t| > \epsilon \big] = 0$ also holds for all $\epsilon > 0$ by Remark \ref{rem: ucp_conv}. Therefore, by \eqref{eq: xa_SDE} and Lemma 2.9 in \cite{Kar11}, we obtain that $\Stln \xta^n = \xha$, which completes the proof of Theorem \ref{thm: turnpike}.

\section{A Cautionary Note Regarding Theorem \ref{thm: turnpike}} \label{sec: careful_turnpike}

In this Section, we elaborate on the point that is made in Remark \ref{rem: careful_turnpike} via use of an example. In the discussion that follows, fix $\alpha \in (0,1)$. The model is the general one described in Subsection \ref{subsec: model}, and Assumption \ref{ass: basic} is always in force.

Let $T_{1/2} = 0$ and, using induction, for $\nin$ define
\[
T_{n} \dfn \inf \{ t \in (T_{n - 1/2}, \infty) \such \Xhat_t = \alpha \Xhat^*_t \}, \quad T_{n + 1/2} \dfn \inf \{ t \in (T_n, \infty) \such \Xhat_t = \Xhat^*_{T_n} \}.
\]
(In the setting of Example \ref{exa: time-horizon_matters}, $T$ there is exactly $T_1$ defined above.) Note the following: $T_{n - 1/2}$ is a time of maximum of $\Xhat$ for all $\nin$, $(T_{k/2})_{\kin}$ is an increasing sequence, and $\prob \bra{\limn T_n = \infty} = 1$ holds. Under Assumption \ref{ass: basic}, Lemma \ref{lem: fin_dd_time} implies that $\prob \bra{T_n < \infty} = 1$ for all $\nin$.

For each $\nin$, one can explicitly describe the wealth process $\xta^n$ that has the \num \ property in the class $\Xa$ for investment over the interval $[0, T_n]$. In words, $\xta^n$ will follow $\xha$ until time $T_{n - 1/2}$, then switch to investing like the \num \ portfolio $\Xhat$ up to time $T_n$ and, since at time $T_n$ one hits the hard drawdown constraint, $\xta^n$ will remain constant from $T_n$ onwards. In mathematical terms, define
\begin{align*}
\xta^n :&= \xha \indic_{\dbraco{0, T_{n - 1/2}}} + \pare{\frac{\xha_{T_{n - 1/2}}}{\xhat_{T_{n - 1/2}}} } \xhat \indic_{\dbraco{T_{n - 1/2}, T_n}} +  \pare{\frac{\xha_{T_{n - 1/2}}}{\xhat_{T_{n - 1/2}}} } \xhat_{T_n} \indic_{\dbraco{T_n, \infty}}\\
&= \xha \indic_{\dbraco{0, T_{n - 1/2}}} + \pare{\xhat_{T_{n - 1/2}} }^{- \alpha} \xhat \indic_{\dbraco{T_{n - 1/2}, T_n}} +  \pare{\xhat_{T_{n - 1/2}} }^{- \alpha} \alpha \xhat^*_{T_n} \indic_{\dbraco{T_n, \infty}},
\end{align*}
where for the equality in the second line the facts that $\xha_{T_{n - 1/2}} = (\xhat_{T_{n - 1/2}})^{1 - \alpha}$ and $\xhat_{T_n} = \alpha \xhat^*_{T_n}$ were used. It is straightforward to check that $\xta \in \Xa$, in view of the definition of the stopping times $(T_{k/2})_{\kin}$. Pick any $X \in \X$. The global (in time) \num \ property of $\Xhat$ in $\X$ will give
\[
\expec \bra{ \frac{\xa_{T_{n}}}{ \xta^n_{T_{n}} } - 1 \, \bigg| \, \F_{T_{n - 1/2}}} \leq \frac{\xa_{T_{n - 1/2}}}{ \xta^n_{T_{n - 1/2}}} - 1 = \frac{\xa_{T_{n - 1/2}}}{ \xha^n_{T_{n - 1/2}}} - 1.
\]
Upon taking expectation on both sides of the previous inequality, we obtain $\err_{T_n} (\xa | \xta^n) \leq \err_{T_{n-1/2}} (\xa | \xha^n) \leq 0$, the last inequality holding in view of statement (2) of Theorem \ref{thm: num}, given that $T_{n - 1/2}$ is a time of maximum of $\Xhat$. We have shown that $\xta^n$ indeed has the \num \ property in the class $\Xa$ for investment over the interval $[0, T_n]$.

Note that $\xta^n = \xha$ identically holds in the stochastic interval $\dbra{0, T_{n-1/2}}$ for each $\nin$; therefore, the conclusion of Theorem \ref{thm: turnpike} in this case is valid in a quite strong sense. However, the behaviour of $\xta^n$ and $\xha$ in the stochastic interval $\dbra{T_{n-1/2}, T_n}$ is different and results in quite diverse outcomes at time $T_n$, as we shall now show. At time $T_n$ one has
\[
\xha_{T_n} = \alpha (\Xhat^*_{T_n})^{1 - \alpha} + (1 - \alpha)(\Xhat^*_{T_n})^{- \alpha} \Xhat_{T_n} = \alpha (2 - \alpha) (\Xhat^*_{T_n})^{1 - \alpha},
\]
where the fact that $\xhat_{T_n} = \alpha \xhat^*_{T_n}$ was again used. Furthermore, $\xta^n_{T_n} =  (\xhat_{T_{n - 1/2}} )^{- \alpha} \alpha \xhat^*_{T_n}$. It then follows that
\[
\frac{\xta^n_{T_n}}{\xha_{T_n}} = \frac{(\xhat_{T_{n - 1/2}} )^{- \alpha} \alpha \xhat^*_{T_n}}{\alpha (2 - \alpha) (\Xhat^*_{T_n})^{1 - \alpha}} = \frac{1}{2 - \alpha} \pare{\frac{\Xhat^*_{T_n}}{\xhat_{T_{n - 1/2}}} }^{\alpha}=:\zeta_n.
\]
In view of Assumption \ref{ass: basic} and the result of Dambis, Dubins and Schwarz---see Theorem 3.4.6 in \cite{MR1121940}---the law of the random variable $\zeta_n$ is the same for all $\nin$. In fact, universal distributional properties of the maximum of a non-negative local martingale stopped at first hitting time---see Proposition 4.3 in \cite{BKO}---imply that $\zeta_n = (2-\alpha)^{-1} (\alpha+\pare{1-\alpha} \pare{1 / \eta_n})^\alpha$, where $\eta_n$ has the uniform law on $(0,1)$. In particular, $\prob [\zeta_n < (2 - \alpha)^{-1} + \epsilon] > 0$ and $\prob [\zeta_n > (2 - \alpha)^{-1} + \epsilon^{-1}] > 0$ holds for all $\epsilon \in (0,1)$. Furthermore, $\zeta_n$ is $\F_{T_n}$-measurable and independent of $\F_{T_{n - 1/2}} \supseteq \F_{T_{n-1}}$ for each $\nin$, which implies that $(\zeta_n)_{\nin}$ is a sequence of independent and identically distributed random variables. By an application of the second Borel-Cantelli lemma, it follows that
\[
\frac{1}{2 - \alpha} = \liminf_{n \to \infty} \pare{\frac{\xta^n_{T_n}}{\xha_{T_n}}} < \limsup_{n \to \infty} \pare{\frac{\xta^n_{T_n}}{\xha_{T_n}}} = \infty,
\]
demonstrating the claim made at Remark \ref{rem: careful_turnpike}.

\bibliographystyle{agsm}
\bibliography{num_dd5}
\end{document}